\declaretheorem[numberwithin=section]{theorem}
\declaretheorem[sibling=theorem]{lemma}
\declaretheorem[sibling=theorem]{definition}
\newcommand{\R}{\mathbb{R}}
\newcommand{\Z}{\mathbb{Z}}
\newcommand{\N}{\mathbb{N}}
\newcommand{\Vol}{\textsc{vol}}
\newcommand{\Ex}{\mathbb{E}}
\newcommand{\barw}{\ensuremath{\bar w}}
\newcommand{\barV}{\ensuremath{\bar V}}
\newcommand{\ceilpowtwo}[1]{\lceil #1 \rceil_{2^d}}
\newcommand{\parent}{\textup{par}}
\newcommand{\x}[1]{\ensuremath{\mathsf{x}_{#1}}}
\newcommand{\w}[1]{\ensuremath{\mathsf{w}_{#1}}}
\newcommand{\W}{\ensuremath{\mathsf{W}}}
\newcommand{\wmin}{\ensuremath{\mathsf{w}_{\min}}}
\newcommand{\wmax}{\ensuremath{\mathsf{w}_{\max}}}
\newcommand{\cc}{\textsc{cc}}
\newcommand{\Geo}{\textup{Geo}}
\newcommand{\Space}{\ensuremath{\mathds{T}^d}}
\newcommand{\Select}{\textsc{Select}}
\newcommand{\Rank}{\textsc{Rank}}
\renewcommand{\epsilon}{\ensuremath{\varepsilon}}
\newcommand{\eps}{\ensuremath{\varepsilon}}
\newcommand{\temporary}[1]{}
\newcommand{\pushright}[1]{\ifmeasuring@#1\else\omit\hfill$\displaystyle#1$\fi\ignorespaces}
\newcommand{\pushleft}[1]{\ifmeasuring@#1\else\omit$\displaystyle#1$\hfill\fi\ignorespaces}
\begin{document}

\title{Sampling Geometric Inhomogeneous Random Graphs in\\  Linear Time}
\author{Karl Bringmann\thanks{Max-Planck-Institute for Informatics, Saarbr\"ucken, Germany, \texttt{kbringma@mpi-inf.mpg.de}} \and Ralph Keusch\thanks{Institute of Theoretical Computer Science, ETH Zurich, Switzerland, \texttt{rkeusch@inf.ethz.ch}} \and Johannes Lengler\thanks{Institute of Theoretical Computer Science, ETH Zurich, Switzerland, \texttt{lenglerj@inf.ethz.ch}}}
\date{}
\maketitle


%
%
%
%

\medskip

\begin{abstract}
   Real-world networks, like social networks or the internet infrastructure, have structural properties such as large clustering coefficients that can best be described in terms of an underlying geometry. This is why the focus of the literature on theoretical models for real-world networks shifted from classic models without geometry, such as Chung-Lu random graphs, to modern geometry-based models, such as hyperbolic random graphs. 
  
  With this paper we contribute to the theoretical analysis of these modern, more realistic random graph models. Instead of studying directly hyperbolic random graphs, we use a generalization that we call \emph{geometric inhomogeneous random graphs} (GIRGs). 
  Since we ignore constant factors in the edge probabilities, GIRGs are technically simpler (specifically, we avoid hyperbolic cosines), while preserving the qualitative behaviour of hyperbolic random graphs, and we suggest to replace hyperbolic random graphs by this new model in future theoretical studies.
  
  We prove the following fundamental structural and algorithmic results on GIRGs. (1) As our main contribution we provide a sampling algorithm that generates a random graph from our model in expected \emph{linear} time, improving the best-known sampling algorithm for hyperbolic random graphs by a substantial factor $O(\sqrt{n})$. (2) We establish that GIRGs have clustering coefficients in $\Omega(1)$, (3) we prove that GIRGs have small separators, i.e., it suffices to delete a sublinear number of edges to break the giant component into two large pieces, and (4) we show how to compress GIRGs using an expected linear number of bits.
\end{abstract}

\thispagestyle{empty}
\newpage
\setcounter{page}{1}

\section{Introduction}
Real-world networks, like social networks or the internet infrastructure, have structural properties that can best be described using \emph{geometry}. 
For instance, in social networks two people are more likely to know each other if they live in the same region and share hobbies, both of which can be encoded as spatial information.  
This geometric structure may be responsible for some of the key properties of real-world networks, e.g., an underlying geometry naturally induces a large number of triangles, or large \emph{clustering coefficient}: Two of one's friends are likely to live in one's region and have similar hobbies, so they are themselves similar and thus likely to know each other. 

Almost always, large real-world networks are
scale-free, i.e., their degree distribution follows a power law. Such networks have been studied in detail since the 60s. One of the key findings is the small-world phenomenon, which is the observation that two nodes in a network typically have very small graph-theoretic distance. Therefore, classic mathematical models for real-world networks reproduce these two key findings. But since they have no underlying geometry their clustering coefficient is as small as~$n^{-\Omega(1)}$; this holds in particular for preferential attachment graphs~\cite{barabasi1999emergence} and Chung-Lu random graphs \cite{chung2002avg,Chung02,chung2004avg} (and their variants 
\cite{BollobasSR07,NorrosReittu06}). 
In order to close this gap between the empirically observed clustering coefficient and theoretical models, much of the recent work on models for real-world networks focussed on scale-free random graph models that are equipped with an underlying geometry, such as hyperbolic random graphs \cite{BogunaPK10,PapadopoulosKBV10}, spatial preferred attachment \cite{spatialpreferred}, and many others \cite{BollobasSR07,bonato2010geometric,bradonjic2008structure,
jacob2013spatial}. The basic properties -- scale-freeness, small-world, and large clustering coefficient -- have been rigorously established for most of these models.
Beyond the basics, experiments suggest that these models have some very desirable properties. 

In particular, hyperbolic random graphs are a promising model, as Bogu\~n\'a et al.~\cite{BogunaPK10} computed a (heuristic) maximum likelihood fit of the internet graph into the hyperbolic random graph model and demonstrated its quality by showing that greedy routing in the underlying geometry of the fit finds near-optimal shortest paths. 
Further properties that have been studied on hyperbolic random graphs, mostly agreeing with empirical findings on real-world networks, are scale-freeness and clustering coefficient~\cite{gugelmann2012random,DBLP:conf/waw/CandelleroF14}, existence of a giant component~\cite{bode2013giant}, diameter~\cite{kiwi2015bound,FriedrichKrohmer15}, average distance~\cite{abdullah2015typical}, separators and treewidth~\cite{blasius2016hyperbolic}, bootstrap percolation~\cite{DBLP:conf/waw/CandelleroF14}, and clique number~\cite{friedrich2015cliques}. Algorithmic aspects include sampling algorithms~\cite{LoozSMP15} and compression schemes~\cite{uelithesis}. 

Our goal is to improve algorithmic and structural results on the promising model of hyperbolic random graphs. However, it turns out to be beneficial to work with a more general model, which we introduce with this paper: 
In a \emph{geometric inhomogeneous random graph} (GIRG), every vertex $v$ comes with a weight~$\w v$ (which we assume to follow a power law in this paper) and picks a uniformly random position $\x v$ in the $d$-dimensional torus\footnote{We choose a toroidal ground space for the technical simplicity that comes with its symmetry and in order to obtain hyperbolic random graphs as a special case. The results of this paper stay true if $\Space$ is replaced, say, by the $d$-dimensional unitcube $[0,1]^d$.} $\Space$. Two vertices $u,v$ then form an edge independently with probability $p_{uv}$, which is proportional to $\w u \w v$ and inversely proportional to some power of their distance $\|\x u - \x v\|$, see Section~\ref{sec:modelresults} for details\footnote{A major difference between hyperbolic random graphs and our generalisation is that we ignore constant factors in the edge probabilities $p_{uv}$. This allows to greatly simplify the edge probability expressions, thus reducing the technical overhead.}. The model can be interpreted as a geometric variant of the classic Chung-Lu random graphs.  Recently, with scale-free percolation a related model has been introduced~\cite{deijfen2013scale} where the vertex set is given by the grid $\Z^d$. This model is similar with respect to component structure, clustering, and small-world properties~\cite{deprez2015percolation, heydenreich2016structures}, but none of the algorithmic aspects studied in the present paper (sampling, compression, also separators) has been regarded thereon.


The basic connectivity properties of GIRGs follow from more general considerations in~\cite{bringmann2015generalGIRG}, where a model of generic augmented Chung-Lu graphs is studied. In particular, with high probability\footnote{We say that an event holds \emph{with high probability} (whp) if it holds with probability $1 - n^{-\omega(1)}$.} GIRGs have a giant component and polylogarithmic diameter, and almost surely doubly-logarithmic average distance. However, general studies such as~\cite{bringmann2015generalGIRG} are limited to properties that do not depend on the specific underlying geometry. Very recently, GIRGs turned out to be accesible for studying processes such as bootstrap percolation \cite{koch2016bootstrap} and greedy routing \cite{bringmann2017routing}.

\paragraph*{Our contribution:} Models of complex networks play an important role in algorithm development and network analysis as typically real data is scarce. Regarding applications and simulations, it is crucial that a model can be generated fast in order to produce sufficiently large samples. As our main result, we present a sampling algorithm that generates a random graph from our model in expected linear time. This improves the trivial sampling algorithm by a factor $O(n)$ and the best-known algorithm for hyperbolic random graphs by a substantial factor $O(\sqrt{n})$~\cite{LoozSMP15}.

We also prove that the underlying geometry indeed causes GIRGs to have a clustering coefficient in $\Omega(1)$. Moreover, we show that GIRGs have small separators of expected size $n^{1-\Omega(1)}$; this is in agreement with empirical findings on real-world networks~\cite{blandford2003compact}. We then use the small separators to prove that GIRGs can be efficiently compressed (i.e., they have low entropy); specifically, we show how to store a GIRG using $O(n)$ bits in expectation. 
Finally, we show that hyperbolic random graphs are indeed a special case of GIRGs, so that all aforementioned results also hold for hyperbolic random graphs.

\paragraph*{Organization of the paper:} We present the details of the model and our results in Section~\ref{sec:modelresults}. In Section~\ref{sec:preliminaries} we introduce notation and a geometric ordering of the vertices, and we present some basic properties of the GIRG model. Afterwards, we prove our main result on sampling algorithms in Section~\ref{sec:sampling}. We analyze the clustering coefficient in Section~\ref{sec:clustering}, and determine the separator size and the entropy in Section~\ref{sec:entropy}. Finally in Section~\ref{sec:hyperbolic} we establish that hyperbolic random graphs are a special case of GIRGs, and in Section~\ref{sec:conclusion} we make some concluding remarks.

\section{Model and Results}
\label{sec:modelresults}

\subsection{Definition of the Model}\label{subsec:model}
We prove algorithmic and structural results in a new random graph model which we call \emph{geometric inhomogeneous random graphs}. In this model,  each vertex $v$ comes with a weight $\w v$ and with a random position $\x v$ in a geometric space, and the set of edges $E$ is also random.
We start by defining the by-now classical Chung-Lu model and then describe the
changes that yield our variant with underlying geometry.

\paragraph*{Chung-Lu random graph:}
For $n \in \N$ let $\w{} = (\w 1, \ldots, \w n)$ be a sequence of positive weights.
We call $\W := \sum_{v=1}^n \w v$ the \emph{total weight}. The Chung-Lu random graph $G(n,\w{})$ has
vertex set $V = [n] = \{1,\ldots,n\}$, and two vertices $u \ne v$ are connected by
an edge independently with probability $p_{uv} = \Theta\big(\min\big\{1,
\frac{\w u \w v}{\W}\big\}\big)$ \cite{chung2002avg,Chung02}. Note that the term
$\min\{1,.\}$ is necessary, as the product $\w u \w v$ may be larger than $\W$.
Classically, the $\Theta$ simply hides a factor 1, but by introducing the
$\Theta$ the model also captures similar random graphs, like the
Norros-Reittu model \cite{NorrosReittu06}, while important properties stay
asymptotically invariant.

\paragraph*{Geometric inhomogeneous random graph (GIRG):}

Note that we obtain a circle by identifying the endpoints of the interval $[0,1]$. Then the distance of $x,y \in [0,1]$ along the circle is $|x-y|_C := \min\{|x-y|,1-|x-y|\}$. 
We fix a dimension $d \ge 1$ and use as our \emph{ground space} the $d$-dimensional torus $\Space=\R^d / \Z^d$, which can be described as the $d$-dimensional cube $[0,1]^d$ where opposite boundaries are identified. As distance function we use the $\infty$-norm on $\Space$, i.e., for $x,y \in \Space$ we define $\|x-y\| := \max_{1 \le i \le d} | x_i - y_i|_C$. 


As for Chung-Lu graphs, we consider the vertex set $V = [n]$ and a weight sequence $\w{}$ (in this paper we require the weights to follow a power law with exponent $\beta > 2$, see next paragraph). Additionally, for any vertex $v$ we draw a point $\x v \in \Space$ uniformly and independently at random.
Again we connect vertices $u \ne v$ independently with probability $p_{uv} = p_{uv}(r)$, which now depends not only on the weights $\w u,\w v$ but also on the positions~$\x u,\x v$, more precisely, on the distance $r=\|\x u - \x v\|$. We require for some constant $\alpha > 1$ the following edge probability condition:
\begin{equation}\label{eq:puv} p_{uv} = \Theta\bigg( \min\bigg\{ \frac{1}{\|\x u - \x v\|^{\alpha d}} \cdot \Big( \frac{\w u \w v}{\W}\Big)^{\alpha} ,1 \bigg\} \bigg). \tag{EP1}
\end{equation}
We also allow $\alpha = \infty$ and in this case require that 
\begin{equation}\label{eq:puv2}
 p_{uv} = \begin{cases} \Theta(1), & \text{if } \|\x u - \x v\| \le O\big(\big(\tfrac{\w u \w v}\W\big)^{1/d}\big), \\ 0, & \text{if } \|\x u - \x v\| \ge \Omega\big(\big(\tfrac{\w u \w v}\W\big)^{1/d}\big), \end{cases}  \tag{EP2}
 \end{equation}
where the constants hidden by $O$ and $\Omega$ do not have to match, i.e., there can be an interval $[c_1 (\tfrac{\w u \w v}\W)^{1/d}, c_2 (\tfrac{\w u \w v}\W)^{1/d}]$ for $\|\x u - \x v\|$ where the behaviour of $p_{uv}$ is arbitrary.
This finishes the definition of GIRGs. The free parameters of the model are $\alpha \in (1,\infty]$, $d \in \N$, the concrete weights~$\w{}$ with power-law exponent $\beta > 2$ and average weight $\W/n$, the concrete function $f_{uv}(\x u, \x v)$ replacing the $\Theta$ in $p_{uv}$, and for $\alpha = \infty$ the constants hidden by $O,\Omega$ in the requirement for $p_{uv}$. We will typically hide the constants $\alpha,d,\beta,\W/n$ by $O$-notation.

\paragraph*{Power-law weights:}

As is often done for Chung-Lu graphs, we assume throughout this paper that the weights follow a \emph{power law}: the fraction of vertices with weight at least $w$ is proportional to $w^{1-\beta}$ for some $\beta>2$ (the \emph{power-law exponent} of $\w{}$). More precisely, we assume that for some $\barw = \barw(n)$ with $n^{\omega(1/\log\log n)}\leq\barw \leq n^{(1-\Omega(1))/(\beta-1)}$, the sequence $\w{}$ satisfies the following conditions:
\begin{enumerate}[(PL1)]
\item the minimum weight is constant, i.e., $\wmin := \min\{\w{v} \mid 1 \le v \le n\} = \Omega(1)$, 
\item for all $\eta >0$ there exist constants $c_1,c_2>0$ such that
\[
c_1\frac{n}{w^{\beta-1+\eta}} \leq \#\{1 \le v \le n \mid \w{v} \geq w\} \leq c_2\frac{n}{w^{\beta-1-\eta}},
\]
where the first inequality holds for all $\wmin \leq w \leq \barw$ and the second for all $w \geq \wmin$. 
\end{enumerate}

We remark that these are standard assumptions for power-law graphs with average degree
$\Theta(1)$.
In particular, (PL2) implies that the average weight $\W / n$ is $\Theta(1)$.
An example is the widely used weight function $\w v := \delta\cdot
(n/v)^{1/(\beta-1)}$ with parameter $\delta= \Theta(1)$. 

\paragraph*{Discussion of the model:}
The choice of the ground space $\Space$ is in the spirit of the classic random geometric graphs~\cite{penrose2003}. We prefer the torus to the hyper-cube for technical simplicity, as it yields symmetry. However, one could replace $\Space$ by $[0,1]^d$ or other manifolds like the $d$-dimensional sphere; our results will still hold verbatim.
Moreover, since in fixed dimension all $L_p$-norms on $\Space$ are equivalent and since the edge probabilities $p_{uv}$ have a constant factor slack, our choice of the $L_\infty$-norm is without loss of generality (among all norms).

The model is already motivated since it generalizes the celebrated hyperbolic random graphs (see Section~\ref{sec:hyperbolic}).
Let us nevertheless discuss why our choice of edge probabilities is \emph{natural}: The term $\min\{.,1\}$ is necessary, as in the Chung-Lu model, because $p_{uv}$ is a probability. To obtain a geometric model, where adjacent vertices are likely to have small distance, $p_{uv}$ should decrease with increasing distance $\|\x u - \x v\|$, and an inverse polynomial relation seems reasonable. The constraint $\alpha > 1$ is necessary to cancel the growth of the volume of the ball of radius $r$ proportional to $r^d$, so that we expect most neighbors of a vertex to lie close to it. Finally, the factor $\big( \frac{\w u \w v}{\W}\big)^{\alpha}$ ensures that the marginal probability of vertices $u,v$ with weights $\w u,\w v$ forming an edge is 
$\Pr[u \sim v] = \Theta\left(\min\left\{\frac{\w u \w v}{\W},1 \right\} \right)$,
as in the Chung-Lu model, and this probability does not change by more than a constant factor if we fix either $\x{u}$ or $\x{v}$. This is why we see our model as a geometric variant of Chung-Lu random graphs.
For a fixed vertex $u \in V$ we can sum up $\Pr[u \sim v \mid \x u]$ over all vertices $v \in V \setminus \{u\}$, and it follows 
$\Ex[\deg(u)]=\Theta(\w{u})$.
The main reason why GIRGs are also \emph{technically easy} is that for any vertex $u$ with fixed position $\x u$ the incident edges $\{u,v\}$ are independent. Details of these basic properties can be found in Section~\ref{subsec:basicproperties}.

\paragraph*{Sampling the weights:} In the definition we assume that the weight sequence $\w{}$ is fixed. However, if we sample the weights independently according to an appropriate power-law distribution with minimum weight $\wmin$ and density
$f(w) \sim w^{-\beta}$,
then for a given $\eta>0$ the sampled weight sequence will follow a power law and fulfils (PL1) and (PL2) with probability $1-n^{-\Omega(1)}$. Hence, a model with sampled weights is almost surely included in our model. For the precise statement, see Lemma~\ref{lem:sampleweights}.

\subsection{Structural Properties of GIRGs}\label{subsec:strucprop}

As discussed in the introduction, reasonable random graph models for real-world networks should reproduce a power-law degree distribution and small graph-theoretical distances between nodes. Before giving a detailed list of our results in Section~\ref{subsec:results}, we first want to ensure that GIRGs have these desired structural properties. Indeed, they follow from a more general class of generic augmented Chung-Lu random graphs that have been studied in \cite{bringmann2015generalGIRG}. This framework has weaker assumptions on the underlying geometry than GIRGs. A short comparison reveals that GIRGs are a special case of this general class of random graph models. In the following we list the results of \cite{bringmann2015generalGIRG} transferred to GIRGs. As we are using power-law weights and $\Ex[\deg(v)]=\Theta(\w{v})$ holds for all $v \in V$, it is not surprising that the degree sequence follows a power-law.

\begin{theorem}[Theorem~2.1 in \cite{bringmann2015generalGIRG}] \label{thm:degseq1}
  Whp the degree sequence of a GIRG follows a power law with exponent~$\beta$ and average degree $\Theta(1)$. 
\end{theorem}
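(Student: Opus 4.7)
The plan is to derive both parts of the theorem from two facts already noted in the model discussion: (i) $\Ex[\deg(v)] = \Theta(\w v)$ for every vertex $v$, and (ii) conditional on the position $\x v$, the edges incident to $v$ are mutually independent Bernoullis. Combined with the power-law assumptions (PL1)--(PL2) on the weights, these reduce the theorem to standard Chernoff-plus-dyadic-band arguments that do not really exploit the geometry.

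For the \emph{average degree} $\Theta(1)$ claim, summing (i) over $v$ and using that $\W=\Theta(n)$ (a consequence of (PL2) for $\beta>2$) yields $\Ex[\sum_v \deg(v)] = \Theta(n)$. Concentration of $\sum_v \deg(v)=2|E|$ whp follows from a second-moment / Chebyshev argument, since $|E|$ is a sum of Bernoullis with bounded pairwise covariance.

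For the \emph{power-law tail} I would show that whp, for every $k \in [\wmin,\barw]$,
\[
\#\{v : \deg(v)\geq k\} \;=\; n\cdot k^{-(\beta-1)\pm o(1)}.
\]
\emph{Upper bound.} Split vertices with $\deg(v)\geq k$ into those with $\w v\geq k/C$ for a large constant $C$ (a "typical" set of size $O(n/k^{\beta-1-\eta})$ by (PL2)) and those with $\w v<k/C$. For the latter, combining (i) and (ii) with a Chernoff bound gives $\Pr[\deg(v)\geq k]\leq e^{-\Omega(k)}$, so the expected count is negligible and Chebyshev controls fluctuations. \emph{Lower bound.} By (PL2) there are $\Theta(n/k^{\beta-1+\eta})$ vertices whose weight lies in a dyadic band $[Ck,2Ck]$, and each such $v$ satisfies $\Ex[\deg(v)]\geq 2k$, so Chernoff gives $\Pr[\deg(v)\geq k]=1-o(1)$. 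A second-moment inequality, exploiting that any two distinct vertices share essentially one common edge (so the indicator events are nearly independent), yields the claimed lower bound whp.

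The main obstacle lies at the two extremes $k\approx\wmin$ and $k\approx\barw$. For $k$ of constant order, Chernoff fails to give an exponentially small failure probability per vertex, and one has to rely on a careful second-moment calculation, using that $\mathrm{Cov}(\mathbf{1}\{\deg(u)\geq k\},\mathbf{1}\{\deg(v)\geq k\})$ is small because the defining events depend on almost disjoint sets of edges. Near $\barw$ the issue is that (PL2) only guarantees power-law behaviour up to $\barw$, so one must absorb the subpolynomial Chernoff error terms into the slack provided by the arbitrarily small $\eta>0$, and the hypothesis $\barw\geq n^{\omega(1/\log\log n)}$ is exactly what is needed to keep the tail exponent at $\beta-1\pm o(1)$ rather than incurring a polynomial loss.
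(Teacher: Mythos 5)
The paper itself offers no proof of this statement: it is imported verbatim from the companion paper \cite{bringmann2015generalGIRG}, where it is proved for a general class of augmented Chung--Lu graphs whose only relevant features are precisely the two facts you take as your starting point, namely $\Ex[\deg(v)]=\Theta(\w{v})$ and the conditional independence of the edges at $v$ given $\x{v}$ (restated here as Lemmas~\ref{lem:marginal} and~\ref{lem:expecteddegree}). Your reduction to these two facts plus (PL1)--(PL2) is therefore exactly the right abstraction, and the dyadic-band Chernoff argument for the tail --- splitting the upper bound at weight $k/C$, taking a band $[Ck,2Ck]$ for the lower bound, and handling the boundary at $\barw$ via the $\eta$-slack --- is the standard route and matches the spirit of the cited proof.

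There is, however, one genuine gap: the theorem asserts its conclusion \emph{whp}, which this paper defines as probability $1-n^{-\omega(1)}$, whereas the concentration steps you invoke (Chebyshev for $|E|$, second-moment bounds for $\#\{v:\deg(v)\ge k\}$ when $k=O(\log n)$, and in particular for constant $k$) deliver at best failure probability $n^{-\Theta(1)}$, and in places only $o(1)$. In exactly these regimes one needs a bounded-differences inequality tolerating a bad event, of the type quoted as Theorem~\ref{thm:concentration} and deployed in the clustering proof: exclude the event that some vertex has degree far exceeding $\max\{\w{v},\log^2 n\}$, observe that resampling one vertex's position or edge coins changes $\#\{v:\deg(v)\ge k\}$ by at most one plus that vertex's degree, and conclude superpolynomially small deviations from the means you computed; the union-bound-plus-Chernoff route you describe only takes over once $k=\omega(\log n)$. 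A secondary soft spot is the claim that $\mathbf{1}\{\deg(u)\ge k\}$ and $\mathbf{1}\{\deg(v)\ge k\}$ are nearly independent ``because they share essentially one common edge'': conditioned on all positions this is true (disjoint coins apart from the $uv$ coin), but unconditionally the two events are coupled through the positions of every third vertex $w$, which can move close to both $u$ and $v$ and raise both degrees simultaneously. The covariance is indeed small, but establishing that requires decomposing it via conditioning on the position vector; it is not the one-line observation your sketch suggests.
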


The next result determines basic connectivity properties. Note that for $\beta > 3$, they are not well-behaved, in particular since in this case even threshold hyperbolic random graphs do not possess a giant component of linear size~\cite{bode2014geometrisation}. Therefore, $2<\beta<3$ is assumed, which is the typical regime of empirical data. For the following theorem, we require the additional assumption $\barw = \omega(n^{1/2})$ in the limit case $\alpha=\infty$.

\begin{theorem}[Theorems~2.2 and 2.3 in \cite{bringmann2015generalGIRG}]\label{thm:diameter}
 Let $2<\beta<3$. Then whp the largest component of a GIRG has linear size and diameter $\log^{O(1)} n$, while all other components have size $\log^{O(1)} n$. Moreover, the average distance of vertices in the largest component is $(2 \pm o(1))\frac{\log \log n}{|\log(\beta-2)|}$ in expectation and with probability $1-o(1)$. 
\end{theorem}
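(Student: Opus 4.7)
My plan is to derive both statements from the corresponding results for the generic augmented Chung--Lu framework of \cite{bringmann2015generalGIRG}, which means I need to verify two structural properties of GIRGs. First, the edges incident to any fixed $u$ are independent given $\x{u}$ and the weight sequence; this is built into the definition. Second, the marginal edge probability must satisfy $\Pr[u\sim v]=\Theta(\min\{1,\w u\w v/\W\})$. I would obtain this by integrating (EP1) or (EP2) over the uniform distribution of $\x v$ on $\Space$: using the shell decomposition $\int_0^{1/2} r^{d-1}\min\{1,(\w u\w v/(\W r^d))^\alpha\}\,dr$ and splitting at $r^\star=(\w u\w v/\W)^{1/d}$, the ball of radius $r^\star$ contributes $\Theta(\w u\w v/\W)$ while the outer integral is a convergent geometric tail with exponent $\alpha d-d>0$. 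With these two axioms the GIRG is an instance of the cited general framework.

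\textbf{Giant, small components, polylog diameter.} For the underlying general proof I would couple the BFS from a typical vertex with a two-type branching process in which each child's weight follows the size-biased law $\tildew\sim w^{2-\beta}$. For $2<\beta<3$ the expected number of grandchildren is $\sum_v \w v^2/\W=\Theta(\barw^{3-\beta})$, which diverges in $n$, so the process is supercritical with survival probability $\rho$ bounded away from zero. A second-moment argument concentrates the number of surviving vertices at $\rho n(1+o(1))$, and a sprinkling step merges them into one giant. Branching-process tail bounds show that non-giant explorations die within $\log^{O(1)}n$ rounds, capping the small components. For the diameter of the giant I would fix a heavy core $\cF:=\{v:\w v\geq\log^c n\}$ and show that (i) every giant vertex reaches $\cF$ within $\log^{O(1)}n$ BFS steps, and (ii) $\cF$ itself has polylog diameter because any two of its vertices are joined by an edge with probability $1-n^{-\omega(1)}$ for $c$ large enough.

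\textbf{Ultra-small world with constant $2/|\log(\beta-2)|$.} For the sharp average-distance bound I would apply the classical doubling argument. The key lemma is that a vertex of weight $w\geq w_\ast$ has $\Theta(w)$ neighbors whose weights have the size-biased tail $\Pr[\tildew\geq t]\asymp t^{2-\beta}$, so the maximum weight in its neighborhood whp exceeds $w^{(1-o(1))/(\beta-2)}$. Starting from a typical vertex and first using $O(1)$ steps to exceed weight $w_\ast$, iterating the doubling yields after $k$ further steps a vertex of weight at least $w_\ast^{(\beta-2)^{-k}}$; setting this equal to $\sqrt n$ gives $k=(1+o(1))\log\log n/|\log(\beta-2)|$. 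The two BFS trees from a typical pair $u,v$ thus both reach the inner core $\{v:\w v\geq\sqrt n\}$ within that many steps, and the inner core has constant diameter because any two of its vertices share an edge with probability $\Omega(1)$ and a short connecting path is produced by a second-moment argument. This gives the upper bound $(2+o(1))\log\log n/|\log(\beta-2)|$, and the matching lower bound follows by showing that the maximum reachable weight cannot grow faster than the doubling rate, a direct consequence of the power-law tail of the size-biased distribution.

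\textbf{Main obstacle.} The hardest step is the sharp $(2-o(1))$ lower bound on the average distance. One must rule out that a positive fraction of pairs $u,v$ benefit from an unusually heavy intermediate vertex arriving earlier than typical, and in GIRGs there is the additional complication that geometric proximity inside a given weight class could in principle create extra shortcuts not present in plain Chung--Lu. Handling these two effects simultaneously, rather than only the power-law part, is precisely what makes it cleaner to invoke the unified augmented Chung--Lu framework of \cite{bringmann2015generalGIRG}, where the doubling argument with matching lower bound is carried out once and for all.
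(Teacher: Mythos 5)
Your proposal takes essentially the same route as the paper: the paper does not prove this theorem itself but imports it wholesale from the general augmented Chung--Lu framework of \cite{bringmann2015generalGIRG}, and the only GIRG-specific work required is exactly what your first paragraph supplies --- conditional independence of the edges at a vertex with fixed position and the marginal edge probability $\Theta(\min\{1,\w u\w v/\W\})$ obtained by the shell integral (this is the paper's Lemma~\ref{lem:marginal}). Your sketch of the framework's internal proofs is standard and goes beyond what the paper does; the one caveat there is that vertices of weight $\sqrt n$ need not exist (the model only guarantees $\barw\le n^{(1-\Omega(1))/(\beta-1)}$, and the hypothesis $\barw=\omega(n^{1/2})$ is imposed only in the case $\alpha=\infty$), so your ``constant-diameter inner core'' step should be replaced by the usual argument that the core $\{v:\w v\ge \barw^{1-o(1)}\}$ has diameter $o(\log\log n)$.
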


We remark that most results of this paper crucially depend on an underlying geometry, and thus do \emph{not} hold in the general model from~\cite{bringmann2015generalGIRG}.

\subsection{Results}
\label{subsec:results}

\paragraph*{Sampling:}
Sampling algorithms that generate a random graph from a fixed distribution are known for Chung-Lu random graphs and others, running in expected linear time~\cite{batagelj_efficient_2005,miller_efficient_2011}.
As our main result, we present such an algorithm for GIRGs. This greatly improves the trivial $O(n^2)$ sampling algorithm (throwing a biased coin for each possible edge), as well as the best previous algorithm for threshold hyperbolic random graphs with expected time $O(n^{3/2})$~\cite{LoozSMP15}. It allows to run experiments on much larger graphs than the ones with $\approx 10^4$ vertices in~\cite{BogunaPK10}. 
In addition to our model assumptions, here we assume that the $\Theta$ in our requirement on $p_{uv}$ is sufficiently explicit, i.e., we can compute $p_{uv}$ exactly and we know a constant $c>0$ such that replacing $\Theta$ by $c$ yields an upper bound on $p_{uv}$, see Section~\ref{sec:sampling} for details.

\begin{theorem}[Section~\ref{sec:sampling}]\label{thm:sampling}
  Sampling a GIRG can be done in expected time $O(n)$. 
\end{theorem}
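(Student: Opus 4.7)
My plan is to combine spatial bucketing with the Batagelj--Brandes style skip-sampling trick for Bernoulli trials of varying probability. As preprocessing, I split the vertex set into weight layers $V_i := \{v : \w{v} \in [2^i, 2^{i+1})\}$ for $i = 0, \dots, O(\log n)$; by (PL2), $|V_i|$ is roughly $n \cdot 2^{-i(\beta-1)}$. For each pair $(i,j)$ of layers I impose a grid on the torus $\Space$ whose cells have side length $r_{ij} := \Theta((2^{i+j}/\W)^{1/d})$ -- precisely the scale at which the bound in \eqref{eq:puv} transitions from the saturated regime $\Theta(1)$ to polynomial decay in $\|\x{u}-\x{v}\|$. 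Using a Morton/Z-order encoding of positions allows each vertex to be placed into its cell at every relevant level in $O(1)$ time and lets us enumerate non-empty cells in linear total time.

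The main loop processes each layer pair $(i,j)$ independently. For two occupied cells $A, B$ at grid distance at most a constant $s_0$, every $u \in A \cap V_i$, $v \in B \cap V_j$ satisfies $\|\x{u}-\x{v}\| = O(r_{ij})$, so $p_{uv} = \Theta(1)$; here I enumerate all such pairs and toss a biased coin with the exact probability $p_{uv}$. For cells $A, B$ at grid distance $s > s_0$, the explicit upper constant on $p_{uv}$ assumed in the statement yields a uniform bound $p^*(s) := C s^{-\alpha d} \ge p_{uv}$; I draw the number of \emph{candidate} edges from $\Bin(|A\cap V_i| \cdot |B\cap V_j|, p^*(s))$ by iterated geometric jumps, pick each candidate pair uniformly at random, and accept it with probability $p_{uv}/p^*(s)$. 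The limit case $\alpha=\infty$ reduces to the close-cell phase, since by \eqref{eq:puv2} a sufficiently large $s_0$ makes $p_{uv} \equiv 0$ beyond grid distance $s_0$.

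The core of the proof is a counting argument showing that the total expected work is $O(n)$. For the close-cell phase, for fixed $(i,j)$ one has
\[
    \Ex\Big[\sum_{A,B \text{ close}} |A\cap V_i|\,|B\cap V_j|\Big] = \Theta(n_i n_j w_i w_j/\W),
\]
which equals the expected number of $V_i$-$V_j$ edges up to constants; summing over $(i,j)$ telescopes to $\Theta(\W^2/\W) = \Theta(n)$. For the far-cell phase, one cell pair at grid distance $s$ contributes $O(n_i n_j r_{ij}^{2d} s^{-\alpha d})$ expected candidates; there are $O(s^{d-1}/r_{ij}^d)$ such cell pairs, and $\sum_{s > s_0} s^{d-1-\alpha d}$ converges because $\alpha > 1$, so the layer-pair sum again telescopes to $\Theta(n)$. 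The main obstacle I expect is the bookkeeping overhead: naively one pays $\Omega(|V_i| + |V_j|)$ per layer pair just to set up and iterate over its grid, which yields a $\log$-factor loss. Eliminating it requires either a shared hierarchical data structure that amortizes cell lookups across all $O(\log^2 n)$ layer pairs, or a careful charging argument exploiting that high-weight layers are geometrically smaller so that $\sum_i i \cdot |V_i|$ still telescopes to $O(n)$ thanks to $\beta > 2$.
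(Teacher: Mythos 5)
Your overall architecture (weight layers, a spatial scale $r_{ij}$ with $r_{ij}^d = \Theta(w_iw_j/\W)$ per layer pair, exact coin flips for close pairs, geometric skip-sampling against an explicit upper bound plus rejection for far pairs, and the reduction of $\alpha=\infty$ to the close-cell phase) matches the paper's algorithm, and your telescoping bound on the expected number of \emph{candidates} is correct. But there is a genuine gap in the far-cell phase: you pay at least $\Omega(1)$ per pair of cells $(A,B)$ just to initialize the geometric jump process (even when it produces zero candidates), and with a single flat grid of side length $r_{ij}$ the number of cell pairs at distance $s$ is $\Theta(s^{d-1}/r_{ij}^d)$, which summed over $s$ up to the diameter $\Theta(1/r_{ij})$ gives $\Theta(1/r_{ij}^{2d})$ pairs. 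For the lowest layer pair this is $\Theta\bigl((\W/w_0^2)^2\bigr)=\Theta(n^2)$, and restricting to occupied cells does not help since $\Theta(n)$ cells of volume $\Theta(1/n)$ are occupied. So the enumeration overhead alone is quadratic, no better than the trivial algorithm; your closing worry about a $\log$-factor from per-layer-pair setup (which your charging argument via $\sum_i i\,|V_i|=O(n)$ does handle) is not the real obstacle.

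The missing idea is to \emph{coarsen with distance}: the paper partitions $\Space\times\Space$ into products of cells $A\times B$ of \emph{varying} volume, taking the finest cells only for touching pairs and, for non-touching pairs, the unique level at which $A,B$ do not touch but their parents do. This hierarchical partition has only $O(\W/(w_iw_j))$ parts (each cell at each level has $O(3^d\cdot 4^d)$ partners), so the $O(1)$ per-part overhead sums to $O(\W/(w_iw_j))$ and then to $O(n)$ over layer pairs via $\sum_i 1/w_i=O(1)$. Within each coarse part, $d(A,B)\ge \Vol(A)^{1/d}$ still pins $\|\x u-\x v\|$ to within a constant factor, so a single upper bound $\bar p$ suffices and the rejection step accepts with probability $\Theta(1)$, preserving your candidate count. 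Without this (or an equivalent device that groups all far cell pairs at a given scale into $O(1)$ skip-sampling streams per cell), the running time claim fails.
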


%
%
%

\paragraph*{Clustering:}
In social networks, two friends of the same person are likely to also be friends with each other. This property of having many triangles is captured by the clustering coefficient, defined as the probability when choosing a random vertex $v$ and two random neighbors $v_1 \ne v_2$ of $v$ that $v_1$ and $v_2$ are adjacent (if $v$ does not have two neighbors then its contribution to the clustering coefficient is 0). While Chung-Lu random graphs have a very small clustering coefficient of $n^{-\Omega(1)}$, it is easy to show that the clustering coefficient of GIRGs is $\Theta(1)$. This is consistent with empirical data of real-world networks~\cite{dorogovtsev2002evolution} and the constant clustering coefficient of hyperbolic random graphs determined in~\cite{DBLP:conf/waw/CandelleroF14,gugelmann2012random,uelithesis}.

\begin{theorem}[Section~\ref{sec:clustering}] \label{thm:clustering}
  Whp the clustering coefficient of a GIRG is $\Theta(1)$. 
\end{theorem}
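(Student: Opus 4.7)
The upper bound $\cc \le 1$ is immediate, so the task reduces to proving $\cc = \Omega(1)$ whp. My plan is to exhibit, at a linear fraction of vertices, a local triangle that arises with constant probability, exploiting the fact that two geometrically close low-weight vertices are automatically adjacent.

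I would first isolate the set $V_0 := \{v \in V : \w{v} \le C_0\}$ for a suitable constant $C_0 \ge \wmin$; assumption (PL2) gives $|V_0| = \Omega(n)$. Inside $V_0$ all weights are $\Theta(1)$, the regime where (EP1) and (EP2) admit a uniform geometric description. Specifically, for any $u_1, u_2 \in V_0$ with $\|\x{u_1}-\x{u_2}\| \le r_n := c \, n^{-1/d}$ (where $c > 0$ is a small absolute constant), one has $p_{u_1 u_2} = \Theta(1)$: in (EP1) the factor $\|\x{u_1}-\x{u_2}\|^{-\alpha d}(\w{u_1}\w{u_2}/\W)^{\alpha}$ is $\Theta(1)$ because $\W = \Theta(n)$ by (PL2), and in (EP2) the threshold is of the same order. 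Consequently, three light vertices within pairwise $L_\infty$-distance $r_n$ span a triangle with constant probability.

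Next I would show $\Ex[c_v] = \Omega(1)$ for every $v \in V_0$. Conditioning on $\x{v}$, the number $N_v$ of other light vertices in the $L_\infty$-ball of radius $r_n/2$ around $\x{v}$ is a sum of $|V_0|-1 = \Omega(n)$ independent Bernoullis of parameter $\Theta(1/n)$, so $\Ex[N_v] = \Theta(1)$, and Paley-Zygmund (after enlarging $c$ so that $\Ex[N_v] \ge 3$) gives $\Pr[N_v \ge 2] = \Omega(1)$. On that event the edge random choices produce a triangle through $v$ with further probability $\Omega(1)$. To pass from a triangle to a bound on $c_v$, I control the denominator $\binom{\deg(v)}{2}$ via $\Ex[\deg(v)] = \Theta(\w{v}) = O(1)$: Markov fixes a constant $D$ with $\Pr[\deg(v) > D] \le \eps_0$ for any prescribed $\eps_0 > 0$, and choosing $\eps_0$ strictly below the triangle probability yields $c_v \ge 1/\binom{D}{2} = \Omega(1)$ with $\Omega(1)$ probability. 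Summing over $v \in V_0$ gives $\Ex[\cc] = \Omega(1)$.

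To upgrade from expectation to whp, I would partition the torus into $\Theta(n)$ cells of side length $\Theta(n^{-1/d})$ and call a cell \emph{good} if it contains at least three light vertices whose three pairwise edges are all realised. The above arguments show each cell is good with probability $\Omega(1)$, and restricting to a sparsified subfamily of pairwise non-adjacent cells makes the corresponding indicators mutually independent. A Chernoff bound then gives $\Omega(n)$ good cells with probability $1 - e^{-\Omega(n)} = 1 - n^{-\omega(1)}$, each contributing $c_v = \Omega(1)$ at its constant-degree members; a global Markov-type bound controls how many vertices have degree exceeding $D$. The main obstacle is exactly this concentration step, since the raw $c_v$'s are genuinely correlated through rare heavy neighbors; the cell-sparsification trick sidesteps this by replacing a sum of dependent $[0,1]$-valued variables by a sum of honestly independent indicators.
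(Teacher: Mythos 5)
Your lower bound on $\Ex[\cc]$ is essentially sound and runs parallel to the paper's argument: the paper also restricts attention to constant-weight vertices, places $\Theta(1)$ of them in a ball of radius $c n^{-1/d}$ around $\x v$, and uses that all pairwise edge probabilities there are $\Theta(1)$. The one real difference is how the denominator $\binom{\deg v}{2}$ is tamed: you cap the degree by a constant $D$ via Markov and use $\cc(v)\ge 1/\binom{D}{2}$ on the intersection of the two events, whereas the paper conditions (via Le Cam) on the event that $v$ has \emph{no} neighbours outside the ball and no heavy neighbours, so that two uniformly random neighbours of $v$ are adjacent with probability $\Theta(1)$. Both devices work for the expectation.

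The gap is in the whp upgrade. Two problems. First, the indicators ``cell $C_k$ is good'' are \emph{not} mutually independent even after sparsification: the cell occupancies are a multinomial allocation of the $n$ positions, hence only negatively associated, so a plain Chernoff bound does not apply as stated. (This part is repairable: the good-cell count has bounded differences of order $O(1)$ under changing one position or one vertex's edge coins, so McDiarmid concentrates it without any sparsification.) Second, and more seriously, your reduction from ``$\Omega(n)$ good cells'' to ``$\cc(G)=\Omega(1)$'' needs that whp at most $\eps n$ vertices have degree exceeding $D$, and ``a global Markov-type bound'' does not deliver this: Markov applied to $\Ex[\#\{v:\deg(v)>D\}]=O(n/D)$ gives only a constant-probability statement, and the deterministic bound $\#\{v:\deg(v)>D\}\le 2|E|/D$ does not help because $|E|=O(n)$ whp is itself not available (Lemma~\ref{lem:largevertices} only yields $|E|=O(n\log^2 n)$ whp). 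The count $\#\{v:\deg(v)>D\}$ is a sum of dependent indicators whose individual influences are unbounded --- relocating one heavy vertex can push many vertices past the threshold $D$ --- so concentrating it whp requires exactly the Azuma-type inequality with a low-probability bad event (Theorem~\ref{thm:concentration}, with the bad event ``some vertex of $G'$ has degree $\ge n^{1/4}$'') that the paper deploys to concentrate $\cc(G')$ directly, together with the paper's separate $o(1)$ bound on the damage caused by vertices of weight above $n^{1/8}$. Your cell construction sidesteps the dependence in the triangle counts but not the dependence in the degree tails, so this step needs to be filled in with that machinery (or an equivalent substitute) before the proof is complete.
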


\paragraph*{Stability:}
For real-world networks, a key property to analyze is their stability under attacks. It has been empirically observed that many real-world networks have small separators of size $n^{c}$, $c<1$~\cite{blandford2003compact}. In contrast, Chung-Lu random graphs are unrealistically stable, since any deletion of $o(n)$ nodes or edges reduces the size of the giant component by at most $o(n)$~\cite{BollobasSR07}.
We show that GIRGs agree with the empirical results much better. Specifically, if we cut the ground space $\Space$ into two halves along one of the axes then we roughly split the giant component into two halves, but the number of edges passing this cut is quite small, namely $n^{1-\Omega(1)}$.
Thus, GIRGs are prone to (quite strong) adversarial attacks, just as many real-world networks. Furthermore, their small separators are useful for many algorithms, e.g., the compression scheme of the next paragraph. 


\begin{theorem}[Section~\ref{sec:entropy}]\label{thm:stability}
Let $2<\beta<3$. Then almost surely it suffices to delete $$O\left(n^{\max\{2-\alpha, 3-\beta, 1-1/d \}+o(1)}\right)$$ edges of a GIRG to split its giant component into two parts of linear size each.
\end{theorem}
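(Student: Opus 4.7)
The plan is to delete the edges crossing a single axis-aligned hyperplane in $\Space$ and to show that this cut has the claimed size while splitting the giant component into two linear-size pieces. I fix the hyperplane $H := \{x \in \Space : x_1 = 1/2\}$, which divides $\Space$ into two half-tori of equal volume; since the positions $\x{v}$ are independent and uniform, a Chernoff bound shows that each side contains $(1\pm o(1))n/2$ vertices whp. After deleting the cut edges, the remaining graph is the disjoint union of the two induced subgraphs on the halves. Each such induced subgraph is itself a GIRG on $\approx n/2$ vertices whose ground space is a half-torus (i.e.\ a product of an interval with a lower-dimensional torus). By the remark following the definition of the model, the theory is insensitive to replacing $\Space$ by such a ground space, so \thmref{diameter} applied to each side separately produces a connected subgraph of linear size in each half. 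These two connected subgraphs are disjoint and each of linear size, and since the giant component of the original graph is the unique linear-size component by \thmref{diameter}, each must lie inside it. Hence deleting the cut edges breaks the giant component into at least two linear-size pieces.

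To bound the size of the cut I use a random-shift averaging. By the translation symmetry of $\Space$ in the first coordinate, the expected size of the cut by $H$ equals the expectation taken over a uniformly random shift $t\in[0,1]$ of the size of the cut by $H_t:=\{x_1=t\}$. Given positions $\x u,\x v$, the probability (over $t$) that $H_t$ separates them is exactly $|\x u^{(1)}-\x v^{(1)}|_C\le \|\x u-\x v\|$, so
\begin{equation*}
 \Ex\big[|\mathrm{cut}(H)|\big] \;\le\; \sum_{u<v} \Ex\big[p_{uv}\cdot\|\x u-\x v\|\big].
\end{equation*}
For fixed weights I set $W:=\w u\w v/\W$ and $r:=\|\x u-\x v\|$. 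The density of $r$ on $[0,1/2]$ is $\Theta(r^{d-1})$, and by \eqref{eq:puv} the edge probability is $p_{uv}=\Theta(1)$ for $r\le r_0:=\min\{W^{1/d},1/2\}$ and $p_{uv}=\Theta(W^\alpha r^{-\alpha d})$ for larger $r$. A direct split of the integral at $r_0$, together with a case distinction according to whether $\alpha$ is smaller or larger than $1+1/d$, yields (up to an $n^{o(1)}$ factor)
\begin{equation*}
 \Ex[p_{uv}\cdot r] \;=\; O\!\big(\min\{1,W^{1+1/d}\}+W^\alpha\cdot\mathbf{1}[W\le 1]\big)+\Theta\big(\mathbf{1}[W>1]\big).
\end{equation*}

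The three exponents in the theorem now emerge from summing this pairwise bound via a dyadic decomposition of the weights and assumption~(PL2). The $\min\{1,W^{1+1/d}\}$ part contributes $O(n^{1-1/d+o(1)})$ when both weights are bulk-sized and at most $O(n^{3-\beta+o(1)})$ on the heavy pairs where the $\min$ clips to $1$. The term $W^\alpha\cdot\mathbf{1}[W\le 1]$ contributes $O(n^{2-\alpha+o(1)})$ when $\alpha<\beta-1$ and $O(n^{3-\beta+o(1)})$ otherwise, because the indicator caps the dyadic sum at a weight-product of $\Theta(n)$. Finally, the pairs with $W>1$ are $O(n^{3-\beta+o(1)})$ in number by a direct power-law count, and each contributes $\Theta(1)$. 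Summing all three sources gives $\Ex[|\mathrm{cut}(H)|]=O\big(n^{\max\{2-\alpha,3-\beta,1-1/d\}+o(1)}\big)$, and Markov's inequality upgrades this to the claimed almost-sure bound.

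The main obstacle is the case-by-case bookkeeping in the third paragraph: depending on how $1+1/d$, $\alpha$ and $\beta-1$ are ordered on the real line, different dyadic buckets dominate each of the three sums, and one has to check in every combination that no further term exceeds $\max\{2-\alpha,3-\beta,1-1/d\}$. A secondary but non-trivial point is the reduction in the first paragraph to a GIRG on a half-torus: I need the restricted weight sequence on each side to still satisfy (PL1)--(PL2) whp so that \thmref{diameter} applies, and I need to invoke the stated robustness of the framework under changes of the ground space.
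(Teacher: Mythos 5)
Your proposal follows essentially the same route as the paper: the paper proves a general grid-crossing bound (\lemref{grid}) by exactly your computation — density $\Theta(r^{d-1})$, the two regimes of $p_{uv}(r)$, a random-shift argument for the crossing probability, and a power-law summation producing the three exponents — and then specializes it to a cut of side length $1/2$, arguing as you do that each half carries a linear-size giant. One small correction: on the torus $\Space$ the single hyperplane $\{x_1=1/2\}$ does not disconnect anything (its complement is connected), so you must cut along \emph{two} parallel hyperplanes, say $\{x_1=0\}$ and $\{x_1=1/2\}$; this only doubles the expected cut size and is precisely what the paper does.
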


Since we assume $\alpha > 1$, $\beta > 2$, and $d = \Theta(1)$, the number of deleted edges is indeed $n^{1-\Omega(1)}$. Recently, Bl\"{a}sius et al.\ \cite{blasius2016hyperbolic} proved a better bound of $O(n^{(3-\beta)/2})$ for threshold hyperbolic random graphs which correspond to GIRGs with parameters $d=1$ and $\alpha=\infty$.

\paragraph*{Entropy:}

The internet graph has empirically been shown to be well compressible, using only 2-3 bits per edge \cite{blandford2003compact,BoldiV03}. This is not the case for the Chung-Lu model, as its entropy is $\Theta(n \log n)$ \cite{ChierichettiKLPR09}. We show that GIRGs have linear entropy, as is known for threshold hyperbolic random graphs~\cite{uelithesis}.


\begin{theorem}[Section~\ref{sec:entropy}]\label{thm:entropy}
  We can store a GIRG using $O(n)$ bits in expectation. The resulting data structure allows to query the degree of any vertex and its $i$-th neighbor in time $O(1)$. The compression algorithm runs in time $O(n)$.
\end{theorem}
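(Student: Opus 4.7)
The plan is to leverage the small edge-separator property from \thmref{stability} together with a recursive separator-based graph compression scheme in the spirit of Blandford--Blelloch--Kash and Blelloch--Farzan. I build a binary hierarchical decomposition of the torus $\Space$ by halving along one coordinate per level, cycling through the $d$ axes, and stopping when a sub-torus contains only $O(1)$ vertices. The key geometric input at each internal node is that the number of edges crossing the cut is $O(m^{1-\Omega(1)})$, where $m$ is the number of vertices currently in the sub-torus; this is exactly the ingredient that underlies \thmref{stability}, now applied to the sub-instance rather than only to the root.

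I then relabel the vertices so that every leaf of the decomposition corresponds to a contiguous interval of indices. Under this relabelling, each split is encoded by a single integer in $\{0,\ldots,m\}$ giving the size of its left child, at a cost of $O(\log m)$ bits, while the cross-edges at each internal node are listed explicitly at $O(\log m)$ bits per edge (endpoint indices relative to the sub-instance). Letting $B(m)$ denote the total number of bits required for a sub-instance of size $m$, this yields $B(m) \le 2\, B(m/2) + O(m^{1-\Omega(1)}\log m)$, whose solution is $B(n) = O(n)$ since the separator exponent is strictly below one. For the query interface I would layer succinct rank/select dictionaries on top of the bit-packed edge lists and precompute partial degree sums at the sub-torus boundaries, so that the degree of a vertex and its $i$-th neighbor reduce to $O(1)$ word-RAM operations. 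The compression itself is performed top-down, each step doing work proportional to the number of bits it outputs, giving total construction time $O(n)$ in expectation.

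The main obstacle will be propagating \thmref{stability} to every sub-torus of the decomposition simultaneously. The theorem as stated applies to the full GIRG and to its giant component, and its natural proof cuts $\Space$ along a single axis and bounds the expected number of crossing edges. I would need to verify (i) that the same style of bound applies to an arbitrary axis-aligned sub-torus together with its induced vertex set (original weights, positions uniform conditional on lying inside the sub-torus, edge probabilities inherited from the ambient GIRG via the total weight $\W$), and (ii) that the resulting bound holds simultaneously across all $O(\log n)$ levels of the recursion and all $2^{dk}$ cells at level $k$. Point (ii) is affordable via a union bound because the separator estimate is polynomially small in $m$ and comfortably absorbs a polylogarithmic loss; point (i) reduces to re-inspecting the calculation behind \thmref{stability} and observing that it depends on the vertex set only through the distribution of weights and the uniformity of positions inside the region being cut, both of which are preserved under the restriction.
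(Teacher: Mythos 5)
Your overall architecture -- a geometric hierarchical decomposition of $\Space$, charging each edge $O(\log m)$ bits at the scale where its endpoints separate, and rank/select structures on top -- is essentially the one the paper uses, but two steps as you propose them would fail.

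First, the union bound in your step (ii). The separator estimate behind \thmref{stability} (\lemref{grid}) bounds the \emph{expected} number of crossing edges; it is not a high-probability statement, so it cannot be fed into a union bound directly, and converting it via Markov's inequality costs a polynomial factor that destroys sublinearity at the deeper levels. Worse, the statement you want is simply false at small scales: a sub-torus containing $O(\sqrt{\log n})$ vertices can, with probability that is not negligible, induce a clique, which has no edge separator of size $m^{1-\Omega(1)}$. The paper explicitly flags this as the reason the Blandford--Blelloch scheme does not apply off the shelf. The correct fix is not to seek simultaneous separator bounds at all scales but to bound the \emph{total expected} encoding length by linearity of expectation: the sum over all cells at level $\ell$ of the expected number of cross-edges is exactly the expected number of edges cut by the level-$\ell$ grid, \lemref{grid} bounds this by a quantity $E_\ell$ that grows geometrically in $\ell$, and $\sum_{\ell} E_\ell \log(n2^{-\ell d}) = O(n)$; this is precisely \lemref{compression}. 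Your recurrence $B(m)\le 2B(m/2)+O(m^{1-\Omega(1)}\log m)$ should be replaced by this per-level expectation computation.

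Second, the $O(1)$ query time for the $i$-th neighbor is not established. In your encoding an edge is stored only at the internal node where its endpoints separate, so the adjacency list of a vertex is scattered over the $\Theta(\log n)$ ancestors of its leaf; partial degree sums at sub-torus boundaries plus rank/select do not obviously give constant-time access to the $i$-th neighbor without an additional per-vertex index into every level. The paper sidesteps this by storing each edge twice in a flat, per-vertex gap-encoded adjacency list (the difference $i-j$ under the geometric ordering of cells, at $\log_2|i-j|+O(1)$ bits), for which two rank/select bit vectors immediately yield $O(1)$ degree and neighbor queries; the space analysis then reduces to bounding $\Ex[\sum_{ij\in E}\log|i-j|]$, which is the same quantity your scheme charges and is controlled by \lemref{compression}.
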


\paragraph*{Hyperbolic random graphs:}
We establish that hyperbolic random graphs are an example of one-dimensional GIRGs, and that the often studied special case of threshold hyperbolic graphs is obtained by our limit case $\alpha = \infty$. Specifically, we obtain hyperbolic random graphs from GIRGs by setting the dimension $d=1$, the weights to a specific power law, and the $\Theta$ in the edge probability $p_{uv}$ to a specific, complicated function.
\begin{theorem}[Section~\ref{sec:hyperbolic}]\label{thm:hyperbolic}
For every choice of parameters in the hyperbolic random graph model, there is a choice of parameters in the GIRG model such that the two resulting distributions of graphs coincide.
\end{theorem}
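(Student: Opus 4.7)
The plan is to construct an explicit correspondence between hyperbolic random graph (HRG) parameters and GIRG parameters, and then show that the induced distributions on labelled graphs agree up to the constant-factor slack permitted by $\Theta$ in \eqref{eq:puv} and \eqref{eq:puv2}. Recall that an HRG with parameters $\alpha_H > 1/2$, $C \in \R$, and $T \in [0,1)$ places each of the $n$ vertices at an angle $\theta_v \in [0, 2\pi]$ drawn uniformly and a radius $r_v \in [0, R]$ with density proportional to $\sinh(\alpha_H r)$, where $R = 2\log n + C$, and connects $u,v$ either iff the hyperbolic distance $d_H(u,v) \le R$ (threshold case, $T=0$) or independently with Fermi-Dirac probability $1/(1 + e^{(d_H-R)/(2T)})$. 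The natural embedding into the GIRG model takes $d = 1$, puts $\x v := \theta_v/(2\pi) \in \Space$, and sets $\w v := \nu \, e^{(R - r_v)/2}$ for a constant $\nu > 0$ chosen so that $\wmin = \Omega(1)$ and $\W/n = \Theta(1)$.

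I would first check that this map realizes the GIRG input distribution. Since $\theta_v$ is uniform on $[0, 2\pi]$, $\x v$ is uniform on the one-dimensional torus. A change of variables in the radial density gives $\Pr[\w v \ge w] = \Theta(w^{-2\alpha_H})$ throughout the bulk range, so the weights follow a power law with exponent $\beta = 2\alpha_H + 1 > 2$; invoking Lemma~\ref{lem:sampleweights} with a suitable $\barw = n^{1/(\beta-1) - o(1)}$ then yields (PL1) and (PL2) with probability $1 - n^{-\Omega(1)}$, and conditioning on this event preserves distributional equality.

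To translate the edge rule, I would use the hyperbolic law of cosines combined with the standard asymptotic expansion
\[ d_H(u,v) = r_u + r_v + 2\log\sin\bigl(\tfrac{1}{2}|\theta_u - \theta_v|\bigr) + O\bigl(e^{-2\min\{r_u,r_v\}}\bigr), \]
which, after the substitution above, reduces to $\tfrac{1}{2}(d_H - R) = \log\bigl(\|\x u - \x v\|\, \W/(\w u \w v)\bigr) + O(1)$. Hence the threshold rule $d_H \le R$ is equivalent to $\|\x u - \x v\| \le \Theta(\w u \w v/\W)$, matching \eqref{eq:puv2} with $\alpha = \infty$; and the soft-case Fermi-Dirac probability becomes $\Theta\bigl(\min\{1, (\w u \w v / (\|\x u - \x v\|\,\W))^{1/T}\}\bigr)$, matching \eqref{eq:puv} with $\alpha = 1/T > 1$.

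The principal obstacle is that the distance expansion degrades when at least one radius is small (say $r_v = o(R)$, equivalently $\w v$ close to the maximum $\nu e^{R/2}$) or when the angular gap is on the scale of $e^{-\min\{r_u,r_v\}}$. I would absorb these exceptional pairs into the constant-factor slack built into the $\Theta$ of \eqref{eq:puv}, respectively into the ambiguous window $[c_1(\w u\w v/\W)^{1/d}, c_2(\w u\w v/\W)^{1/d}]$ of \eqref{eq:puv2}: since the GIRG definition permits an arbitrary edge-probability function $f_{uv}$ on each pair consistent with those envelopes, on exceptional pairs we may simply set $f_{uv}$ equal to the exact hyperbolic probability. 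Verifying that the hyperbolic probability stays within the allowed envelopes on the non-exceptional pairs is then a direct consequence of the $O(1)$ error in the distance expansion; this bookkeeping, together with checking that exceptional vertices are sufficiently rare, forms the technical heart of the argument.
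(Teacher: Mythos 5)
Your proposal follows essentially the same route as the paper: the identical embedding $d=1$, $\x v=\phi_v/(2\pi)$, $\w v = e^{(R-r_v)/2}$, $\beta=2\alpha_H+1$, $\alpha=1/T_H$, with the GIRG edge probability defined as the exact pullback of the Fermi--Dirac (or threshold) rule, the power-law conditions verified via Lemma~\ref{lem:sampleweights}, and the envelope conditions \eqref{eq:puv}/\eqref{eq:puv2} checked through the asymptotics of the hyperbolic distance (your additive expansion of $d_H$ is equivalent to the paper's computation of $e^{d-R}$ from the hyperbolic law of cosines, and your treatment of small-radius vertices corresponds to the paper's observation that (PL2) forces $r_v=\Theta(\log n)$). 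The argument is correct.
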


In particular, all our results on GIRGs hold for hyperbolic random graphs, too. 
Moreover, as our proofs are much less technical than typical proofs for hyperbolic random graphs, we suggest to switch from hyperbolic random graphs to GIRGs in future studies.

\section{Preliminaries}\label{sec:preliminaries}

\subsection{Notation}\label{subsec:notation}

For $w\in \R_{\geq 0}$, we use the notation
$  V_{\geq w} := \{v\in V\; |\; \w{v}\geq w\}$ and $V_{\leq w} := \{v\in V\; |\; \w{v}\leq w\}$,
as well as
$
  \W_{\geq w}:=\sum_{v\in V_{\geq w}} \w{v}$ and $\W_{\leq w}:=\sum_{v\in V_{\geq w}}\w{v}.
$
For $u,v\in V$ we write $u\sim v$ if $u$ and $v$ are adjacent, and for
$A,B\subseteq V$ we write $A\sim v$ if there exists $u\in A$ such that $u\sim v$, and we write $A \sim B$ if there exists $v \in B$ such that $A\sim v$. For a vertex
$v\in V$, we denote its neighborhood by $\Gamma(v)$, i.e.\ $\Gamma(v):=\{u\in
V\mid u\sim v\}$. We say that an event holds \emph{with high probability} (whp) if it holds with probability $1-n^{-\omega(1)}$.

\subsection{Cells}\label{subsec:cells}

We introduce a geometric ordering of the vertices, which we will use both for the sampling and for the compression algorithms. Consider the ground space $\Space$, split it into $2^d$ equal cubes, and repeat this process with each created cube; we call the resulting cubes \emph{cells}.
Cells are cubes of the form $C = [x_1 2^{-\ell}, (x_1+1) 2^{-\ell}) \times \ldots \times [x_d 2^{-\ell}, (x_d+1)2^{-\ell})$ with $\ell \ge 0$ and $0 \le x_i < 2^\ell$. We represent cell $C$ by the tuple $(\ell,x_1,\ldots,x_d)$. 
The volume of $C$ is $\Vol(C) = 2^{-\ell \cdot d}$.
For $0 < x \le 1$ we let $\ceilpowtwo{x}$ be the smallest number larger or equal to $x$ that is realized as the volume of a cell, or in other words $x$ rounded up to a power of $2^d$, $\ceilpowtwo{x} = \min\{2^{- \ell\cdot d} \mid \ell \in \mathbb N_0 \colon 2^{- \ell\cdot d} \ge x \}$.
Note that the cells of a fixed level $\ell$ partition the ground space. We obtain a \emph{geometric ordering} of these cells by following the recursive construction of cells in a breadth-first-search manner, yielding the following lemma.

\begin{lemma}[Geometric ordering]\label{lem:geoorder}
There is an enumeration of the cells $C_1,\ldots,C_{2^{\ell d}}$ of level $\ell$ such that for every cell $C$ of level $\ell' < \ell$ the cells of level $\ell$ contained in $C$ form a consecutive block $C_i,\ldots,C_j$ in the enumeration. 
\end{lemma}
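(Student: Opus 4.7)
The plan is to construct the enumeration recursively by induction on the level $\ell$ and then verify the consecutive-block property simultaneously at every coarser level $\ell' < \ell$.

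For the base case $\ell = 0$ there is only the single cell $\Space$ and nothing to show. For the inductive step, assume an enumeration $C_1^{(\ell-1)}, \ldots, C_{2^{(\ell-1)d}}^{(\ell-1)}$ of the level-$(\ell-1)$ cells has already been fixed with the required property. Each level-$(\ell-1)$ cell $C_i^{(\ell-1)}$ is partitioned into exactly $2^d$ level-$\ell$ cells, namely the $2^d$ cubes obtained by halving each side; fix once and for all a canonical ordering (say lexicographic on the bit-tuples $(x_1,\dots,x_d) \in \{0,1\}^d$) of these children. Define the level-$\ell$ enumeration by concatenation: first list the $2^d$ children of $C_1^{(\ell-1)}$ in the canonical order, then those of $C_2^{(\ell-1)}$, and so on.

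It remains to verify that for every coarser level $\ell' < \ell$ and every level-$\ell'$ cell $C$, the set of level-$\ell$ cells contained in $C$ is a consecutive block. By the inductive hypothesis applied at level $\ell-1$, the level-$(\ell-1)$ descendants of $C$ form a consecutive block $C_i^{(\ell-1)}, \ldots, C_j^{(\ell-1)}$ in the previous enumeration. Since the level-$\ell$ cells contained in $C$ are exactly the $2^d$ children of each of these $C_k^{(\ell-1)}$ with $i \le k \le j$, and since in our construction these children appear contiguously (the children of $C_k^{(\ell-1)}$ sit immediately before those of $C_{k+1}^{(\ell-1)}$), they form the consecutive block from position $2^d(i-1)+1$ to $2^d j$ of the level-$\ell$ enumeration. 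For $\ell' = \ell-1$ the claim is immediate from the construction since the $2^d$ children of each level-$(\ell-1)$ cell sit in consecutive positions by definition.

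The only mildly delicate point is bookkeeping: one must make the canonical child-ordering consistent so that the induction produces a single coherent ordering valid for all $\ell$ simultaneously (this is exactly the DFS ordering on the $2^d$-ary tree of cells, and is equivalent to the standard $Z$-order/Morton order). No geometric calculation is required; everything follows from the recursive partition structure of the cells.
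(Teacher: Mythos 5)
Your proof is correct and follows essentially the same route as the paper's: induction on the level, concatenating the $2^d$ children of each level-$(\ell-1)$ cell in a fixed canonical order. The only difference is that you spell out the verification of the consecutive-block property for all coarser levels $\ell' < \ell$, which the paper leaves as ``evident''; this is a harmless and welcome addition.
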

%
\begin{proof}
We construct the geometric ordering by induction on the level $\ell$. For $\ell=0$ there is only one cell to enumerate, so let $\ell >0$. Given an enumeration $C_1,\ldots,C_{2^{(\ell-1)d}}$ of the cells of level $\ell-1$, we first enumerate all cells of level $\ell$ contained in $C_1$, starting with the cell which is smallest in all $d$ coordinates, and ending with the cell which is largest in all $d$ coordinates. Then we enumerate all cells of level $\ell$ contained in $C_2$ (starting with smallest coordinates, and ending with largest coordinates), and so on. Evidently this gives us a geometric ordering of the cells of level $\ell$. 
\end{proof}

\subsection{Basic Properties of GIRGs} \label{subsec:basicproperties}

In this section, we list some basic properties about GIRGs which we mentioned already in Section~\ref{subsec:model} and which repeatedly occur in our proofs. In particular we consider the expected degree of a vertex and the marginal probability that an edge between two vertices with given weights is present. The proofs of all statements follow from more general considerations and can be found in \cite{bringmann2015generalGIRG}. Let us start by calculating the partial weight sums $\W_{\le w}$ and $\W_{\ge w}$. The values of these sums follow from the assumptions on power-law weights in Section~\ref{subsec:model}.

\begin{lemma}[Lemma~4.1 in \cite{bringmann2015generalGIRG}] \label{lem:totalweight}
The total weight satisfies $\W=\Theta(n)$. Moreover, for all sufficiently small $\eta > 0$, 
\begin{enumerate}[(i)]
\item $\W_{\ge w} = O( n w^{2-\beta+\eta})$ for all $w \ge \wmin$,
\item $\W_{\ge w} = \Omega( n w^{2-\beta-\eta})$ for all $\wmin \le w \le \barw$,
\item $\W_{\le w} = O(n)$ for all $w$, and
\item $\W_{\le w} = \Omega(n)$ for all $w=\omega(1)$.
\end{enumerate}
\end{lemma}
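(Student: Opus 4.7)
The plan is to derive all four bounds by combining the power-law tail estimate (PL2) with elementary layer-cake integration, in each case choosing the slack $\eta > 0$ small enough (specifically $\eta < \beta - 2$) to ensure the relevant integrals converge. Throughout, let $N(t) := \#\{v \in V : \w{v} \ge t\}$, so that (PL2) gives $N(t) \le c_2 n t^{1-\beta+\eta}$ for $t \ge \wmin$ and $N(t) \ge c_1 n t^{1-\beta-\eta}$ for $\wmin \le t \le \barw$. The key identity I will use is the layer-cake formula
\[
\W_{\ge w} \;=\; \sum_{v \in V_{\ge w}} \w{v} \;=\; w\, N(w) + \int_w^\infty N(t)\,dt,
\]
valid for every $w \ge 0$.

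First I would establish $\W = \Theta(n)$. The lower bound $\W \ge n\wmin = \Omega(n)$ is immediate from (PL1). For the upper bound, apply the identity above with $w = \wmin$: the boundary term is $O(n)$ since $N(\wmin) \le n$ and $\wmin = \Theta(1)$, while the tail integral $\int_{\wmin}^\infty c_2 n\, t^{1-\beta+\eta}\,dt$ converges and equals $O(n\, \wmin^{2-\beta+\eta}) = O(n)$, because $1-\beta+\eta < -1$. Part (i) follows by exactly the same estimate at a general $w \ge \wmin$: both $w N(w)$ and the tail integral are bounded by $O(n w^{2-\beta+\eta})$.

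Part (ii) is easier still: I would just use the crude estimate $\W_{\ge w} \ge w \cdot N(w) \ge c_1 n w^{2-\beta-\eta}$, valid for $\wmin \le w \le \barw$ by the lower bound in (PL2). Part (iii) is trivial since $\W_{\le w} \le \W = O(n)$ from what was already shown. For part (iv), I would combine (i) with the total-weight estimate: for $\eta < \beta - 2$ and $w = \omega(1)$,
\[
\W_{\le w} \;=\; \W - \W_{\ge w} \;=\; \Theta(n) - O\bigl(n w^{2-\beta+\eta}\bigr) \;=\; \Theta(n) - o(n) \;=\; \Omega(n),
\]
since $w^{2-\beta+\eta} \to 0$.

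There is no real obstacle here; the only thing to be careful about is bookkeeping the $\eta$'s in (PL2), so that one chooses $\eta$ small enough that $\beta - 2 - \eta > 0$ both for convergence of the tail integrals at infinity and for the $o(1)$ decay in part (iv). All other steps are routine power-law estimates.
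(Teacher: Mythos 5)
Your proof is correct. The paper itself does not prove this lemma---it is imported verbatim from \cite{bringmann2015generalGIRG}---but your layer-cake identity $\W_{\ge w}=wN(w)+\int_w^\infty N(t)\,dt$ is exactly the $f(x)=x$ case of the Abel-summation formula that the paper does prove as \lemref{weightsums}, and integrating the (PL2) bounds with $\eta<\beta-2$ is precisely the intended routine argument. The one step worth spelling out is that the boundary term $\wmin N(\wmin)=\wmin\cdot n$ is $O(n)$: (PL1) only asserts $\wmin=\Omega(1)$, but the matching upper bound $\wmin=O(1)$ follows from the second inequality of (PL2) at $w=\wmin$, where $N(\wmin)=n$ forces $\wmin^{\beta-1-\eta}\le c_2$.
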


Next we consider the marginal edge probability of two vertices $u$, $v$ with weights $\w{u}$, $\w{v}$. In GIRGs, this probability is essentially the same as in Chung-Lu random graphs. Furthermore, the marginal probability does not change by more than a constant factor if we fix the position $\x{u}$ or $\x{v}$ (but not both!). Moreover, conditioned on a fixed position $\x{v} \in \Space$, all edges $\{u,v\}$ are independently present. This is a central feature of our model.

\begin{lemma}[Lemma~4.2 in \cite{bringmann2015generalGIRG}] \label{lem:marginal}
  Fix $u \in [n]$ and $\x u \in \Space$. All edges $\{u,v\}$, $u \ne v$, are independently present with probability
  \begin{equation*}
  \Pr[u \sim v \mid \x{u}] = \Theta(\Pr[u \sim v]) =   \Theta\left(\min\left\{1,\frac{\w{u} \w{v}}{\W} \right\}\right).
  \end{equation*}
\end{lemma}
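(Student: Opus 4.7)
The plan is to prove the two assertions separately: (i) conditional independence of the edges incident to $u$ given $\x u$, and (ii) the $\Theta\!\left(\min\!\left\{1, \w u \w v / \W \right\}\right)$ bound for the conditional probability (which then also gives the unconditional probability by taking expectation over $\x u$).

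For independence, I would observe that in the GIRG definition each edge $\{u,v\}$ is determined by $\x u$, $\x v$, the weights, and an independent coin flip of bias $p_{uv}$. Given $\x u$ (and the deterministic weight sequence), the random inputs for distinct edges $\{u,v\}$ and $\{u,v'\}$ are the mutually independent positions $\x v, \x{v'}$ and independent coin flips, so the edges are conditionally independent. This is essentially the point of fixing only \emph{one} endpoint's position.

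For the probability bound, I would set $t := (\w u \w v / \W)^{1/d}$ and, using translation invariance of the torus, shift $\x u$ to the origin so that
\[
\Pr[u \sim v \mid \x u] \;=\; \int_{\Space} p_{uv}\big(\|y\|\big)\, dy \;=\; \Theta\!\left( \int_{\Space} \min\!\left\{ \tfrac{t^{\alpha d}}{\|y\|^{\alpha d}},\, 1 \right\} dy \right).
\]
I would split into two regimes. If $t \ge 1/2$, then $\|y\| \le 1/2 \le t$ on the whole torus (diameter in $L_\infty$ is $1/2$), so the integrand is $\Theta(1)$, the integral is $\Theta(1)$, and simultaneously $\w u \w v /\W = t^d = \Theta(1)$, so the bound matches. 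If $t < 1/2$, I would split the integral at $\|y\| = t$: on $\|y\| \le t$ the integrand is $1$ and contributes volume $\Theta(t^d) = \Theta(\w u \w v / \W)$; on $t \le \|y\| \le 1/2$ a shell decomposition gives
\[
\int_t^{1/2} t^{\alpha d}\, r^{-\alpha d}\, r^{d-1}\, dr \;=\; \Theta\!\left( t^{\alpha d} \int_t^{1/2} r^{(1-\alpha)d - 1}\, dr \right) \;=\; \Theta\!\left( t^d \right),
\]
where we used $\alpha > 1$ to guarantee the exponent $(1-\alpha)d - 1 < -1$, so the antiderivative is dominated by the lower endpoint. Both pieces thus contribute $\Theta(\w u \w v / \W)$, giving the claimed bound $\Theta(\min\{1, \w u \w v / \W\})$.

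For the limit case $\alpha = \infty$, I would use \eqref{eq:puv2}: on a ball of radius $\Theta(t)$ around $\x u$ the probability is $\Theta(1)$ and it is $0$ outside a slightly larger ball; the integral is $\Theta(t^d) = \Theta(\w u \w v /\W)$ if $t < 1/2$, and $\Theta(1)$ otherwise, with the arbitrary behaviour in the intermediate annulus adding at most $O(t^d)$. Finally the unconditional probability follows from $\Pr[u \sim v] = \Ex_{\x u}[\Pr[u \sim v \mid \x u]]$ and the fact that the conditional bound is uniform in $\x u$. No step is genuinely hard; the only care required is bookkeeping the two edge-probability regimes EP1 and EP2 consistently and ensuring the integral on the complement of the $t$-ball truly converges, which is exactly where the hypothesis $\alpha > 1$ is used.
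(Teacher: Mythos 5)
Your argument is correct. Note that the paper itself does not prove this statement: it is imported verbatim as Lemma~4.2 of the companion paper \cite{bringmann2015generalGIRG}, so there is no in-paper proof to compare against. Your direct computation is exactly the standard one: conditional independence is immediate because, given $\x{u}$, each edge $\{u,v\}$ depends only on the independent position $\x{v}$ and an independent coin; and the marginal is obtained by integrating \eqref{eq:puv} (resp.\ \eqref{eq:puv2}) over $\x{v}$, splitting at the radius $t=(\w{u}\w{v}/\W)^{1/d}$, where the inner ball contributes $\Theta(\min\{t^d,1\})$ and the outer shell integral $t^{\alpha d}\int_t^{1/2} r^{(1-\alpha)d-1}\,dr = O(t^d)$ converges precisely because $\alpha>1$. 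The only points worth making explicit in a write-up are (a) that the constants in the resulting $\Theta$ are uniform over $u$, $v$, and $\x{u}$ (which your computation does deliver, since the constants come only from the $\Theta$ in \eqref{eq:puv} and from $\alpha$ and $d$), so that averaging over $\x{u}$ preserves the same bound for the unconditional probability, and (b) that in the $L_\infty$-norm the ball of radius $r\le 1/2$ on $\Space$ has volume exactly $(2r)^d$, which justifies both the volume bound for the inner ball and the $\Theta(r^{d-1})$ shell decomposition.
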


The following statement shows that the expected degree of a vertex is of the same order as the weight of the vertex, thus we can interpret a given weight sequence $\w{}$ as a sequence of expected degrees.

\begin{lemma}[Lemma~4.3 in \cite{bringmann2015generalGIRG}] \label{lem:expecteddegree}
For any $v \in [n]$ we have $\Ex[\deg(v)]=\Theta(\w{v})$.
\end{lemma}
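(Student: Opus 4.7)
The plan is to apply linearity of expectation together with Lemma~\ref{lem:marginal}, which states that $\Pr[u\sim v] = \Theta(\min\{1, \w u \w v / \W\})$. This gives
\[
  \Ex[\deg(v)] \;=\; \sum_{u\ne v} \Pr[u\sim v] \;=\; \Theta\Bigl(\sum_{u\ne v} \min\bigl\{1,\tfrac{\w u \w v}{\W}\bigr\}\Bigr),
\]
and all that remains is to bound this sum from above and below by $\Theta(\w v)$.

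For the upper bound, I would simply discard the $\min$ and use $\min\{1,\w u \w v/\W\} \le \w u \w v/\W$. Summing over $u\ne v$ yields $\w v (\W-\w v)/\W \le \w v$, so $\Ex[\deg(v)] = O(\w v)$.

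For the lower bound, I would restrict the sum to the vertex set $V_{\le \W/\w v}$, on which the minimum is attained by $\w u \w v / \W$. This gives
\[
  \sum_{u\ne v} \min\bigl\{1,\tfrac{\w u \w v}{\W}\bigr\} \;\ge\; \frac{\w v}{\W}\cdot (\W_{\le \W/\w v} - \w v).
\]
By Lemma~\ref{lem:totalweight}(iii,iv), $\W = \Theta(n)$ and $\W_{\le w} = \Omega(n)$ whenever $w = \omega(1)$. So the lower bound reduces to checking that the threshold $\W/\w v$ tends to infinity. This follows from the power-law condition (PL2): setting $c_2 n / w^{\beta-1-\eta} < 1$ shows $\wmax = O(n^{1/(\beta-1-\eta)})$, which is $o(n)$ for $\beta>2$ and sufficiently small $\eta>0$. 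Hence $\W/\w v \ge \W/\wmax = \omega(1)$ uniformly in $v$, and therefore $\W_{\le \W/\w v} = \Omega(n) = \Omega(\W)$, yielding $\Ex[\deg(v)] = \Omega(\w v)$.

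There is no real obstacle: the statement is essentially a direct consequence of Lemma~\ref{lem:marginal} plus the bookkeeping on partial weight sums in Lemma~\ref{lem:totalweight}. The only point requiring mild care is verifying that $\W/\w v = \omega(1)$ for every vertex $v$, which is handled by the power-law tail bound on $\wmax$; this is where the assumption $\beta>2$ enters.
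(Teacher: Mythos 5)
Your proof is correct, and it follows exactly the route the paper itself sketches in the model discussion of Section~\ref{subsec:model}: sum the marginal edge probabilities from Lemma~\ref{lem:marginal} over $u \ne v$, drop the minimum for the upper bound, and restrict to $V_{\le \W/\w v}$ for the lower bound, using Lemma~\ref{lem:totalweight} together with the (PL2)-based bound $\wmax = O(n^{1/(\beta-1-\eta)}) = o(n)$ to ensure the threshold is $\omega(1)$ uniformly in $v$. The paper defers the actual proof to the cited reference, but your argument is a complete and correct instantiation of that standard approach.
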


As the expected degree of a vertex is roughly the same as its weight, it is no surprise that whp the degree of all vertices with weight sufficiently large is concentrated around the expected value. The following lemma gives a precise statement.

\begin{lemma}[Lemma~4.4 in \cite{bringmann2015generalGIRG}] \label{lem:largevertices}
The following properties hold whp for all $v \in [n]$.
\begin{enumerate}[(i)]
\item $\deg(v) = O(\w{v} + \log^2 n)$.
\item If $\w{v} = \omega(\log^2 n)$, then $\deg(v)= (1+o(1))\Ex[\deg(v)]= \Theta(\w{v})$.
\item $\sum_{v \in V_{\ge w}} \deg(v) = \Theta(\W_{\ge w})$ for all $w=\omega(\log^2 n)$.
\end{enumerate}
\end{lemma}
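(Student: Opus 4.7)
My plan is to reduce all three assertions to a standard multiplicative Chernoff bound applied to $\deg(v)$. Fix a vertex $v$ and condition on its position $\x v$. By \lemref{marginal} the edge indicators $\mathds{1}[u\sim v]$, $u\in[n]\setminus\{v\}$, are then mutually independent Bernoullis with conditional probabilities $\Theta(\min\{1,\w u\w v/\W\})$ in which the hidden constant does not depend on $\x v$. Because $p_{uv}$ depends on positions only through the distance $\|\x u-\x v\|$ and the torus $\Space$ is translation-invariant, integrating $p_{uv}$ over $\x u$ yields a value independent of $\x v$, so the conditional mean $\mu_v(\x v):=\Ex[\deg(v)\mid\x v]$ is constant in $\x v$ and equals $\mu:=\Ex[\deg(v)]=\Theta(\w v)$ by \lemref{expecteddegree}. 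It then suffices to apply Chernoff conditionally on $\x v$ and average trivially at the end.

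For part (i), I use the tail bound $\Pr[X\ge k]\le(e\mu/k)^k$ with $k=C(\w v+\log^2 n)$ for a sufficiently large absolute constant $C$. Since $k\ge e^2\mu$, this collapses to $e^{-\Omega(k)}\le e^{-\Omega(\log^2 n)}=n^{-\omega(1)}$, and a union bound over the $n$ vertices yields $\deg(v)=O(\w v+\log^2 n)$ whp.

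For part (ii), I fix one vanishing sequence $\eps_n:=(\log n)^{-1/4}$; any $\eps_n\to 0$ with $\eps_n^2\log^2 n=\omega(\log n)$ works. For every $v$ with $\w v=\omega(\log^2 n)$ we have $\mu=\omega(\log^2 n)$, so the two-sided Chernoff bound gives
\[
\Pr\!\left[\,|\deg(v)-\mu|\ge \eps_n\mu\,\right]\le 2\exp(-\eps_n^2\mu/3)=n^{-\omega(1)},
\]
and a union bound promotes this to $\deg(v)=(1\pm\eps_n)\Ex[\deg(v)]=\Theta(\w v)$ simultaneously for all such $v$. Part (iii) is then immediate: on the whp event of (ii), every $v$ with $\w v\ge w=\omega(\log^2 n)$ satisfies $\deg(v)=\Theta(\w v)$, so summing yields $\sum_{v\in V_{\ge w}}\deg(v)=\Theta(\sum_{v\in V_{\ge w}}\w v)=\Theta(\W_{\ge w})$.

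The main obstacle I anticipate is making the $(1+o(1))$ factor in (ii) honest: the $\Theta$-slack in \lemref{marginal} alone only gives $\mu_v(\x v)=\Theta(\Ex[\deg(v)])$, so an extra input is needed to collapse this hidden constant. Translation-invariance of the torus $\Space$ does the job cleanly and makes $\mu_v(\x v)$ exactly equal to $\Ex[\deg(v)]$. Everything else is routine: picking a single $\eps_n$ that works uniformly over all relevant $v$, and verifying that the per-vertex Chernoff failure probability $n^{-\omega(1)}$ survives the union bound over $n$ vertices.
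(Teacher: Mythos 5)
The paper does not actually prove this lemma: it is imported verbatim as Lemma~4.4 of~\cite{bringmann2015generalGIRG}, and Section~\ref{subsec:basicproperties} explicitly defers all proofs of these basic properties to that reference. So there is no in-paper proof to compare against; I can only assess your argument on its own terms, and it is correct and is exactly the kind of Chernoff argument one would expect the companion paper to use. Conditioning on $\x v$ makes the edge indicators independent Bernoullis (\lemref{marginal}), the Poisson-type tail $\Pr[X\ge k]\le (e\mu/k)^k$ with $k=C(\w v+\log^2 n)$ gives (i), the two-sided bound with a single $\eps_n\to 0$ satisfying $\eps_n^2\log^2 n=\omega(\log n)$ gives (ii), and (iii) is a deterministic consequence of (ii) on its high-probability event. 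You also correctly identified the one non-routine point: turning the $\Theta$-slack of \lemref{marginal} into an honest $(1+o(1))\Ex[\deg(v)]$. Your fix -- that $\Ex[\deg(v)\mid \x v]$ is constant in $\x v$ by translation invariance of $\Space$ -- is valid under the paper's stipulation that $p_{uv}=p_{uv}(r)$ depends on the positions only through the distance $r=\|\x u-\x v\|$; note only that the model's list of free parameters mentions a general function $f_{uv}(\x u,\x v)$ replacing the $\Theta$, and if one allowed genuinely non-isotropic dependence there, (ii) would degrade to $\deg(v)=\Theta(\Ex[\deg(v)])$ (still enough for (iii) and for every use of the lemma in this paper). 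The uniformity of the failure probability in $\x v$ and over the union bound is handled correctly.
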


We conclude this section by proving that if we sample the weights randomly from an appropriate distribution, then almost surely the resulting weights satisfy our conditions on power-law weights. In particular, the following lemma implies that all results of this paper for weights satisfying (PL1) and (PL2) also hold almost surely in a model of sampled weights.

\begin{lemma}[Lemma~4.5 in \cite{bringmann2015generalGIRG}] \label{lem:sampleweights}
Let  $\wmin=\Theta(1)$, let $\varepsilon>0$, and let $F=F_n: \R \rightarrow [0,1]$ be non-decreasing such that $F(z)=0$ for all $z \le \wmin$, and $F(z)=1-\Theta(z^{1-\beta})$ for all $z \in [\wmin,n^{1/(\beta-1-\varepsilon)}]$. Suppose that for every vertex $v \in [n]$, we choose the weight $\w{v}$ independently according to the cumulative probability distribution $F(.)$. Then for all $\eta=\eta(n)=\omega(\log\log n / \log n)$ , with probability $1-n^{-\Omega(\eta)}$, the resulting weight vector $\w{}$ satisfies the power-law conditions (PL1) and (PL2) with $\barw = (n/\log^2 n)^{1/(\beta-1)}$. 
\end{lemma}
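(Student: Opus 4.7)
The plan is to reduce the claim to tail bounds for the binomial random variable $N(w) := |V_{\ge w}| \sim \Bin(n, 1-F(w))$ at discretized values of $w$, then extend via monotonicity. Property (PL1) is immediate: since $F(z) = 0$ for $z \le \wmin$, each sampled $\w{v}$ is at least $\wmin = \Theta(1)$, so the minimum weight is $\Omega(1)$ deterministically.

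For (PL2), I first note that for $w \in [\wmin, n^{1/(\beta-1-\varepsilon)}]$ the expected count is $\mu(w) := n(1-F(w)) = \Theta(n w^{1-\beta})$. I would discretize with $w_k := 2^k \wmin$ for $k = 0, 1, \dots, K$ where $K = O(\log n)$ is the largest index with $w_K \le n^{1/(\beta-1-\varepsilon)}$. At each $w_k$ with $w_k \le \barw = (n/\log^2 n)^{1/(\beta-1)}$, one has $\mu(w_k) = \Omega(\log^2 n)$, so the multiplicative Chernoff bound
\begin{equation*}
\Pr\!\left[ \,|N(w_k) - \mu(w_k)| > \tfrac{1}{2}\mu(w_k) \right] \le 2\exp\!\big(-\Omega(\mu(w_k))\big)
\end{equation*}
is at most $\exp(-\Omega(\log^2 n)) = n^{-\omega(1)}$, yielding $N(w_k) = \Theta(\mu(w_k)) = \Theta(nw_k^{1-\beta})$, which is \emph{stronger} than both sides of (PL2) for any $\eta > 0$.

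For $w_k \in (\barw, n^{1/(\beta-1-\varepsilon)}]$ only the upper bound is needed. Here $\mu(w_k)$ may be small, but the target $c_2 n / w_k^{\beta-1-\eta}$ exceeds $\mu(w_k)$ by a factor $\Theta(w_k^\eta) \ge \barw^\eta = n^{\Omega(\eta)}$ (since $\eta \log n = \omega(\log\log n)$). I would apply the Chernoff tail $\Pr[\Bin(n,p) \ge t] \le (enp/t)^t$ with $t = c_2 n/w_k^{\beta-1-\eta}$: this gives a failure probability of the form $\big(e/w_k^\eta\big)^{t}$, which is $n^{-\Omega(\eta)}$ for suitable $c_2$. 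For $w > n^{1/(\beta-1-\varepsilon)}$ the expected count is already $\mu(w) \le n^{-\varepsilon/(\beta-1-\varepsilon)} = o(1)$, so by Markov $\Pr[N(w) \ge 1] \le n^{-\Omega(1)}$, and on the complement no weight exceeds this threshold and the upper bound of (PL2) holds vacuously.

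Finally, I would extend the discrete bounds to all $w \ge \wmin$ using monotonicity: since $N(\cdot)$ and the target functions $c_i n/w^{\beta-1 \mp \eta}$ are monotone in $w$ and adjacent $w_k, w_{k+1}$ differ by a constant factor, the bound at $w_k, w_{k+1}$ implies the bound at every $w \in [w_k, w_{k+1})$ after absorbing the factor $2^{\beta-1\pm\eta}$ into $c_1, c_2$. A union bound over the $K = O(\log n)$ discrete values plus the single tail event produces total failure probability $O(\log n) \cdot n^{-\Omega(\eta)} = n^{-\Omega(\eta)}$, matching the statement. The main obstacle will be the intermediate range $w \in (\barw, n^{1/(\beta-1-\varepsilon)}]$: there Chernoff no longer gives strong concentration, and one must squeeze the bound out of the small multiplicative slack $w^\eta$. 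Crucially, the choice $\barw = (n/\log^2 n)^{1/(\beta-1)}$ is precisely what makes $\mu(\barw) = \Theta(\log^2 n)$ large enough for strong concentration in the lower-bound regime while leaving a slack of $n^{\Omega(\eta)}$ for the upper-bound regime above $\barw$.
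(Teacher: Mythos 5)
The paper never proves this lemma itself: it is imported as Lemma~4.5 of the companion paper \cite{bringmann2015generalGIRG} (Section~3.3 explicitly defers all proofs of these basic properties to that reference), so there is no in-paper argument to compare yours against. Judged on its own, your discretization-plus-Chernoff argument is the standard route and is essentially correct: (PL1) is deterministic; for $w\le\barw$ the expectation $\mu(w)=\Theta(nw^{1-\beta})$ is $\Omega(\log^2 n)$ precisely because of the choice of $\barw$, so multiplicative Chernoff gives two-sided concentration with failure probability $n^{-\omega(1)}$; above $\barw$ only the upper bound is needed and you correctly extract it from the slack $w^\eta\ge\barw^\eta=n^{\Omega(\eta)}$, where the hypothesis $\eta\log n=\omega(\log\log n)$ enters; and the union bound over $O(\log n)$ grid points costs only $n^{o(\eta)}$. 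Two small points to tighten. First, in the range $\barw<w_k\le n^{1/(\beta-1-\varepsilon)}$ the threshold $t=c_2 n/w_k^{\beta-1-\eta}$ can drop below $1$ (near the top of the range when $\eta<\varepsilon$), and there the bound $(e\mu/t)^t$ degenerates towards $1$ rather than $n^{-\Omega(\eta)}$; but in that sub-case $\Pr[N(w_k)\ge t]=\Pr[N(w_k)\ge 1]\le\mu(w_k)=O(t\,w_k^{-\eta})\le n^{-\Omega(\eta)}$ by Markov, so the conclusion survives (for $t\ge 1$ your bound is at most $e\mu/t=n^{-\Omega(\eta)}$ and is fine as stated). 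Second, to push the lower bound from the grid to all $w\le\barw$ by monotonicity you need concentration at grid points up to $2\barw$, not just up to $\barw$; since $\mu(2\barw)=\Omega(\log^2 n)$ still holds, nothing breaks. With these cosmetic repairs the proof is complete.
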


\section{Sampling Algorithm} \label{sec:sampling}

In this section we show that GIRGs can be sampled in expected time $O(n)$. The running time depends exponentially on the fixed dimension $d$. In addition to our model assumptions, in this section we require that (1) edge probabilities $p_{uv}$ can be computed in constant time (given any vertices $u,v$ and positions $\x u, \x v$) and (2) we know an explicit constant $c > 0$ such that if $\alpha < \infty$ we have
$$ p_{uv} \le \min\bigg\{c \frac 1{\|\x u - \x v\|^{\alpha d}} \cdot \Big(\frac{\w u \w v}\W \Big)^\alpha, 1\bigg\}, $$
and if $\alpha = \infty$ we have
$$ p_{uv} \le \begin{cases} 1, & \text{if } \|\x u - \x v\| < c \big(\tfrac{\w u \w v}\W\big)^{1/d}, \\ 0, & \text{otherwise.} \end{cases}  $$
Note that existence of $c$ follows from our model assumptions. 
In the remainder of this section we introduce building blocks of our algorithm (Section~\ref{subsec:samplingBB}) and present our  algorithm (Section~\ref{subsec:samplingALG}) and its analysis (Section~\ref{subsec:samplingANA}). Thereby, we always assume $\alpha<\infty$. In the last part of this chapter (Section~\ref{subsec:samplingALPHA}), we show how the sampling algorithm can be adapted to the case $\alpha=\infty$.

\subsection{Building Blocks}
\label{subsec:samplingBB}

\paragraph*{Data structures:}
Recall the definition of cells from Section~\ref{subsec:cells}. We first build a basic data structure on a set of points $P$ that allows to access the points in a given cell $C$ (of volume at least~$\nu$) in constant time. 

\begin{lemma} \label{lem:samplingPointLocation}
  Given a set of points $P$ and $0 < \nu \le 1$, in time $O(|P|+1/\nu)$ we can construct a data structure $\mathcal D_\nu(P)$ supporting the following queries in time $O(1)$:
  \begin{itemize}
    \item given a cell $C$ of volume at least $\nu$, return $|C \cap P|$,
    \item given a cell $C$ of volume at least $\nu$ and a number $k$, return the $k$-th point in $C \cap P$ (in a fixed ordering of $C \cap P$ depending only on $P$ and $\nu$).
  \end{itemize}
\end{lemma}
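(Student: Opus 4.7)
The plan is to precompute, for every cell $C$ of volume at least $\nu$, both the count $|C \cap P|$ and the starting index of a consecutive block of points in an auxiliary array $A$ that stores $P$ sorted according to the geometric ordering of \lemref{geoorder}. Set $L := \lfloor \log_2(1/\nu)/d \rfloor$ so that cells of volume $\geq \nu$ are precisely cells at level $\ell \leq L$; there are $2^{Ld} = O(1/\nu)$ cells at level $L$, and $\sum_{\ell=0}^{L} 2^{\ell d} = O(1/\nu)$ cells in total across all relevant levels, which matches the claimed construction budget.

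For the construction, first, for each $p \in P$ I compute in $O(1)$ time the identifier $(\lfloor x_1(p) \cdot 2^L \rfloor, \ldots, \lfloor x_d(p) \cdot 2^L \rfloor)$ of its level-$L$ cell and then bucket-sort the points using a precomputed lookup table that maps each level-$L$ cell to its index in the geometric enumeration of \lemref{geoorder}. That table is built in $O(1/\nu)$ time by one traversal of the recursive cell decomposition, and the bucket-sort itself runs in $O(|P| + 1/\nu)$; the result is the array $A$. Second, for each level $\ell \in \{0, \ldots, L\}$ I fill a $d$-dimensional table indexed by cell coordinates, storing a count and a first-in-$A$ index per cell. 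At level $L$ these are read off the buckets directly. For $\ell < L$, \lemref{geoorder} guarantees that every level-$\ell$ cell decomposes into $2^d$ consecutive level-$(\ell+1)$ cells, so its count is the sum of its children's counts and its first index equals that of its geometrically-first nonempty child, each computable in $O(1)$. Summing over all levels gives total construction time $O(|P| + 1/\nu)$.

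Query answering is then immediate: a cell $C = (\ell, x_1, \ldots, x_d)$ of volume at least $\nu$ satisfies $\ell \leq L$, so the precomputed table returns $|C \cap P|$ and the starting index $s$ of its block in $A$ in $O(1)$; the $k$-th point of $C \cap P$ is then $A[s + k - 1]$, also in $O(1)$. The ordering within $C \cap P$ is the one induced by $A$, which depends only on $P$ and $\nu$ (through $L$ and the geometric enumeration), as required.

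The only mildly subtle point I expect is ensuring that the bucket-sort honors the enumeration of \lemref{geoorder}: a naive lexicographic sort on cell coordinates would destroy the ``consecutive block'' property that is exactly what lets me lift counts and first indices from level $\ell+1$ to level $\ell$ in constant time per cell. Using the precomputed cell-to-position lookup table obtained by a single top-down walk over the recursive decomposition (which costs $O(1/\nu)$ and fits within budget) resolves this, after which the construction is a routine bottom-up pass and the queries reduce to two array lookups.
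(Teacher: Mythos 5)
Your proof is correct and follows essentially the same approach as the paper: sort the points into the finest-level cells of volume at least $\nu$ according to the geometric ordering of \lemref{geoorder}, store them in a contiguous array, and record for every coarser cell of volume at least $\nu$ a count and the start index of its (contiguous) block, with the total number of such cells being $O(1/\nu)$ by the geometric series. The only cosmetic difference is that you compute the per-cell data bottom-up from children while the paper records start/end indices directly from prefix sums; both yield the same $O(|P|+1/\nu)$ construction and $O(1)$ queries.
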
 
\begin{proof}
  Let $\mu = \ceilpowtwo{\nu} = 2^{-\ell \cdot d}$, so that $\nu \le \mu \le O(\nu)$. Following the recursive construction of cells, we can determine a geometric ordering of the cells of volume $\mu$ as in Lemma~\ref{lem:geoorder} in time $O(1/\mu) = O(1/\nu)$; say $C_1,\ldots,C_{1/\mu}$ are the cells of volume $\mu$ in the geometric ordering. We store this ordering by storing a pointer from each cell $C_i = (\ell,x_1,\ldots,x_d)$ to its successor $C_{i+1} = (\ell,x_1',\ldots,x_d')$, which allows to scan the cells $C_1,\ldots,C_{1/\mu}$ in linear time.
  For any point $x \in P$, using the floor function we can determine in time $O(1)$ the cell $(\ell,x_1,\ldots,x_d)$ of volume $\mu$ that $x$ belongs to. This allows to determine the numbers $|C_i \cap P|$ for all $i$ in time $O(|P| + 1/\nu)$. We also compute each prefix sum $s_i := \sum_{j < i} |C_j \cap P|$ and store it at cell $C_i = (\ell,x_1,\ldots,x_d)$. Using an array $A[.]$ of size $|P|$, we store (a pointer to) the $k$-th point in $C_i \cap P$ at position $A[s_i + k]$. Note that this preprocessing can be performed in time $O(|P|+1/\nu)$.
  
  A given cell $C$ of volume at least $\nu$ may consist of several cells of volume $\mu$. By Lemma~\ref{lem:geoorder}, these cells form a contiguous subsequence $C_i,C_{i+1},\ldots,C_{j-1},C_j$ of $C_1,\ldots,C_{1/\mu}$, so that the points $C \cap P$ form a contiguous subsequence of $A$. For constant access time, we store for each cell $C$ of volume at least $\nu$ the indices $s_C,e_C$ of the first and last point of $C \cap P$ in $A$. Then $|C \cap P| = e_C - s_C + 1$ and the $k$-th point in $C \cap P$ is stored at $A[s_C+k]$. Thus, both queries can be answered in constant time. Note that the ordering $A[.]$ of the points in $C \cap P$ is a mix of the geometric ordering of cells of volume $\mu$ and the given ordering of $P$ within a cell of volume $\mu$, in particular this ordering indeed only depends on $P$ and $\nu$.
\end{proof}

Next we construct a partitioning of $\Space \times \Space$ into products of cells $A_i \times B_i$. This partitioning allows to split the problem of sampling the edges of a GIRG into one problem for each $A_i \times B_i$, which is beneficial, since each product $A_i \times B_i$ has one of two easy types.
For any $A,B \subseteq \Space$ we denote the distance of $A$ and $B$ by $d(A,B) = \inf_{a \in A, b \in B} \|a-b\|$. 

\begin{lemma} \label{lem:samplingPartition}
  Let $0 < \nu \le 1$. In time $O(1/\nu)$ we can construct a set\\$\mathcal P_\nu = \{(A_1,B_1),\ldots,(A_s,B_s)\}$ such that
  \begin{enumerate}[(1)]
    \item $A_i,B_i$ are cells with $\Vol(A_i) = \Vol(B_i) \ge \nu$,
    \item for all $i$, either $d(A_i,B_i) = 0$ and $\Vol(A_i) = \lceil \nu \rceil_{2^d}$ (type I) or $d(A_i,B_i) \ge \Vol(A_i)^{1/d}$ (type~II),
    \item the sets $A_i \times B_i$ partition $\Space \times \Space$,
    \item $s = O(1/\nu)$.
  \end{enumerate}
\end{lemma}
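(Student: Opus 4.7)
The plan is to construct $\mathcal P_\nu$ by a hierarchical top-down procedure in the spirit of well-separated pair decompositions. Starting from the single pair $(\Space,\Space)$, I would maintain a collection of pairs $(A,B)$ of equal-volume cells and repeatedly refine them as follows: if $d(A,B) \ge \Vol(A)^{1/d}$, output $(A,B)$ as a type~II pair; else if $\Vol(A) = \ceilpowtwo{\nu}$, output $(A,B)$ as a type~I pair; otherwise subdivide $A$ and $B$ each into $2^d$ equal subcells of half the side length and recurse on all $4^d$ subcell-pair products. Properties (1) and (2) are then built into the stopping criteria, and property (3) follows by induction on the level, since the $4^d$ child products of a parent $(A,B)$ partition $A\times B$ exactly.

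The key property to check is the $O(1/\nu)$ bound (4), and it hinges on a local-neighbor count: at any level $\ell$ (cells of side length $2^{-\ell}$), a cell $A$ has only $O(1)$ ``close'' neighbors, namely cells $B$ with $d(A,B) < 2^{-\ell}$. Indeed, in the $L_\infty$ metric any such $B$ is contained in the $2^{-\ell}$-expansion of $A$, a cube of side $3\cdot 2^{-\ell}$, which meets at most $3^d$ cells of level~$\ell$. Consequently the number of close pairs at level $\ell$ is at most $3^d\cdot 2^{\ell d}$, and since the recursion only descends through close pairs, the total number of pairs considered at level $\ell$ is at most $4^d\cdot 3^d\cdot 2^{(\ell-1)d}$ for $\ell\ge 1$. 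Summing over all levels $0,1,\dots,\ell_{\max}$, where $\ell_{\max} = \tfrac{1}{d}\log_2(1/\ceilpowtwo{\nu})$ is the stopping level, yields a geometric series dominated by its top term, giving $s = O(2^{\ell_{\max}d}) = O(1/\ceilpowtwo{\nu}) = O(1/\nu)$.

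For the running-time claim, each recursive call does $O(1)$ work: representing cells by their integer coordinates $(\ell,x_1,\dots,x_d)$, computing $d(A,B)$ on the torus via a coordinate-wise $L_\infty$ formula, and subdividing are all constant-time operations. Since the total number of recursive calls is proportional to the number of tree nodes, which is dominated by the leaves, the total running time is $O(s) = O(1/\nu)$. The main subtlety I expect is handling the torus wrap-around when bounding close neighbors for very coarse levels ($\ell\in\{0,1\}$), where the expanded cube around $A$ may exceed $\Space$; there the number of close neighbors is trivially at most the total cell count $2^{\ell d}\le 2^d$ at that level, so the bound $3^d\cdot 2^{\ell d}$ remains valid and the geometric summation is unaffected.
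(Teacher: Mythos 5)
Your construction is correct and produces exactly the same set $\mathcal P_\nu$ as the paper's: since two equal-volume cells satisfy $d(A,B)<\Vol(A)^{1/d}$ if and only if $d(A,B)=0$, your type~II leaves are precisely the pairs with $d(A,B)>0$ and $d(\parent(A),\parent(B))=0$, which is the paper's defining condition. The only difference is presentational (a top-down recursion with induction over the tree, versus the paper's direct membership definition with a chain argument for the partition property), and your counting via the $3^d$ touching-neighbor bound and the geometric sum over levels matches the paper's.
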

\begin{proof}
  Note that for cells $A,B$ of equal volume we have $d(A,B) = 0$ if and only if either $A=B$ or (the boundaries of) $A$ and $B$ touch. For a cell $C$ of level $\ell$ we let $\parent(C)$ be its \emph{parent}, i.e., the unique cell of level $\ell-1$ that $C$ is contained in.
  Let $\mu = \ceilpowtwo{\nu}$. We define $\mathcal P_\nu$ as follows. For any pair of cells $(A,B)$ with $\Vol(A) = \Vol(B) \ge \nu$, we add $(A,B)$ to $\mathcal P_\nu$ if either (i) $\Vol(A) = \Vol(B) = \mu$ and $d(A,B) = 0$, or (ii) $d(A,B) > 0$ and $d(\parent(A),\parent(B)) = 0$. 
  
  Property (1) follows by definition. 
  Regarding property (2), the pairs $(A,B)$ added in case~(i) are clearly of type I. Observe that two cells $A,B$ of equal volume that are not equal or touching have distance at least the sidelength of $A$, which is $\Vol(A)^{1/d}$. Thus, in case (ii) the lower bound $d(A,B) > 0$ implies $d(A,B) \ge \Vol(A)^{1/d}$, so that $(A,B)$ is of type II. 
  
  For property (3), consider $(x,y) \in \Space \times \Space$ and let $A,B$ be the cells of volume $\mu$ containing~$x,y$. 
  Let $A^{(0)} := A$ and $A^{(i)} := \parent(A^{(i-1)})$ for any $i\ge 1$, until $A^{(k)} = \Space$. Similarly, define $B = B^{(0)} \subset \ldots \subset B^{(k)} = \Space$ and note that $\Vol(A^{(i)}) = \Vol(B^{(i)})$. Observe that each set $A^{(i)} \times B^{(i)}$ contains $(x,y)$. Moreover, any set $A' \times B'$, where $A',B'$ are cells with $\Vol(A') = \Vol(B')$ and $(x,y) \in A' \times B'$, is of the form $A^{(i)} \times B^{(i)}$. Thus, to show that $\mathcal P_\nu$ partitions $\Space \times \Space$ we need to show that it contains exactly one of the pairs $(A^{(i)},B^{(i)})$ (for any $x,y$). To show this, we use the monotonicity $d(A^{(i)},B^{(i)}) \ge d(A^{(i+1)},B^{(i+1)})$ and consider two cases. If $d(A,B) = 0$ then we add $(A,B)$ to $\mathcal P_\nu$ in case (i), and we add no further $(A^{(i)},B^{(i)})$, since $d(A^{(i)},B^{(i)}) = 0$ for all~$i$. If $d(A,B) > 0$ then since $d(A^{(k)},B^{(k)}) = d(\Space,\Space) = 0$ there is a unique index $0 \le i < k$ with $d(A^{(i)},B^{(i)}) > 0$ and $d(A^{(i+1)},B^{(i+1)}) = 0$. Then we add $(A^{(i)},B^{(i)})$ in case (ii) and no further $(A^{(j)},B^{(j)})$. This proves property (3).
  
  Property (4) follows from the running time bound of $O(1/\nu)$, which we show in the following. Note that we can enumerate all $1/\mu = O(1/\nu)$ cells of volume $\mu$, and all of the at most $3^d = O(1)$ touching cells of the same volume, in time $O(1/\nu)$, proving the running time bound for case (i). Moreover, we can enumerate all $2^{\ell \cdot d}$ cells $C$ in level $\ell$, together with all of the at most $3^d = O(1)$ touching cells $C'$ in the same level. Then we can enumerate all $2^d = O(1)$ cells $A$ that have $C$ as parent as well as all $O(1)$ cells $B$ that have $C'$ as parent. This enumerates (a superset of) all possibilities of case (ii). Summing the running time $O(2^{\ell\cdot d})$ over all levels $\ell$ with volume $2^{-\ell \cdot d} \ge \nu$ yields a total running time of $O(1/\nu)$.
\end{proof}

\paragraph*{Weight layers:}
We set $w_0 := \wmin$ and $w_i := 2 w_{i-1}$ for $i \ge 1$. This splits the vertex set $V = [n]$ into \emph{weight layers} $V_i := \{v \in V \mid w_{i-1} \le v < w_i\}$ for $1 \le i \le L$ with $L = O(\log n)$. 
We write $V_i^C$ for the restriction of weight layer $V_i$ to cell $C$, $V_i^C := \{v \in V_i \mid \x v \in C\}$.

\paragraph*{Geometric random variates:}
For $0 < p \le 1$ we write $\Geo(p)$ for a geometric random variable, taking value $i \ge 1$ with probability $p (1-p)^{i-1}$. $\Geo(p)$ can be sampled in constant time using the simple formula $\big\lceil \frac{\log(R)}{\log(1-p)} \big\rceil$, where $R$ is chosen uniformly at random in $(0,1)$, see~\cite{devroye_nonuniform_1986}\footnote{To evaluate this formula exactly in time $O(1)$ we need to assume the RealRAM model of computation. However, also on a bounded precision machine like the WordRAM $\Geo(p)$ can be sampled in expected time $O(1)$~\cite{bringmann_exact_2013}.}.

\subsection{The Algorithm}
\label{subsec:samplingALG}

\begin{algorithm}
\caption{Sampling algorithm for GIRGs in expected time $O(n)$}\label{alg:sampling}
\begin{algorithmic}[1]
  \State $E := \emptyset$
  \State sample the positions $\x v$, $v \in V$, and determine the weight layers $V_i$
  \ForAll{$1 \le i \le L$}
    build data structure $\mathcal D_{\nu(i)}(\{\x v \mid v \in V_i\})$ with $\nu(i) := \frac{w_i w_0}\W$
  \EndFor
  \ForAll{$1 \le i \le j \le L$}
    \State construct partitioning $\mathcal P_{\nu(i,j)}$ with $\nu(i,j) := \frac{w_i w_j}\W$
    \ForAll{$(A,B) \in \mathcal P_{\nu(i,j)}$ of type I}
      \ForAll{$u \in V_i^A$ and $v \in V_j^B$}  with probability $p_{uv}$ add edge $\{u,v\}$ to $E$
      \EndFor
    \EndFor
    \ForAll{$(A,B) \in \mathcal P_{\nu(i,j)}$ of type II}
      \State $\bar p := \min\big\{c \cdot \frac 1{d(A,B)^{\alpha d}} \cdot \big(\frac{w_i w_j}\W\big)^\alpha, 1\big\}$
      \State $r := \Geo(\bar p)$
      \While{$r \le |V_i^A| \cdot |V_j^B|$}
        \State determine the $r$-th pair $(u,v)$ in $V_i^A\times V_j^B$
        \State with probability $p_{uv} / \bar p$ add edge $\{u,v\}$ to $E$
        \State $r := r + \Geo(\bar p)$
      \EndWhile
    \EndFor
    \If{$i=j$}
      remove all edges with $u>v$ sampled in this iteration
    \EndIf
  \EndFor
\end{algorithmic}
\end{algorithm}

Given the model parameters, our Algorithm~\ref{alg:sampling} samples the edge set $E$ of a GIRG. To this end, we first sample all vertex positions $\x v$ uniformly at random in $\Space$. Given weights $w_1,\ldots,w_n$ we can determine the weight layers $V_i$ in linear time (we may use counting sort or bucket sort since there are only $L = O(\log n)$ layers). Then we build the data structure from Lemma~\ref{lem:samplingPointLocation} for the points in $V_i$ setting $\nu = \nu(i) = \frac{w_i w_0}\W$, i.e., we build $\mathcal D_{\nu(i)}(\{\x v \mid v \in V_i\})$ for each $i$. In the following, for each pair of weight layers $V_i,V_j$ we sample the edges between $V_i$ and $V_j$. To this end, we construct the partitioning $\mathcal P_{\nu(i,j)}$ from Lemma~\ref{lem:samplingPartition} with $\nu(i,j) = \frac{w_i w_j}\W$. Since $\mathcal P_{\nu(i,j)}$ partitions $\Space \times \Space$, every pair of vertices $u \in V_i, v \in V_j$ satisfies $\x u \in A, \x v \in B$ for exactly one $(A,B) \in \mathcal P_{\nu(i,j)}$. Thus, we can iterate over all $(A,B) \in \mathcal P_{\nu(i,j)}$ and sample the edges between $V_i^A$ and $V_j^B$. 

If $(A,B)$ is of type I, then we simply iterate over all vertices $u \in V_i^A$ and $v \in V_j^B$ and add the edge $\{u,v\}$ with probability $p_{uv}$; this is the trivial sampling algorithm. Note that we can efficiently enumerate $V_i^A$ and $V_j^B$ using the data structure $\mathcal D_{\nu(i)}(\{\x v \mid v \in V_i\})$ that we constructed above. 

If $(A,B)$ is of type II, then the distance $\|x-y\|$ of any two points $x \in A, y \in B$ satisfies $d(A,B) \le \|x-y\| \le d(A,B) + \Vol(A)^{1/d} + \Vol(B)^{1/d} \le 3 d(A,B)$, by the definition of type~II. Thus, $\bar p = \min\big\{c \cdot \frac 1{d(A,B)^{\alpha d}} \cdot \big(\frac{w_i w_j}\W\big)^\alpha, 1\big\}$ is an upper bound on the edge probability~$p_{uv}$ for any $u \in V_i^A, v \in V_j^B$, and it is a good upper bound since $d(A,B)$ is within a constant factor of $\|\x u - \x v\|$ and $w_i,w_j$ are within constant factors of $\w u, \w v$. Now we first sample the set of edges $\bar E$ between $V_i^A$ and $V_j^B$ that we would obtain if all edge probabilities were equal to $\bar p$, i.e., any $(u,v) \in V_i^A \times V_j^B$ is in $\bar E$ independently with probability $\bar p$. From this set $\bar E$, we can then generate the set of edges with respect to the true edge probabilities~$p_{uv}$ by throwing a coin for each $\{u,v\} \in \bar E$ and letting it survive with probability $p_{uv} / \bar p$. Then in total we choose a pair $(u,v)$ as an edge in $E$ with probability $\bar p \cdot (p_{uv}/\bar p) = p_{uv}$, proving that we sample from the correct distribution. Note that here we used $p_{uv} \le \bar p$. It is left to show how to sample the ``approximate'' edge set $\bar E$. First note that the data structure $\mathcal D_\nu(\{\x v \mid v \in V_i\})$ defines an ordering on $V_i^A$, and we can determine the $\ell$-th element in this ordering in constant time, similarly for $V_j^B$. Using the lexicographic ordering, we obtain an ordering on $V_i^A\times V_j^B$ for which we can again determine the $\ell$-th element in constant time. In this ordering, the first pair $(u,v) \in V_i^A\times V_j^B$ that is in $\bar E$ is geometrically distributed, according to $\Geo(\bar p)$. Since geometric random variates can be generated in constant time, we can efficiently generate $\bar E$, specifically in time $O(1 + |\bar E|)$.

Finally, the case $i=j$ is special. With the algorithm described above, for any $u,v \in V_i$ we sample whether they form an edge twice, once for $\x u \in A, \x v \in B$ (for some $(A,B) \in \mathcal P_{\nu(i,j)}$) and once for $\x v \in A', \x u \in B'$ (for some $(A',B') \in \mathcal P_{\nu(i,j)}$). To fix this issue, in the case $i=j$ we only accept a sampled edge $(u,v) \in V_i^A \times V_j^B$ if $u < v$; then only one way of sampling edge $\{u,v\}$ remains. This changes the expected running time only by a constant factor.


\subsection{Analysis}
\label{subsec:samplingANA}

Correctness of our algorithm follows immediately from the above explanations. In the following we show that Algorithm~\ref{alg:sampling} runs in expected linear time. This is clear for lines 1-2. For line~3, since building the data structure from Lemma~\ref{lem:samplingPointLocation} takes time $O(|P| + 1/\nu)$, it takes total time $\sum_{i=1}^L O\big(|V_i| + \W/(w_i w_0)\big)$. Clearly, the first summand $|V_i|$ sums up to $n$. Using $w_0 = \wmin = \Omega(1)$, $\W = O(n)$, and that $w_i$ grows exponentially with~$i$, implying $\sum_i 1/w_i = O(1)$, also the second summand sums up to $O(n)$. For line 5, all invocations in total take time $O\big( \sum_{i,j} \W/(w_i w_j)\big)$, which is $O(n)$, since again $\W = O(n)$ and $\sum_i 1/w_i = O(1)$.
We claim that for any weight layers $V_i,V_j$ the expected running time we spend on any $(A,B) \in \mathcal P_{\nu(i,j)}$ is $O(1 + \Ex[|E^{A,B}_{i,j}|])$, where $E^{A,B}_{i,j}$ is the set of edges in $V_i^A \times V_j^B$. Summing up the first summand $O(1)$ over all $(A,B) \in \mathcal P_{\nu(i,j)}$ sums up to $1/\nu(i,j) = \W/(w_i w_j)$. As we have seen above, this sums up to $O(n)$ over all $i,j$. Summing up the second summand $O(\Ex[|E^{A,B}_{i,j}|])$ over all $(A,B) \in \mathcal P_{\nu(i,j)}$ and weight layers $V_i,V_j$ yields the total expected number of edges $O(\Ex[|E|])$, which is $O(n)$, since the average weight $\W / n = O(1)$ and thus the expected average degree is constant.

It is left to prove the claim that for any weight layers $V_i,V_j$ the expected time spent on $(A,B) \in \mathcal P_{\nu(i,j)}$ is $O(1 + \Ex[|E^{A,B}_{i,j}|])$. If $(A,B)$ is of type I, then any pair of vertices $(u,v) \in V_i^A\times V_j^B$ has probability $\Theta(1)$ to form an edge: Since the volume of $A$ and $B$ is $w_i w_j/\W$, their diameter is $(w_i w_j/\W)^{1/d}$ and we obtain $\|\x u - \x v\| \le (w_i w_j/\W)^{1/d} = O((\w u \w v / \W)^{1/d})$, which yields $p_{uv} = \Theta\big(\min\big\{\big(\frac{\w u \w v}{\|\x u - \x v\|^d \W}\big)^\alpha,1\big\}\big) = \Theta(1)$. 
As we spend time $O(1)$ for any $(u,v) \in V_i^A\times V_j^B$, we stay in the desired running time bound $O(\Ex[|E^{A,B}_{i,j}|])$. 

If $(A,B)$ is of type II, we first sample edges $\bar E$ with respect to the larger edge probability~$\bar p$, and then for each edge $e \in \bar E$ sample whether it belongs to $E$. This takes total time $O(1 + |\bar E|)$. Note that any edge $e \in \bar E$ has constant probability $p_{uv} / \bar p = \Theta(1)$ to survive: It follows from $\w u = \Theta(w_i), \w v = \Theta(w_j)$, and $\|\x u - \x v\| = \Theta(d(A,B))$ that $p_{uv} = \Theta(\bar p)$. Hence, we obtain $\Ex[|\bar E|] = O(\Ex[|E^{A,B}_{i,j}|])$, and the running time $O(1 + |\bar E|)$ is therefore in expectation bounded by $O(1 + \Ex[|E^{A,B}_{i,j}|])$. This finishes the proof of the claim.

\subsection{Sampling GIRGs in the Case $\alpha = \infty$}
\label{subsec:samplingALPHA}

For $\alpha = \infty$, edges only exist between vertices in distance $\|\x u - \x v\| < c (\w u \w v / \W)^{1/d}$. We change Algorithm~\ref{alg:sampling} by setting $\nu(i,j) = \max\{1,c\}^d \cdot w_i w_j / \W$. Then for any $u \in V_i, v \in V_j$ and $(A,B) \in \mathcal P_{\nu(i,j)}$ of type II we have $d(A,B) \ge \Vol(A)^{1/d} \ge \nu^{1/d} \ge c (\w u \w v / \W)^{1/d}$, so there are no edges between $V_i^A$ and $V_j^B$ for type II. This allows to simplify the algorithm by completely ignoring type~II pairs; the rest of the algorithm stays unchanged. 

Additionally, we have to slightly change the running time analysis, since it no longer holds that all pairs of vertices $(u,v) \in V_i^A\times V_j^B$ satisfy $p_{uv} = \Theta(1)$. However, a variant of this property still holds:  If we only uncovered that $\x u \in A$ and $\x v \in B$, but not yet where exactly in $A,B$ they lie, then the marginal probability of $(u,v)$ forming an edge is $\Theta(1)$, since for any $\eps > 0$ a constant fraction of all pairs of points in $A \times B$ are within distance $\eps (w_{i-1} w_{j-1} / \W)^{1/d}$, guaranteeing edge probability $\Theta(1)$ for sufficiently small $\eps$. This again allows to check all pairs of vertices in $V_i^A\times V_j^B$ whether they form an edge, which yields expected linear running time.

\section{Clustering}\label{sec:clustering}

In this section we give a proof of Theorem~\ref{thm:clustering}. We start by stating the formal definition of the clustering coefficient. 

\begin{definition} \label{def:clustering}
In a graph $G=(V,E)$ the clustering coefficient of a vertex $v\in V$ is defined as
\[
\cc(v) := \cc_G(v) := \begin{cases} \#\big\{\{u,u'\} \subseteq \Gamma(v) \,\mid\, u \sim u'\big\} \big/ \binom{\deg v}{2}, & \text{if} \deg(v) \geq 2, \\ 0 ,& \text{otherwise,}\end{cases}
\]
and the (mean) clustering coefficient of $G$ is 
\[
\cc(G) := \frac{\sum_{v\in V}\cc(v)}{|V|}.
\]
\end{definition}

For the proof of Theorem~\ref{thm:clustering}, we need Le Cam's theorem which allows us to bound the total variation
distance of a binomial distribution to a Poisson distribution with the same
mean.
\begin{theorem}[Le Cam, Proposition 1 in~\cite{le1960approximation}]\label{thm:lecam}
  Suppose that $X_1,\ldots,X_n$ are independent Bernoulli random variables such that
  $\Pr[X_i=1]=p_i$ for $i\in [n]$, $\lambda_n = \sum_{i\in [n]}p_i$ and
  $S_n=\sum_{i\in [n]} X_i$. Then
  \[
    \sum_{k=0}^\infty\left|\Pr[S_n=k] - \frac{\lambda_n^k e^{-\lambda_n}}{k!}\right| < 2\sum_{i=1}^n p_i^2.
  \]
  In particular, if $\lambda_n = \Theta(1)$ and $\max_{i \in [n]}p_i = o(1)$, then $\Pr[S_n = k] = \Theta(1)$ for $k= O(1)$.
\end{theorem}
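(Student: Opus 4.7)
The plan is to prove Le Cam's theorem via the classical coupling argument. I will first construct, for each $i$, a coupling of the Bernoulli variable $X_i$ with a Poisson variable $Y_i$ of parameter $p_i$ so that $\Pr[X_i \ne Y_i]$ is at most $p_i^2$; then I will paste these couplings independently across $i$ and compare $S_n = \sum_i X_i$ with $T_n = \sum_i Y_i$, which is Poisson with parameter $\lambda_n$. The bound on the $\ell_1$-distance between the distributions will then follow from the standard inequality that relates total variation distance to a coupling probability of disagreement.

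Concretely, for each index $i$ I would define a joint law of $(X_i, Y_i)$ by the following recipe. Put mass $1 - p_i$ on $(0,0)$, and mass $p_i e^{-p_i}$ on $(1,1)$; observe that together these already exceed $1 - p_i$ in the $X_i$-marginal and leave the right residual. The remaining probability mass is distributed so that the $X_i$-marginal is Bernoulli$(p_i)$ and the $Y_i$-marginal is Poisson$(p_i)$: specifically, put mass $e^{-p_i} - (1-p_i)$ on $(0, \ge 1)$ outcomes and mass $p_i - p_i e^{-p_i}$ on $(1, k)$ outcomes with $k \ne 1$, distributed to match marginals. A short computation using $1 - e^{-p_i} \le p_i$ gives
\[
\Pr[X_i \ne Y_i] = p_i(1 - e^{-p_i}) \le p_i^2.
\]
Taking the product coupling across the independent indices yields a joint law of $(S_n, T_n)$ whose marginals are the desired binomial-sum and Poisson$(\lambda_n)$ distributions, and a union bound gives $\Pr[S_n \ne T_n] \le \sum_i p_i^2$.

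From here, the main bound follows because for any coupling of two integer-valued random variables $S, T$,
\[
\sum_{k=0}^\infty |\Pr[S=k] - \Pr[T=k]| = 2\, d_{TV}(S,T) \le 2 \Pr[S \ne T].
\]
Applying this with $S = S_n$ and $T = T_n$ and using $\Pr[T_n = k] = e^{-\lambda_n}\lambda_n^k/k!$ gives the stated inequality with the factor $2$. For the ``in particular'' clause, under the assumptions $\lambda_n = \Theta(1)$ and $\max_i p_i = o(1)$, I would bound $\sum_i p_i^2 \le \max_i p_i \cdot \sum_i p_i = o(1) \cdot \Theta(1) = o(1)$, so each binomial-sum probability $\Pr[S_n = k]$ differs from $e^{-\lambda_n}\lambda_n^k/k!$ by $o(1)$; for any fixed $k = O(1)$ the Poisson probability is $\Theta(1)$, hence so is $\Pr[S_n = k]$.

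The only subtlety, and the place that needs the most care, is constructing the per-index coupling so that both marginals come out exactly right while achieving the $p_i^2$ disagreement bound; once that elementary coupling is in place, the rest is bookkeeping (independence to product-couple, union bound, and the TV-versus-coupling inequality). Everything else is a short calculation using $1 - e^{-x} \le x$ for $x \ge 0$.
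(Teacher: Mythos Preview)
The paper does not prove this statement at all: it is quoted verbatim from Le~Cam's original paper and used as a black box in the proof of Theorem~\ref{thm:clustering}. So there is no ``paper's own proof'' to compare against; your coupling argument is the standard one and, at the level of strategy, is correct and complete.

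That said, there is a slip in your explicit description of the per-index coupling. Once you have placed mass $1-p_i$ on $(0,0)$, the Bernoulli marginal at $0$ is already saturated, so you cannot put any further mass on $(0,\ge 1)$. The extra mass $e^{-p_i}-(1-p_i)$ needed for the Poisson marginal at $0$ must instead go on $(1,0)$, and the remaining Poisson mass for $k\ge 2$ goes on $(1,k)$. With that correction the coupling is
\[
(0,0):\,1-p_i,\quad (1,0):\,e^{-p_i}-(1-p_i),\quad (1,1):\,p_ie^{-p_i},\quad (1,k)_{k\ge 2}:\,\tfrac{p_i^k e^{-p_i}}{k!},
\]
and one checks directly that the total mass is $1$, both marginals are correct, and $\Pr[X_i\ne Y_i]=p_i(1-e^{-p_i})\le p_i^2$, exactly as you claimed. (As written, your allocation has total mass $e^{-p_i}+p_i>1$ and the wrong $X_i$-marginal at $0$.) The rest of your outline---product coupling, union bound, the identity $\sum_k|\Pr[S=k]-\Pr[T=k]|=2d_{TV}(S,T)\le 2\Pr[S\ne T]$, and the ``in particular'' clause---is fine.
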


Furthermore, for showing concentration of the clustering coefficient we need a concentration inequality which bounds large deviations taking into account some bad event $\mathcal{B}$. 
We use the following theorem.
\begin{theorem}[Theorem~3.3 in \cite{bringmann2015generalGIRG}]
\label{thm:concentration}
Let $X_1,\ldots,X_m$ be independent random variables over $\Omega_1, \ldots, \Omega_m$. Let $X = (X_1,\ldots,X_m)$, $\Omega = \prod_{k=1}^m \Omega_k$ and let $f\colon \Omega \to \mathbb{R}$ be measurable with $0 \le f(\omega) \le M$ for all $\omega \in \Omega$. Let $\mathcal{B} \subseteq \Omega$ such that 
for some $c > 0$ and for all $\omega \in \overline{\mathcal{B}}$, $\omega' \in \overline{\mathcal{B}}$ that differ in at most \emph{two} components we have
\[
|f(\omega)-f(\omega')| \le c.
\]
Then for all $t \ge 2 M \Pr[\mathcal B]$
\[\Pr\big[\vert f(X)-\Ex[f(X)] \vert \ge t \big] \le 2e^{-\frac{t^2}{32mc^2}}+(2\tfrac{mM}{c} + 1)\Pr[\mathcal{B}].\]
\end{theorem}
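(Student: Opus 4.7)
My plan is to prove this robust variant of McDiarmid's bounded-differences inequality by a smoothing-and-transfer argument: construct a surrogate function $g$ that agrees with $f$ on $\overline{\mathcal{B}}$ and is globally Lipschitz in each coordinate, apply the classical McDiarmid inequality to $g$, and then translate the bound back to $f$ at an additive price proportional to $\Pr[\mathcal{B}]$.

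For the surrogate I would take the infimal convolution of $f|_{\overline{\mathcal{B}}}$ with (a constant multiple of) Hamming distance $d_H$,
\[ g(\omega) := \min\Big\{M,\; \inf_{\omega^* \in \overline{\mathcal{B}}} \big(f(\omega^*) + \gamma\, d_H(\omega, \omega^*)\big)\Big\}, \]
with $\gamma$ a small constant multiple of $c$. Two properties have to be verified. First, $g(\omega) = f(\omega)$ for every $\omega \in \overline{\mathcal{B}}$: the infimum is attained at $\omega^* = \omega$, because any detour to another good point paying $\gamma k$ for a $k$-coordinate change is at least as expensive as the direct bound obtained by routing between good points. Here the two-component hypothesis is crucial—it allows us to bridge two good configurations by steps of up to two coordinates at cost $c$ each, absorbing any single intermediate "bad" configuration without leaving the regime in which the Lipschitz bound applies. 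Second, $g$ is $\gamma$-Lipschitz under any single-coordinate change by the standard infimal-convolution argument, and $0 \le g \le M$ by construction.

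With $g$ in hand, I would invoke classical McDiarmid to obtain $\Pr[|g(X) - \Ex g(X)| \ge s] \le 2\exp\bigl(-2s^2/(m\gamma^2)\bigr)$ for all $s \ge 0$. The transfer to $f$ rests on two easy estimates: $|\Ex f - \Ex g| \le M\Pr[\mathcal{B}]$, since $f = g$ on $\overline{\mathcal{B}}$ and both lie in $[0,M]$; and $\{f(X) \ne g(X)\} \subseteq \mathcal{B}$. A union bound then gives $\Pr[|f-\Ex f| \ge t] \le \Pr[|g-\Ex g| \ge t - M\Pr[\mathcal{B}]] + \Pr[\mathcal{B}]$, and the hypothesis $t \ge 2M\Pr[\mathcal{B}]$ forces the inner threshold to be at least $t/2$, producing the Gaussian term $2\exp(-t^2/(32mc^2))$ once $\gamma$ has been absorbed into the constants.

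The hard part will be justifying the additional $2mM/c \cdot \Pr[\mathcal{B}]$ factor in the additive error. I expect it to arise because the Lipschitz guarantee on $g$ deteriorates on configurations lying close to $\mathcal{B}$: a single-coordinate walk from $\overline{\mathcal{B}}$ into $\mathcal{B}$ may force $g$ to be smoothed over a longer route before returning to a good configuration, and in the worst case the smoothing has to traverse up to $M/c$ coordinates before $g$ reaches its bounds at $0$ or $M$. Bounding the probability of this thickened neighborhood of $\mathcal{B}$ by a union bound over the $m$ coordinates yields precisely the $mM/c \cdot \Pr[\mathcal{B}]$ correction. Verifying that the two-component hypothesis (rather than a one-component one) is exactly what makes this thickening argument go through is the technical heart of the proof; once this bookkeeping is set up, the rest is a routine combination of McDiarmid with the transfer estimates above.
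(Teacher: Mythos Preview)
The paper does not contain a proof of this statement: it is quoted verbatim as Theorem~3.3 of \cite{bringmann2015generalGIRG} and used as a black box in the proof of Theorem~\ref{thm:clustering}. There is therefore nothing in the present paper to compare your attempt against.

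That said, your sketch has an internal tension worth flagging before you write it up. You claim both that the infimal convolution $g$ equals $f$ on all of $\overline{\mathcal{B}}$ \emph{and} that $g$ is globally $\gamma$-Lipschitz with $\gamma=\Theta(c)$. These two properties are in conflict: for $\omega,\omega'\in\overline{\mathcal{B}}$ differing in $k$ coordinates, the two-component hypothesis lets you bridge over an \emph{isolated} bad intermediate point, but it gives you nothing when two or more consecutive intermediate points along a coordinate path lie in $\mathcal{B}$. In that situation there is no a priori bound on $|f(\omega)-f(\omega')|$ in terms of $ck$, so the infimum defining $g(\omega)$ may well be attained at some distant $\omega^*$ and be strictly below $f(\omega)$. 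Your final paragraph seems to sense this, since you speak of the Lipschitz guarantee ``deteriorating'' near $\mathcal{B}$, but that contradicts your earlier assertion that $g=f$ on $\overline{\mathcal{B}}$ exactly.

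The usual way out (and the route taken in the cited reference) is to accept that $g$ may differ from $f$ on a thin boundary layer of $\overline{\mathcal{B}}$ and to control the probability of that layer. Concretely, one works not with the raw complement $\overline{\mathcal{B}}$ but with the set of $\omega$ that remain in $\overline{\mathcal{B}}$ under \emph{every} single-coordinate change; on that smaller set the one-step bounded-differences condition holds unconditionally, and a union bound over the $m$ coordinates gives the extra factor $m$ in the error term. The $M/c$ then enters because the smoothing can only move $g$ by at most $M$, requiring at most $M/c$ coordinate steps. If you restructure the argument around this refined good set rather than $\overline{\mathcal{B}}$ itself, the two halves of your sketch become consistent and the constants fall out as stated.
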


\begin{proof}[Proof of Theorem~\ref{thm:clustering}]
Let $V' := V_{\le n^{1/8}}$ and $G'=G[V']$. 
We first show that for the subgraph $G'$ we have $\Ex[\cc(G')] = \Omega(1)$. 
Let $w_0 = \Theta(1)$ be a weight such that there are linearly many vertices with weight at most $w_0$. 

Since $\cc(G') = \frac{1}{|V'|}\sum_{v\in V'} \cc_{G'}(v)=\Theta(\frac{1}{n}\sum_{v\in V'} \cc_{G'}(v))$, it suffices to show that a vertex $v$ of weight at most $w_0$ fulfills $\Ex[\cc_{G'}(v)] = \Omega(1)$.  
For this we consider the set $\barV:=V_{\leq w_0}$ of vertices of weight at most $w_0$. Fix such a vertex $v$ at position $\x{v} \in \Space$, and let $U(v)$ be the ball around $\x{v}$ with radius $cn^{-1/d}$ for a sufficiently small constant $c>0$. Clearly the volume of this ball is $\Theta(n^{-1})$. Thus, the expected number of vertices in $\barV$ with position in $U(v)$ is $\Theta(1)$. Consider the event~$\mathcal{E} = \mathcal{E}(v)$ that the following three properties hold.
\begin{enumerate}[(i)]
\item $v$ has at least two neighbors in $\barV$ with positions in $U(v)$.
\item $v$ does not have neighbors in $\barV$ with positions in $\Space\setminus U(v)$.
\item $v$ does not have neighbors in $[n]\setminus \barV$.
\end{enumerate}

We claim that $\Pr[\mathcal{E}] = \Theta(1)$. For (i), note that the expected number of vertices in $\barV$ with position in $U(v)$ is $\Theta(1)$. Since the position of every vertex is independent, by Le Cam's theorem (Theorem~\ref{thm:lecam}) the probability that there are at least two vertices in $\barV$ with position in $U(v)$ is $\Theta(1)$. Moreover, by (\ref{eq:puv}) and (\ref{eq:puv2}), for each such vertex the probability to connect to $v$ is $\Theta(1)$ (in the case $\alpha=\infty$ this follows because we have chosen $c$ small enough), so (i) holds with probability $\Theta(1)$. For (ii), for any vertex $u \in \barV\setminus\{v\}$, we can bound 
\[
\Pr[v \sim u, \x{v} \in \Space\setminus U(v)] \leq \Pr[v \sim u] = \Theta(1/n).
\]
Hence, by Le Cam's theorem, (ii) holds with probability~$\Theta(1)$, and this probability can only increase if we condition on (i). Finally, for every fixed position $x$, (iii) holds independently of (i) or (ii) with probability $\Theta(1)$, again by Le Cam's theorem. This proves the claim that $\Pr[\mathcal{E}] = \Theta(1)$. 

Conditioned on $\mathcal{E}$, let $v_1$ and $v_2$ be two random neighbors of $v$. Then $\x{v_1}, \x{v_2} \in U(v)$, and $\w{v_1},\w{v_2} \leq w_0$. By the triangle inequality we obtain $\|\x{v_1}-\x{v_2}\| \le 2c n^{-1/d}$. For $c$ sufficiently small, we deduce from (\ref{eq:puv}) and (\ref{eq:puv2}) that $v_1 \sim v_2$ holds with probability $\Theta(1)$. Thus we have shown that $\Ex[\cc_{G'}(v)\mid \mathcal{E}(v)] = \Omega(1)$ for all $v\in \barV$. Since $\Pr[\mathcal{E}(v)] = \Theta(1)$, this proves $\Ex[\cc_{G'}(v)] = \Omega(1)$ for all $v\in \barV$, which implies $\Ex[\cc(G')]=\Omega(1)$. 

Next we show that $\cc(G')$ is concentrated around its expected value. We aim to do this via an Azuma-type inequality with error event, as given in Theorem~\ref{thm:concentration}. Note that in our graph model, we apply two different randomized processes to create the geometric graph. First, for every vertex $v$ we choose $\x{v}$ independently at random. Afterwards, every edge is present with some probability $p_{uv}$. Recall that we can apply the concentration bound only if all random variables are independent, which is not the case so far.

The $n$ random variables $\x{1},\ldots, \x{n}$ define the vertex set and the edge probabilities $p_{uv}$. We introduce a second set of $n-1$ independent random variables. For every $u \in \{2,\ldots,n\}$ we let $Y_u := (Y_u^1, \ldots, Y_u^{u-1})$, where every $Y_u^v$ is independently and uniformly at random from $[0,1]$. Then for $v<u$, we include the edge $\{u,v\}$ in the graph if and only if
\[p_{u v} > Y_u^v.\]
We observe that indeed this implies $\Pr[u \sim v \mid \x{u},\x{v}] = p_{u v}(\x u, \x v)$ as desired. Furthermore, the $2n-1$ random variables $\x{1},\ldots,\x{n},Y_2, \ldots Y_n$ are independent and define a probability space $\Omega$. Then $G$, $G'$ and $\cc(G')$ are all random variables on $\Omega$. Consider the following bad event:
\begin{equation} \label{eq:badevent}
\mathcal{B} := \{\omega \in \Omega: \text{ the maximum degree in  }G'(\omega) \text{ is at least }n^{1/4}\}.
\end{equation}
We observe that $\Pr[\mathcal{B}]=n^{-\omega(1)}$, since whp every vertex $v \in V'$ has degree at most $O(\w{v} + \log^2 n) = o(n^{1/4})$ by Lemma~\ref{lem:largevertices}. Let $\omega,\omega' \in \overline{\mathcal{B}}$ such that they differ in at most two coordinates. We observe that changing one coordinate $\x{i}$ or $Y_i$ can influence only the local clustering coefficients of $i$ itself and of the vertices which are neighbors of $i$ either before or after the coordinate change. Unless $\mathcal{B}$ holds, every vertex in $G'$ has degree at most $n^{1/4}$ and therefore every coordinate of the probability space has effect at most $2n^{1/4} / n$ onto $\cc(G')$. Thus, we obtain $|\cc(G'(\omega))-\cc(G'(\omega'))| \le 4n^{-3/4}$. We apply Theorem~\ref{thm:concentration} with $t=n^{-1/8}$ and $c := 4n^{-3/4}$ and deduce
\[\Pr\left[|\cc(G')-\Ex[\cc(G')]| \ge t \right] \le 2e^{-\Omega(t^2/n^{1-3/2})} + n^{O(1)}\Pr[\mathcal{B}]  = n^{-\omega(1)},\]
where we used $\frac{t^2}{n^{1-3/2}}=n^{1/4}$. Hence, we have $\cc(G')=(1+o(1))\Ex[\cc(G')]=\Omega(1)$ whp.

In order to compare $\cc(G)$ with $\cc(G')$, we observe that every additional edge $e=\{u,v\}$ which we add to $G'$ can decrease only $\cc(u)$ and $\cc(v)$, both by at most one. Thus,
\[\cc(G) \ge \frac{|V'|}{n}\cc(G') - \frac{2}{n}\sum_{v \in V \setminus V'} \deg(v).\]
By Lemma~\ref{lem:largevertices}, 
$\frac{2}{n}\sum_{v \in V \setminus V'} \deg(v)=\Theta(n^{-1}\W_{> n^{1/8}})=o(1)$
whp. Together with $|V'|=\Theta(n)$, this concludes the argument and proves that $\cc(G)=\Omega(1)$ whp.
\end{proof}

\section{Stability of the Giant, Entropy, and Compression Algorithm}
\label{sec:entropy}

In this section we prove Theorems~\ref{thm:stability} and~\ref{thm:entropy}. More precisely, we show that whp the graph (and its giant) has separators of sublinear size, and we make use of these small separators to devise a compression algorithm that can store the graph using a linear number of bits in expectation. Note that the compression maintains only the graph up to isomorphism, not the underlying geometry. The main idea is to enumerate the vertices in an ordering that reflects the geometry, and then storing for each vertex $i$ the differences $i-j$ for all neighbors $j$ of $i$. We start with a technical lemma that gives the number of edges intersecting an axis-parallel, regular grid. (For $\gamma > 0$ with $1/\gamma \in \N$, the axis-parallel, regular grid with side length $\gamma$ is the union of all $d-1$-dimensional hyperplanes that are orthogonal to an axis and that are in distance $k\gamma$ from the origin for a $k\in \Z$.) Both the existence of small separators and the efficiency of the compression algorithm follow easily from that formula.

\begin{lemma}\label{lem:grid}
Let $\eta > 0$. Let $1 \leq \mu \leq n^{1/d}$ be an integer, and consider an axis-parallel, regular grid with side length $1/\mu$ on $\Space$. Then the expected number of edges intersected by the grid is at most $O(n\cdot (n/\mu^d)^{2-\beta+\eta} + (n^{2-\alpha}\mu^{d(\alpha -1)}+n^{1-1/d}\mu)(1+\log(n/\mu^d)))$.
\end{lemma}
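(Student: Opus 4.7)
The plan is to bound, for each pair $(u,v)$, the probability that the edge $\{u,v\}$ is present and is cut by the grid, and then to aggregate over pairs grouped by the weight product $\w{u}\w{v}$.

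\textbf{Step 1 (per-pair probability).} I would condition on the displacement $z := \x{v} - \x{u}$, which is uniform on the torus by the independence of positions. For a fixed $z$ and uniform $\x{u}$, the grid separates $u$ from $v$ in coordinate $i$ with probability exactly $\min\{\mu|z_i|_C,1\}$, so a union bound over the $d$ coordinates gives $\Pr[\text{separated}\mid z] \le d\cdot \min\{\mu\|z\|,1\}$. Combining with the edge-probability condition (EP1) and the fact that the $\infty$-sphere of radius $r$ in $\Space$ has surface measure $\Theta(r^{d-1})$, I obtain the per-pair bound
\[
I(u,v) := \Pr[u\sim v \text{ and separated}] = O\!\left(\int_0^{1/2} p_{uv}(r)\cdot \min\{\mu r,1\}\cdot r^{d-1}\,dr\right).
\]

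\textbf{Step 2 (evaluate the integral).} Writing $\lambda := (\w{u}\w{v}/\W)^{1/d}$, so that $p_{uv}(r) = O(\min\{(\lambda/r)^{\alpha d},1\})$, I would split $[0,1/2]$ into at most three regimes according to the thresholds $\lambda$ and $1/\mu$. In the \emph{dense} case $\lambda \ge 1/\mu$ (equivalently $\w{u}\w{v}\ge\W/\mu^d$), all three regimes contribute at most $O(\lambda^d)$ and we get $I(u,v)=O(\w{u}\w{v}/\W)$. In the \emph{sparse} case $\lambda < 1/\mu$, straightforward integration yields
\[
I(u,v) = O\!\left(\mu\lambda^{d+1} + \lambda^{\alpha d}\mu^{d(\alpha-1)}\right),
\]
with an extra factor $\log(1/(\mu\lambda))$ appearing in the borderline regime $\alpha = 1+1/d$ (this is the source of the $1+\log(n/\mu^d)$ factor once one aggregates over dyadic weight layers).

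\textbf{Step 3 (aggregate over pairs).} Dense pairs contribute $\sum_{\w{u}\w{v}\ge \W/\mu^d} \w{u}\w{v}/\W$. A standard two-fold integration against the power-law density (PL2) evaluates this to $O(n(n/\mu^d)^{2-\beta+\eta})$, where the $\eta$ slack absorbs the logarithmic slack in (PL2); this is the first term of the bound. For sparse pairs, the first summand aggregates to $\mu\,\W^{-1-1/d}\bigl(\sum_v \w{v}^{1+1/d}\bigr)^2$. Using (PL2), $\sum_v \w{v}^{1+1/d}=O(n)$ when $\beta>2+1/d$, giving $O(n^{1-1/d}\mu)$; in the remaining regime the sparsity restriction $\w{u}\w{v}\le\W/\mu^d$ furnishes an effective weight cap of order $n/\mu^d$, and the resulting overflow is absorbed into the first term. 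The second summand aggregates analogously to $O(n^{2-\alpha}\mu^{d(\alpha-1)})$ when $\alpha<\beta-1$, and for $\alpha\ge\beta-1$ the overflow is again dominated by the first term.

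\textbf{Main obstacle.} The integration itself is mechanical; the delicate part is the sparse-pair bookkeeping. The per-pair integral picks up a logarithm precisely at the critical exponent $\alpha = 1+1/d$, and the power-sums $\sum \w{}^s$ undergo a phase transition at $s=\beta-1$ (from $\Theta(n)$ to an ``extreme value'' regime driven by $\barw$). Verifying uniformly across all parameter windows that the two overflow regimes are indeed dominated by $n(n/\mu^d)^{2-\beta+\eta}$, so that the three stated terms truly form an upper bound, is the step that requires the most care.
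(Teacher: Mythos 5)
Your proposal follows essentially the same route as the paper: the same per-pair integral $\int p_{uv}(r)\min\{\mu r,1\}r^{d-1}dr$ with the shift/union-bound argument for the cut probability, the same dense/sparse split at $\lambda=1/\mu$ with the logarithm arising at the critical exponent $\alpha=1+1/d$, and the same aggregation via power-law weight sums with the phase transition at exponent $\beta-1$ (the paper merely packages your two sparse-regime terms into one via $\tilde\alpha=\min\{\alpha,1+1/d\}$, and also spells out the $\alpha=\infty$ case, which you omit). The steps you flag as delicate are exactly the ones the paper works out in detail, and your claimed bounds and case boundaries all match.
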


We defer the proof of Lemma~\ref{lem:grid} to the end of this chapter. For the regime $2<\beta<3$, we immediately obtain from the statement that there is a sublinear set of vertices that disconnects the giant component.

\begin{proof}[Proof of Theorem~\ref{thm:stability}]
By Lemma~\ref{lem:grid} for $\mu = 2$, there are $m = O(n^{\max\{3-\beta,2-\alpha,1-1/d\}+\eta})$ edges intersecting a grid of side length $1/2$ in expectation, and two hyperplanes of this grid suffice to split $\Space$ into two halves. Whp there are $\Omega(n)$ vertices in each grid cell, and whp the weights of the vertices in each half satisfy a power law. If $2<\beta<3$ then whp each halfspace gives rise to a giant component of linear size, this follows from more general considerations in~\cite{bringmann2015generalGIRG}.  Hence, whp the two hyperplanes split the giant of $G$ into two parts of linear size, although almost surely they only intersect $n^{1-\Omega(1)}$ edges. Finally, since the bound $m = O(n^{\max\{3-\beta,2-\alpha,1-1/d\}+\eta})$ holds for all $\eta > 0$ we may conclude that it also holds with $\eta$ replaced by $o(1)$.
\end{proof}

\paragraph{Compression algorithm:} With Lemma~\ref{lem:grid} at hand, we are ready to give a compression algorithm that stores the graph with $O(n)$ bits, i.e., with $O(1)$ bits per edge, proving Theorem~\ref{thm:entropy}. 
We remark that our result does not directly follow from the general compression scheme on graphs with small separators in~\cite{blandford2003compact}, since our graphs only have small separators in expectation, in particular, small subgraphs of size $O(\sqrt{\log n})$ can form expanders and thus not have small separators. However, our algorithm loosely follows their algorithm as well as the practical compression scheme of~\cite{BoldiV03}, see also~\cite{chierichetti2009compressing}.

We first enumerate the vertices as follows. 
Recall the definition of cells from Section~\ref{subsec:notation}, and consider all cells of volume $2^{-\ell_0 d}$, where $\ell_0 := \lfloor \log n /d\rfloor$. Note that the boundaries of these cells induce a grid as in Lemma~\ref{lem:grid}.
Since each such cell has volume $\Theta(1/n)$, the expected number of vertices in each cell is constant. We fix a geometric ordering of these cells as in Lemma~\ref{lem:geoorder}, and we enumerate the vertices in the order of the cells, breaking ties (between vertices in the same cell) arbitrarily. For the rest of the section we will assume that the vertices are enumerated in this way, i.e., we identify $V = [n]$, where $i \in [n]$ refers to the vertex with index $i$.

Having enumerated the vertices, for each vertex $i \in [n]$ we store a block of $1+\deg(i)$ sub-blocks. The first sub-block consists of a single dummy bit (to avoid empty sequences arising from isolated vertices). In the other $\deg(i)$ sub-blocks we store the differences $i-j$ using $\log_2 |i-j|+O(1)$ bits, where $j$ runs through all neighbors of $i$. We assume that the information for all vertices is stored in a big successive block $B$ in the memory. Moreover, we create two more blocks $B_V$ and $B_E$ of the same length. Both $B_V$ and $B_E$ have a one-bit whenever the corresponding bit in $B$ is the first bit of the block of a vertex, and $B_E$ has also a one-bit whenever the corresponding bit in $B$ is the first bit of an edge (i.e., the first bit encoding a difference $i-j$). All other bits in $B_V$ and $B_E$ are zero.

It is clear that with the data above the graph is determined. To handle queries efficiently, we replace $B_V$ and $B_E$ each with a rank/select data structure. This data structure allows to handle in constant time queries of the form ``Rank($b$)'', which returns the number of one-bits up to position~$b$, and ``Select($i$)'', which returns the position of the $i$-th one-bit~\cite{jacobson1989space,clark1996efficient,puatracscu2008succincter}. Given $i,s \in \N$, we can find the index of the $s$-th neighbor of $i$ in constant time by Algorithm~\ref{alg:findneighbor}, and the degree of $i$ by Algorithm~\ref{alg:finddegree}. In particular, it is also possible for Algorithm~\ref{alg:findneighbor} to first check whether $s \leq \deg(i)$. 

\begin{algorithm}
\caption{Finding the $s$-th neighbor of vertex $i$}\label{alg:findneighbor}
\begin{algorithmic}[1]
  \State $b := \Select(i,B_V)$ 
   \Comment{starting position of vertex $i$}
  \State $k := \Rank(b,B_E)$ \Comment{number of edges and vertices before $b$}
  \State $b_1:= \Select(k+s,B_E)$ \Comment{starting position of $s$-th edge of vertex $i$}
  \State $b_2 := \Select(k+s+1,B_E)$ \Comment{bit after ending position of $s$-th edge of vertex $i$}
  \State return $B[b_1:b_2-1]$ \Comment{block that stores $s$-th edge of vertex $i$}
 \end{algorithmic}
\end{algorithm}

\begin{algorithm}
\caption{Finding the degree of vertex $i$}\label{alg:finddegree}
\begin{algorithmic}[1]
  \State $b := \Select(i,B_V)$ \Comment{starting position of vertex $i$}
  \State $b' := \Select(i+1,B_V)$ \Comment{starting position of vertex $i+1$}
  \State $\Delta := \Rank(b',B_E)-\Rank(b,B_E)$ \Comment{block in $B_E$ contains $\deg(i)+1$ one-bits}
  \State return $\Delta - 1$
 \end{algorithmic}
\end{algorithm}

We need to show that the data structure needs $O(n)$ bits in expectation. There are $n$ dummy bits, so we must show that we require $O(n)$ bits to store all differences $i-j$, where $ij$ runs through all edges of the graph. We need $2\log_2|i-j|+O(1)$ bits for each edge, as we store the edge $ij$ both in the block of $i$ and in the block of $j$. The $O(1)$ terms sum up to $O(|E|)$, which is $O(n)$ in expectation. Thus, it remains to prove the following. 
\begin{lemma}\label{lem:compression}
Let the vertices in $V$ be enumerated by the  geometric ordering. Then,
\begin{equation}\label{eq:compression}
\Ex\left[\sum_{ij \in E} \log(|i-j|)\right] = O(n).
\end{equation}
\end{lemma}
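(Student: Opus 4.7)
The plan is to exploit the cell hierarchy that underlies the geometric ordering. For each edge $uv \in E$ let $L(u,v)$ denote the largest level $\ell \le \ell_0$ for which $u$ and $v$ lie in the same level-$\ell$ cell, write $C^*(u,v)$ for that cell, and set $N^* := N_{C^*(u,v)}$. By Lemma~\ref{lem:geoorder}, all vertices in $C^*(u,v)$ receive ranks forming a contiguous block of length $N^*$, so $|i-j| \le N^*$ when $i, j$ denote the ranks of $u, v$. Splitting the sum in \eqref{eq:compression} according to $L(u,v)$, what is left to control is
$\sum_{\ell=0}^{\ell_0} \Ex\!\left[\sum_{uv \in E,\, L(u,v)=\ell} \log N^*\right].$

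For a fixed pair $(u,v)$, conditioning on $\x u, \x v$ lying in some level-$\ell$ cell $C$, the remaining $n-2$ vertices independently fall into $C$ with probability $2^{-\ell d}$, so $N^*$ has the distribution of $2 + \mathrm{Bin}(n-2, 2^{-\ell d})$. This random variable depends only on the positions of vertices other than $u$ and $v$, so it is independent both of the coin flip deciding whether $uv \in E$ and of the event $L(u,v) = \ell$. Jensen's inequality then yields $\Ex[\log N^* \mid \x u, \x v \in C] \le \log\!\left(2 + (n-2) 2^{-\ell d}\right) = O(d(\ell_0-\ell)+1)$, and pulling this bound outside the sum gives
$\sum_{\ell=0}^{\ell_0} \Ex\!\left[\sum_{uv \in E,\, L(u,v)=\ell} \log N^*\right] \le \sum_{\ell=0}^{\ell_0} O(d(\ell_0-\ell)+1) \cdot M_\ell,$
where $M_\ell := \Ex\!\left[|\{uv \in E : L(u,v)=\ell\}|\right]$.

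The quantities $M_\ell$ are bounded by Lemma~\ref{lem:grid}. When $\ell \le \ell_0 - 1$, the condition $L(u,v) = \ell$ forces $u$ and $v$ into distinct cells at level $\ell+1$, so the edge $uv$ must be intersected by the axis-parallel grid of side length $2^{-(\ell+1)}$; hence $M_\ell$ is bounded by Lemma~\ref{lem:grid} applied with $\mu = 2^{\ell+1}$. The boundary case $\ell = \ell_0$ is handled separately by the crude bound $M_{\ell_0} \le \Ex[|E|] = O(n)$, contributing $O(n)$ to the total.

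For the final summation, substitute $\mu = 2^{\ell+1}$ into Lemma~\ref{lem:grid} and change variables to $y := \ell_0 - \ell$. After simplification, each of the three terms in the grid-crossing bound takes the shape $n \cdot \mathrm{poly}(y) \cdot r^y$, where the base $r$ lies in $\{2^{-d(\beta-2-\eta)},\, 2^{-d(\alpha-1)},\, 2^{-1}\}$ and is strictly less than $1$ thanks to the hypotheses $\beta > 2$, $\alpha > 1$, $d \ge 1$; the polynomial prefactor (of degree at most two) combines the Jensen factor $O(d(\ell_0-\ell)+1)$ with the $1+\log(n/\mu^d) = O(yd+1)$ factor appearing inside Lemma~\ref{lem:grid}. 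Since each series $\sum_{y \ge 1} \mathrm{poly}(y) \cdot r^y$ evaluates to a constant, summing the three contributions gives $O(n)$, establishing \eqref{eq:compression}. I expect the main technical hurdle to be precisely this last bookkeeping step --- pairing each grid-lemma term with its geometric decay rate and checking that the polynomial prefactors are dominated after multiplication.
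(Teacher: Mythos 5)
Your proposal is correct and follows essentially the same route as the paper's proof: the level $L(u,v)$ at which an edge is "resolved" corresponds exactly to the paper's decomposition into $\mathcal{E}_{\ell+1}\setminus\mathcal{E}_\ell$, the bound $|i-j|\le N^*$ with Jensen applied to $\log N^*$ matches the paper's $T_\ell \le 3n2^{-\ell d}$ step, and the final geometric summation against Lemma~\ref{lem:grid} is the same computation (the paper phrases it as the sum being dominated by its last term $E_{\ell_0+1}\log(3n2^{-d\ell_0}) = O(n)$ rather than via the change of variables $y=\ell_0-\ell$).
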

\begin{proof}
We abbreviate the expectation in (\ref{eq:compression}) by $R$.
Note that the geometric ordering puts all the vertices that are in the same cell of a $2^{-\ell}$-grid in a consecutive block, for all $1 \leq \ell \leq \ell_0$. Therefore, if $e = ij$ does not intersect the $2^{-\ell}$-grid then $|i-j| \leq \#\{\text{vertices in the cell of $e$}\}$. 
For $1\leq \ell \leq \ell_0$, let $\mathcal{E}_\ell$ be the set of edges intersecting the $2^{-\ell}$-grid. For convenience, let $\mathcal{E}_{0} := \emptyset$, and let $\mathcal{E}_{\ell_0+1} := E$ be the set of all edges. Then, using concavity of $\log$ in the third step,
\begin{align*}
R & \leq \Ex\left[\sum_{\ell=0}^{\ell_0}\sum_{e=ij\in \mathcal{E}_{\ell+1}\setminus \mathcal{E}_{\ell}}\log(\#\{\text{vertices in the cell of $e$}\}) \right]\\
& = \sum_{\ell=0}^{\ell_0}\sum_{u<v}\Pr\left[uv\in \mathcal{E}_{\ell+1}\setminus \mathcal{E}_{\ell}\right]\Ex\left[\log(\#\{\text{vertices in the cell of $u$}\}) \mid uv \in \mathcal{E}_{\ell+1}\setminus \mathcal{E}_{\ell}\right] \\
& \leq \sum_{\ell=0}^{\ell_0}\sum_{u<v}\Pr\left[uv\in \mathcal{E}_{\ell+1}\right]\log\bigg(\underbrace{\Ex\left[\#\{\text{vertices in the cell of $u$}\} \mid uv \in \mathcal{E}_{\ell+1}\setminus \mathcal{E}_{\ell}\right]}_{=: T_\ell}\bigg)
\end{align*}

The term $T_\ell$ is at most $T_\ell \leq 2+(n-2)2^{-\ell d} \leq 3n2^{-\ell d}$ for $\ell \leq \ell_0$ (where we count 2 for $u$ and $v$ and use independence of the other vertex positions). Thus it remains to show that $\Ex\big[ \sum_{\ell=0}^{\ell_0}|\mathcal{E}_{\ell+1}|\log(3n2^{-\ell d}) \big] = O(n)$. Let $\eta>0$ be sufficiently small. From Lemma~\ref{lem:grid} we know that $\Ex[|\mathcal{E}_{\ell}|] \leq E_\ell$, where $E_\ell= n \cdot (2^{d\ell}/n)^{\beta-2-\eta} + (n^{2- \alpha}2^{d\ell(\alpha -1)}+ n^{1-1/d}2^{\ell})(1+\log(n2^{-d\ell}))$. Since $E_\ell$ increases exponentially in~$\ell$, we obtain
\[
\Ex\left[\sum_{\ell=0}^{\ell_0}|\mathcal{E}_{\ell+1}|\log(3n2^{-d\ell})\right] \leq O\left(\sum_{\ell=1}^{\ell_0+1}E_\ell \log(3n2^{-d\ell+d})\right) = O(E_{\ell_0+1}\log(3n2^{-d\ell_0})) = O(n),
\]
where the last equality follows since $1/n \leq 2^{-d\ell_0} \leq O(1/n)$ by our choice of $\ell_0$. This proves the lemma, and hence shows that we need $O(n)$ bits in expectation to store the graph.
\end{proof}

This concludes the proof of Theorem~\ref{thm:entropy}, and it only remains to verify Lemma~\ref{lem:grid}. In the proof of this lemma, we will use the following technical statement, which is a consequence of Fubini's theorem and allows us to replace certain sums by integrals.

\begin{lemma}\label{lem:weightsums}
Let $f:\R\to\R$ be a continuously differentiable function. Then for any weights $0 \leq w_0 \leq w_1$,
\[
\sum_{v \in V, w_0 \leq \w{v} \leq w_1} f(\w{v}) \;=\; f(w_0)\cdot |V_{\geq w_0}| \;-\; f(w_1) \cdot |V_{> w_1}| \;+\; \int_{w_0}^{w_1} f'(w) \cdot |V_{\geq w}|  dw.
\]
In particular, if $f(0)=0$, then
\[\sum_{v \in V} f(\w{v}) = \int_{0}^{w_1} |V_{\geq w}| f'(w) dw=\int_{0}^{\infty} |V_{\geq w}| f'(w) dw.\]
\end{lemma}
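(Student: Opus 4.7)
The plan is to prove this as a straightforward consequence of the fundamental theorem of calculus combined with Fubini's theorem (or equivalently, swapping a finite sum with an integral, which requires no subtle measure-theoretic hypothesis).

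First I would fix $v \in V$ with $w_0 \le \w{v} \le w_1$ and write, by the fundamental theorem of calculus,
\[
f(\w{v}) = f(w_0) + \int_{w_0}^{\w{v}} f'(w)\,dw = f(w_0) + \int_{w_0}^{w_1} f'(w)\,\mathbf{1}[w \le \w{v}]\,dw.
\]
Summing this identity over all $v \in V$ with $w_0 \le \w{v} \le w_1$, the constant term contributes $f(w_0)\cdot(|V_{\ge w_0}| - |V_{> w_1}|)$, since the number of such $v$ is $|V_{\ge w_0}| - |V_{> w_1}|$. For the integral term, I swap the finite sum and the integral (Fubini, or just linearity of Riemann integration since the sum has finitely many terms), obtaining
\[
\int_{w_0}^{w_1} f'(w) \cdot \#\{v : w \le \w{v} \le w_1\}\,dw = \int_{w_0}^{w_1} f'(w)\,(|V_{\ge w}| - |V_{> w_1}|)\,dw,
\]
where I used that for $w \in [w_0, w_1]$, the indicator $\mathbf{1}[w_0 \le \w{v} \le w_1]\cdot\mathbf{1}[w \le \w{v}]$ equals $\mathbf{1}[w \le \w{v} \le w_1]$.

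Next I would split the last integral as $\int_{w_0}^{w_1} f'(w)|V_{\ge w}|\,dw - |V_{> w_1}|\int_{w_0}^{w_1} f'(w)\,dw$, and apply the fundamental theorem of calculus again to evaluate $\int_{w_0}^{w_1} f'(w)\,dw = f(w_1) - f(w_0)$. Adding the two contributions, the terms involving $f(w_0)|V_{> w_1}|$ cancel and I obtain exactly
\[
\sum_{v \in V,\; w_0 \le \w{v} \le w_1} f(\w{v}) = f(w_0)\,|V_{\ge w_0}| - f(w_1)\,|V_{> w_1}| + \int_{w_0}^{w_1} f'(w)\,|V_{\ge w}|\,dw,
\]
which is the displayed formula.

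For the ``in particular'' part, I would take $w_0 = 0$ and choose $w_1 \ge \wmax$, so that $|V_{>w_1}| = 0$ and the sum on the left runs over all $v \in V$; together with $f(0) = 0$ this collapses the formula to $\sum_{v \in V} f(\w{v}) = \int_0^{w_1} f'(w)|V_{\ge w}|\,dw$. The second equality $\int_0^{w_1} = \int_0^{\infty}$ then follows because $|V_{\ge w}| = 0$ for $w > \wmax$, so the integrand vanishes on $(w_1,\infty)$. I do not expect any real obstacle here; the only thing worth double-checking is the bookkeeping of the $|V_{>w_1}|$ terms so that the middle cancellation happens cleanly.
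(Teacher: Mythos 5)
Your proof is correct and follows essentially the same route as the paper's: expand $f(\w{v})$ via the fundamental theorem of calculus and exchange the (finite) sum with the integral so that the vertex count $|V_{\geq w}|$ appears, then tidy the boundary terms. Your two small streamlinings --- anchoring the FTC at $w_0$ rather than at $0$, and justifying the exchange as linearity over a finite sum rather than via Fubini's theorem for a sum of Dirac measures --- spare you the paper's extra split of the integral over $[0,w_0]$ and its integrability discussion, but the underlying computation is identical.
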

\begin{proof}
Let $\nu$ be the sum of all Dirac measures given by the vertex weights between $w_0$ and $w_1$, i.e., for every set $A \subseteq \R$ we put $\nu(A):=|\{v \in V: \w{v} \in A, w_0 \le \w{v} \le w_1\}|$. Then
\begin{align*}
\sum_{v \in V, w_0 \leq \w{v} \leq w_1} f(\w{v}) &\;=\; \int_0^{\wmax} f(w) d\nu(w) \;=\; \int_0^{\wmax} \int_0^w  f'(x) dx d\nu(w) \;+\; \int_0^{\wmax} f(0) d\nu(w)\\ &\;=\; \int_0^{\wmax} \int_0^{\infty}  f'(x) \cdot \mathds{1}_{\{x \le w\}} dx d\nu(w) \;+\; f(0) \cdot \vert V_{\geq w_0} \setminus V_{>w_1}\vert.
\end{align*}
Notice that $[0,\wmax]$ is a compact set and $f'(x)$ is continuous by assumption. Hence the function $|f'(x)\cdot \mathds{1}_{\{x \le w\}}|$ is globally bounded on $[0,\wmax]$ and always zero for $x > \wmax$. Thus, $f'(x)\cdot \mathds{1}_{\{x \le w\}}$ is integrable and we can apply Fubini's theorem (see, e.g., \cite{evansmeasure}), which yields
\begin{align*}
\sum_{v \in V, w_0 \leq \w{v} \leq w_1} f(\w{v}) & \;=\; \int_0^{\infty} f'(x) \int_0^{\wmax}\mathds{1}_{\{w \ge x\}}d\nu(w) dx \;+\; f(0) \cdot \vert V_{\geq w_0} \setminus V_{>w_1}\vert \\
& \;=\; \int_0^{\infty} f'(x) \cdot \vert V_{\geq \max\{x,w_0\}} \setminus V_{>w_1}\vert dx \;+\; f(0) \cdot \vert V_{\geq w_0} \setminus V_{>w_1}\vert.
\end{align*}
The first summand (i.e.\ the integral) can be rewritten as
\[\int_0^{w_0} f'(x) \cdot \vert V_{\geq w_0} \setminus V_{>w_1}\vert dx \;\;+\; \int_{w_0}^{w_1} f'(x) \cdot \vert V_{\geq x} \setminus V_{>w_1}\vert dx ,\]
and then we obtain the first statement by using $\vert V_{\geq w_0} \setminus V_{>w_1}\vert=\vert V_{\geq w_0} \vert-\vert V_{>w_1}\vert$, calculating the integrals, and combining the resulting terms. The second statement follows directly by choosing $w_0=0$ and $w_1 > \wmax$.
%
\end{proof}

\begin{proof}[Proof of Lemma~\ref{lem:grid}]
We first observe that we can assume $\mu \ge 2$ as this implies the statement for smaller $\mu$ immediately. Thus let $2\le \mu \le n^{1/d}$, and consider an axis-parallel, regular grid with side length $1/\mu$ on $\Space$.
For $u,v \in V$, let $\rho_{u v}$ be the probability that the edge $uv$ exists and cuts the grid. Let $r_{\max} := 1/2$ be the diameter of $\Space$. We write
\begin{align} \label{eq:rhouv}
 \rho_{u v} = \int_{0}^{r_{\max}} \Pr[\|\x u - \x v\| = r] \cdot p_{uv}(r) \cdot \Pr[\x u, \x v \text{ in different cells of $\mu$-grid}]\, dr. 
\end{align}
Observe that $u$ and $v$ have distance $r$ with probability density $\Pr[\|\x u - \x v\| = r] = O(r^{d-1})$. Furthermore, setting $\gamma_{uv} := \min\{(\w{u}\w{v}/\W)^{1/d},r_{\max}\}$ we have 
$$p_{uv}(r) = \begin{cases} \Theta(1), & \text{if } r \ge \gamma_{uv}, \\ \Theta( (\gamma_{uv} / r)^{\alpha d}), & \text{otherwise.} \end{cases}$$ 
Additionally, in the case $\alpha = \infty$, by increasing $\gamma_{uv}$ by at most a constant factor we may assume $p_{uv}(r) = 0$ for all $r \geq \gamma_{uv}$. 
For the last term in (\ref{eq:rhouv}), for a fixed axis of $\Space$ consider the hyperplanes $\{h_i\}_{1 \leq i \leq \mu}$ of the grid perpendicular to that axis. If the edge $e= uv$ has length $\|\x u - \x v\| = r$, then after a random shift along the axis, the edge $e$ intersects one of the $h_i$ with probability at most $\min\{\mu r,1\}$. By symmetry of the underlying space, a random shift does not change the probability to intersect one of the~$h_i$, so any edge of length $r$ has probability at most $\min\{\mu r,1\}$ to intersect one of the~$h_i$. By the union bound over all (constantly many) axes, the probability for $u,v$ to lie in different cells of the grid is $O(\min\{\mu r,1\})$. 

Now we distinguish several cases. For $\gamma_{uv}> 1/\mu$ and $\alpha < \infty$, we may estimate
\begin{align}\label{eq:hitgrid1}
\rho_{uv} 
& \leq  O\bigg(\int_{0}^{1/\mu} r^{d-1}\cdot \mu r dr + \int_{1/\mu}^{\gamma_{uv}} r^{d-1} dr+ \underbrace{\int_{\gamma_{uv}}^{r_{\max}} r^{d-1-d\alpha}\gamma_{uv}^{d\alpha} dr}_{= O(\gamma_{uv}^d), \text{ since } d-d\alpha <0} \bigg) \leq O(\mu^{-d} + \gamma_{uv}^d) \leq O(\gamma_{uv}^d).
\end{align}
For $\gamma_{uv} > 1/\mu$ and $\alpha = \infty$, equation~\eqref{eq:hitgrid1} remains true, except that the third integral is replaced by $0$ by our choice of $\gamma_{uv}$. So in this case we still get $\rho_{uv} \leq O(\gamma_{uv}^d)$.

The case $\gamma_{uv} \leq 1/\mu$ is a bit more complicated. Again we consider first $\alpha <\infty$. Then we may bound
\begin{align}\label{eq:hitgrid2}
\rho_{uv} & \leq O\bigg(\underbrace{\int_{0}^{\gamma_{uv}} r^{d-1}\cdot \mu r dr}_{=:I_1} + \underbrace{\int_{\gamma_{uv}}^{1/\mu} r^{d-1}\cdot \mu r\cdot r^{-d\alpha}\gamma_{uv}^{d\alpha} dr}_{=:I_2}+ \underbrace{\int_{1/\mu}^{r_{\max}} r^{d-1-d\alpha}\gamma_{uv}^{d\alpha} dr}_{=:I_3} \bigg).
\end{align}
Similarly as before, $I_1 \leq O(\gamma_{uv}^{d+1}\mu)$ and $I_3 \leq O(\mu^{d\alpha-d}\gamma_{uv}^{d\alpha})$. Note that both terms are bounded from above by $O((\gamma_{uv} \mu)^{d\tilde \alpha}\mu^{-d})$, where $\tilde \alpha := \min\{\alpha, 1+1/d\}$, since $\gamma_{uv} \mu \leq 1$. For $I_2$, the inverse derivative of $r^{d-d\alpha}$ is either $\Theta(r^{1+d-d\alpha})$, or $\log r$, or $-\Theta(r^{1+d-d\alpha})$, depending on whether $1+d-d\alpha$ is positive, zero, or negative, respectively. Therefore, we obtain
\[
I_2 \leq 
\left\{
\begin{aligned}
&O(\gamma_{uv}^{d\alpha}\mu^{d\alpha-d}) &&=  O((\gamma_{uv}\mu)^{d\alpha}\mu^{-d}), && \text{if }d-d\alpha>-1,\\
&O(\gamma_{uv}^{d\alpha}\mu (\log(1/\mu)-\log(\gamma_{uv}))) && = O((\gamma_{uv}\mu)^{d+1}\mu^{-d}|\log(\gamma_{uv}\mu)|), && \text{if }d-d\alpha=-1,\\
&O(\gamma_{uv}^{d+1}\mu) && = O((\gamma_{uv}\mu)^{d+1}\mu^{-d}), && \text{if }d-d\alpha<-1.
\end{aligned}\right.
\]

In particular, we can upper-bound all terms (including $I_1$ and $I_3$) in a unified way by $O((\gamma_{uv} \mu)^{d\tilde \alpha}\mu^{-d})(1+|\log(\gamma_{uv} \mu)|)$. Moreover, since $\gamma_{uv} \geq (\wmin^2/\W)^{1/d} = \Omega(n^{-1/d})$, the second factor is bounded by $O(1+\log (n^{1/d}/\mu)) = O(1+\log (n/\mu^d))$. Also, in the case $\alpha = \infty$ the same calculation applies, except that $I_2$ and $I_3$ are replaced by $0$. Note that naturally $\tilde \alpha = 1+1/d$ for $\alpha = \infty$. So altogether we have shown that
\begin{equation*}
\rho_{uv} \leq 
\begin{cases}
O(\gamma_{uv}^d), & \text{if }\gamma_{uv} \geq 1/\mu, \\
O((\gamma_{uv} \mu)^{d\tilde \alpha}\mu^{-d}(1+\log(n^{d}/ \mu))), & \text{if } \gamma_{uv} \leq 1/\mu.
\end{cases}
\end{equation*}
Therefore, the expected number of edges intersecting the grid is in $O(S_1 + S_2)$, where
\[
S_1 := \sum_{u,v\in V,\,  \gamma_{uv} > 1/\mu} \gamma_{uv}^{d} \quad \text{ and } \quad S_2 := \sum_{u,v\in V,\,  \gamma_{uv} \leq 1/\mu} (\gamma_{uv}\mu)^{d\tilde \alpha}\mu^{-d}(1+\log(n^{d}/\mu)).
\]

Let $0<\eta' <\eta <\beta-2$ be (sufficiently small) constants. Then we may use the power-law assumption~(PL2), Lemma~\ref{lem:totalweight}, and Lemma~\ref{lem:weightsums} to bound $S_1$:
\begin{align*}
S_1 & \leq \sum_{u,v\in V,\,  \w{u}\w{v} > \W/\mu^{d}} \frac{\w{u}\w{v}}{\W} = \sum_{u\in V} \frac{\w{u}}{\W}\cdot \W_{\geq \W/(\mu^d\w{u})} \stackrel{\ref{lem:totalweight}}{\leq} O\left(\sum_{u\in V} \frac{\w{u}}{\W} \cdot n\left(\frac{\W}{\mu^d\w{u}}\right)^{2-\beta+\eta}\right) \\
& \leq O\left(\left(\frac{\mu^d}{n}\right)^{\beta-2-\eta}\sum_{u\in V} \w{u}^{\beta-1-\eta}\right) \stackrel{\ref{lem:weightsums}}{\le} O\left(\left(\frac{\mu^d}{n}\right)^{\beta-2-\eta}\int_{\wmin}^{\infty} nw^{1-\beta+\eta'}w^{\beta-2-\eta}dw\right) \\
& = O\left(n\cdot \left(\mu^d/n\right)^{\beta-2-\eta}\right).
\end{align*}
To tackle $S_2$, we again use Lemma~\ref{lem:weightsums}, let $\lambda_u := \W/(\w{u}\mu^d)$ and obtain
\begin{align}\label{eq:hitgridexpectation}
S_2':=\sum_{\substack{u,v\in V \\ \gamma_{uv} \leq 1/\mu}} (\gamma_{uv})^{d\tilde \alpha}& = \sum_{u\in V} \sum_{\substack{v \in V_{\leq \lambda_u}}} \left(\frac{\w{u}\w{v}}{\W}\right)^{\tilde \alpha} 
\stackrel{\ref{lem:weightsums}}{\leq} O\left( \sum_{u\in V} \left(\frac{\w{u}}{n}\right)^{\tilde \alpha}\int_{\wmin}^{\lambda_u} nw^{1-\beta+\eta} w^{\tilde \alpha-1} dw\right) 
\end{align}

Now we distinguish two cases, because the integral behaves differently for exponents larger or smaller than $-1$. If $\tilde \alpha \geq \beta -1$, then for $0< \eta' < \eta$ equation~\eqref{eq:hitgridexpectation} evaluates to
\begin{align*}
S_2' & \leq O\left( \sum_{u\in V} \left(\frac{\w{u}}{n}\right)^{\tilde \alpha}n\lambda_u^{1+\tilde \alpha-\beta+\eta} \right) 
= O\left(\frac{n^{2-\beta+\eta}}{\mu^{d(1+\tilde \alpha-\beta+\eta)}} \sum_{u\in V} \w{u}^{\beta -1-\eta}\right) \\
& \stackrel{\ref{lem:weightsums}}{\leq} O\left(\frac{n^{2-\beta+\eta}}{\mu^{d(1+\tilde \alpha-\beta+\eta)}} \int_{\wmin}^{\infty} nw^{1-\beta+\eta'}w^{\beta -2-\eta}dw\right) = O\left(n\frac{n^{2-\beta+\eta}}{\mu^{d(1+\tilde \alpha-\beta+\eta)}}\right).
\end{align*}
Therefore, $S_2 = \mu^{d\tilde \alpha-d}(1+\log(n^{d}/\mu)) S_2' \leq O(n\cdot (n/\mu^d)^{2-\beta+\eta})$, which is one of the terms in the lemma. On the other hand, if $\tilde \alpha < \beta -1$ then for $0<\eta < \beta-\tilde \alpha -1$ we obtain from~\eqref{eq:hitgridexpectation}
\begin{align*}
S_2' & \leq O\left(n^{1-\tilde \alpha} \sum_{u\in V} \w{u}^{\tilde \alpha} \right) \stackrel{\ref{lem:weightsums}}{\leq} O\left(n^{1-\tilde \alpha}\int_{\wmin}^{\infty} nw^{1-\beta+\eta} w^{\tilde \alpha-1} dw\right) \leq O\left(n^{2-\tilde \alpha}\right),
\end{align*}
and again $S_2 = \mu^{d\tilde \alpha-d}(1+\log(n^{d}/\mu)) S_2'$ corresponds to terms in the lemma after plugging in $\tilde \alpha$. This concludes the proof. 
\end{proof}

\section{Comparison with Hyperbolic Random Graphs} 
\label{sec:hyperbolic}

In this section we show that hyperbolic random graphs are a special case of GIRGs.
We start by defining hyperbolic random graphs.  There exist several different representations of hyperbolic geometry, all with advantages and disadvantages. For introducing this random graph model, it is most convenient to use the native representation. It can be described by a disk $H$ of radius $R$ around the origin~$0$, where the position of every point $x$ is given by its polar coordinates $(r_x,\theta_x)$. The model is isotropic around the origin. The hyperbolic distance between two points $x$ and $y$ is given by the non-negative solution $d=d(x,y)$ of the equation
\begin{equation} \label{eq:hypdist}
\cosh(d) = \cosh(r_x)\cosh(r_y)-\sinh(r_x)\sinh(r_y)\cos(\phi_x-\phi_y).
\end{equation}
In the following definition, we follow the notation introduced by Gugelmann et al.\ \cite{gugelmann2012random}.
\begin{definition} \label{def:hyperrand}
Let $\alpha_H>0, C_H\in \R,T_H>0,n\in \N$, and set $R=2\log n+C_H$. Then the random hyperbolic graph $G_{\alpha_H,C_H,T_H}(n)$ is a graph with vertex set $V=[n]$ and the following properties:

\begin{itemize}
\item Every vertex $v \in [n]$ independently draws random coordinates $(r_v,\phi_v)$, where the angle $\pi_v$ is chosen uniformly at random in $[0,2\pi)$ and the radius $r_v \in [0,R]$ is random with density $f(r) := \frac{\alpha_H\sinh(\alpha_H r)}{\cosh(\alpha_H R)-1}$.
\item Every potential edge $e=\{u,v\}$, $u,v \in [n]$, is independently present with probability
\[p_H(d(u,v)) = \left(1+e^{\frac{1}{2T_H}(d(u,v)-R)}\right)^{-1}.\]
\end{itemize}

In the limit $T_H \rightarrow 0$, we obtain the threshold hyperbolic random graph $G_{\alpha_H,C_H}(n)$, where every edge $e=\{u,v\}$ is present if and only if $d(u,v) \le R$.
\end{definition}

We will show that hyperbolic random graphs are almost surely contained in our general framework. To this end, we embed the disk of the native hyperbolic model into our model with dimension 1, hence we reduce the geometry of the hyperbolic disk to the geometry of a circle, but gain additional freedom as we can choose the weights of vertices. Notice that a single point on the hyperbolic disk has measure zero, so we can assume that no vertex has radius $r_v=0$. For the parameters, we put 
\[d:=1,\quad\beta := 2\alpha_H + 1,\quad\alpha := 1/T_H.\]
Furthermore, we define the mapping
\[\w{v} := e^{\frac{R-r_v}{2}} \quad\text{and}\quad \x{v} := \frac{\phi_v}{2\pi}.\]
Since this is a bijection between $H \setminus \{0\}$ and $[1,e^{R/2}) \times \mathds{T}^1$, there exists as well an inverse function $g(\w{u},\x{u})=(r_u,\phi_u)$. Finally for any two vertices $u \neq v$ on the torus, we set 
\[p_{uv} := p_H(d(g(\w{u},\x{u}),g(\w{v},\x{v}))).\] 
This finishes our embedding. The following lemma, which we prove near the end of this section, demonstrates that under this mapping almost surely the weights will follow a power law.

\begin{lemma} \label{lem:hyppowerlaw}
Let $\alpha_H > \frac12$. Then for all $\eta=\eta(n)=\omega(\frac{\log\log n}{\log n})$, with probability $1-n^{-\Omega(\eta)}$ the induced weight sequence $\w{}$ follows a power law with parameter $\beta=2\alpha_H+1$. 
\end{lemma}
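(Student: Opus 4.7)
The plan is to derive \lemref{hyppowerlaw} as a direct consequence of \lemref{sampleweights}. Since the radii $r_1,\ldots,r_n$ are i.i.d.\ with density $f$, the induced weights $\w{v}=e^{(R-r_v)/2}$ form an i.i.d.\ sample from some CDF $F$, and it suffices to verify that $F$ meets the hypotheses of \lemref{sampleweights} with $\wmin=1=\Theta(1)$ and $\beta=2\alpha_H+1$.

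First I would compute $F$ explicitly: because $r\mapsto e^{(R-r)/2}$ is strictly decreasing, $\w{v}\ge w$ holds iff $r_v\le R-2\log w$, so integrating $f$ yields
\[
 1-F(w)=\Pr[\w{v}\ge w]=\frac{\cosh(\alpha_H(R-2\log w))-1}{\cosh(\alpha_H R)-1}\qquad\text{for } w\in[1,e^{R/2}],
\]
and $F(w)=0$ for $w<1$, establishing the boundary condition at $\wmin=1$.

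For the power-law tail, fix any $\varepsilon\in(0,\beta-2)$, which is non-empty because $\alpha_H>1/2$ gives $\beta-2>0$. On $w\in[1,n^{1/(\beta-1-\varepsilon)}]$ one has $2\log w\le 2\log n/(\beta-1-\varepsilon)$, hence
\[
 R-2\log w\;\ge\;2\bigl(1-\tfrac{1}{\beta-1-\varepsilon}\bigr)\log n+C_H,
\]
which tends to $+\infty$ as $n\to\infty$ since $\beta-1-\varepsilon>1$. Plugging the uniform expansion $\cosh(x)-1=\tfrac12 e^x\bigl(1+O(e^{-x})\bigr)$ into both numerator and denominator then gives
\[
 1-F(w)=\frac{e^{\alpha_H(R-2\log w)}(1+o(1))}{e^{\alpha_H R}(1+o(1))}=w^{-2\alpha_H}(1+o(1))=\Theta(w^{1-\beta}),
\]
uniformly in $w$ on the range. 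Hence $F(w)=1-\Theta(w^{1-\beta})$ on $[\wmin,n^{1/(\beta-1-\varepsilon)}]$, and \lemref{sampleweights} applies verbatim to give the claim with probability $1-n^{-\Omega(\eta)}$ for every $\eta=\omega(\log\log n/\log n)$.

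No real obstacle is foreseen. The only point requiring care is the uniformity of the error in the $\cosh$-asymptotics across the whole range of $w$, which follows from the lower bound on $R-2\log w$ just derived (equivalently, from the identity $\cosh(x)-1=2\sinh^2(x/2)$ applied at both endpoints). Everything else is bookkeeping based on the cited lemma.
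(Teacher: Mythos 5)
Your proposal is correct and takes essentially the same route as the paper: both arguments reduce the claim to verifying the hypotheses of \lemref{sampleweights} for the CDF of the induced weights with $\wmin=1$, using the expansion $\cosh(x)-1=\tfrac{1}{2}e^{x}(1+O(e^{-x}))$ on the relevant range of $w$ (the paper merely packages this computation as \lemref{hypmaxradius} rather than doing it inline). The differences are only organizational.
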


Now we come to the main statement of this section. 
In the following we assume that if we sample an instance of the hyperbolic random graph model, we first sample the radii, then the angles and at last the edges.

\begin{theorem} \label{thm:hyperbolicprecise}
Let $\alpha_H > \frac12, n \in \N$ and fix a set of radii $(r_1, \ldots, r_n) \in [0,R]^n$ inducing a power-law weight sequence $\w{}$ with parameter $\beta =2\alpha_H+1$. 
Then the random positions $\x u$ and the edge probabilities $p_{uv}(\x u, \x v)$ produced by our mapping satisfy the properties of the GIRG model, i.e., for fixed radii inducing power-law weights, hyperbolic random graphs are a special case of GIRGs.
\end{theorem}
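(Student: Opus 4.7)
The plan is to verify the three defining conditions of the GIRG model under the proposed mapping: uniform vertex positions on $\mathds{T}^{1}$, the power-law weight assumption, and the edge-probability condition (\ref{eq:puv}) for $T_H\in(0,1)$ (so $\alpha=1/T_H>1$) or (\ref{eq:puv2}) in the threshold limit $T_H\to 0$ (so $\alpha=\infty$). The first two are essentially free: $\phi_v\sim\textup{Uniform}[0,2\pi)$ yields $\x v=\phi_v/(2\pi)\sim\textup{Uniform}(\mathds{T}^{1})$, and the power-law condition on $\w{}$ is part of the hypothesis (with \lemref{hyppowerlaw} covering the random-radius case). In particular, \lemref{totalweight} gives $\W=\Theta(n)=\Theta(e^{R/2})$, so the weight map $\w v=e^{(R-r_v)/2}$ lives in the intended range. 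All the substance therefore lies in matching $p_H(d(u,v))$ with the GIRG edge probability.

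The core estimate I would establish is the uniform two-sided bound
\[
e^{d(u,v)-R}\;=\;\Theta\!\left(e^{|r_u-r_v|-R}\;+\;e^{r_u+r_v-R}\,\sin^2\tfrac{\phi_u-\phi_v}{2}\right),
\]
which I would derive from the exact identity $\cosh(d)=\cos^2(\Delta\phi/2)\cosh(r_u-r_v)+\sin^2(\Delta\phi/2)\cosh(r_u+r_v)$ (with $\Delta\phi:=\phi_u-\phi_v$) combined with $\cosh(x)=\Theta(e^{|x|})$ and $e^d=\Theta(\cosh d)$ for $d\ge 0$. The subtle regime is $\cos^2(\Delta\phi/2)$ close to $0$: there $\sin^2(\Delta\phi/2)\ge 1/2$ and the domination $\cosh(r_u+r_v)\ge\cosh(r_u-r_v)$ restores the $e^{|r_u-r_v|-R}$ term in the lower bound. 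Substituting $e^{r_v/2}=\Theta(\W/\w v)$, $e^{R/2}=\Theta(\W)$, and $|\sin(\Delta\phi/2)|=\Theta(\|\x u-\x v\|)$ (the latter because the $L_\infty$-torus distance on $\mathds{T}^{1}$ is proportional to $\min\{|\Delta\phi|,2\pi-|\Delta\phi|\}$) would then rewrite this as
\[
e^{(d(u,v)-R)/2}\;=\;\Theta(A+B),\qquad A:=\frac{\max(\w u,\w v)}{\min(\w u,\w v)\cdot\W},\quad B:=\frac{\W\cdot\|\x u-\x v\|}{\w u\w v}.
\]

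To finish, I would observe that the off-diagonal term $A$ is uniformly $o(1)$ under the power-law hypothesis: (PL2) forces $\wmax\le O(n^{1/(\beta-1-\eta)})$ for any $\eta>0$, while $\wmin=\Omega(1)$ and $\W=\Theta(n)$, giving $A\le\wmax/\W=o(1)$ for small enough $\eta$ (using $\beta>2$). Consequently $A+B=\Theta(\max\{1,B\})$ for $n$ large, and plugging into $p_H(d)=(1+e^{\alpha(d-R)/2})^{-1}$ with $\alpha=1/T_H<\infty$ yields
\[
p_{uv}\;=\;\Theta\!\left(\min\!\left\{B^{-\alpha},\,1\right\}\right)\;=\;\Theta\!\left(\min\!\left\{\frac{1}{\|\x u-\x v\|^{\alpha}}\Big(\tfrac{\w u\w v}{\W}\Big)^{\alpha},\,1\right\}\right),
\]
which is (\ref{eq:puv}) with $d=1$. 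For the threshold case, $p_{uv}=\mathds{1}[d(u,v)\le R]=\mathds{1}[A^2+B^2=O(1)]$, and since $A^2=o(1)$ I can choose absolute constants $C_1\le C_2$ so that $B\le C_1$ forces $p_{uv}=1$ and $B\ge C_2$ forces $p_{uv}=0$, recovering (\ref{eq:puv2}). The main obstacle is the uniform $\Theta$-estimate on $\cosh(d)$: it must hold across all regimes (including $d$ small and $\Delta\phi$ near $\pm\pi$) with controllable constants, since those constants ultimately determine the admissible $C_1,C_2$ in the threshold case.
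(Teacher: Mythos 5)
Your proposal is correct and follows the same overall strategy as the paper: uniform positions and power-law weights are immediate from the mapping, and all the substance lies in a two-sided estimate on $e^{d(u,v)-R}$. After substituting $\w{v}=e^{(R-r_v)/2}$, your core estimate is exactly the paper's key formula (\ref{eq:finalterm}), namely $e^{d-R}=\Theta\big(e^{r_u-r_v-R}+\phi_u^2e^{r_u+r_v-R}\big)$. What you do differently: (i) you derive it from the exact convex-combination identity $\cosh d=\cos^2(\Delta\phi/2)\cosh(r_u-r_v)+\sin^2(\Delta\phi/2)\cosh(r_u+r_v)$, which makes the $\Theta$ uniform over all regimes with explicit constants and avoids both the Taylor expansion of $1-\cos\phi_u$ and the estimate $\sinh(r_v)=\Theta(e^{r_v})$ (which requires $r_v=\omega(1)$ and is justified in the paper via (PL2)); (ii) you treat the threshold case $\alpha=\infty$ with the same estimate, whereas the paper instead invokes the critical-angle formula of Gugelmann et al.\ (Lemma~\ref{lem:criticalangle}) as a black box. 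Your route is therefore more self-contained and unifies the two cases; the paper's route outsources the threshold case but needs slightly less computation there. Two small points to tighten. First, the intermediate claim ``$A+B=\Theta(\max\{1,B\})$'' is false when both $A$ and $B$ are $o(1)$ (two nearby vertices of comparable radius); what you actually need, and what does hold, is $1+(A+B)^{\alpha}=\Theta(\max\{1,B^{\alpha}\})$, or equivalently $\min\{1,(A+B)^{-\alpha}\}=\Theta(\min\{1,B^{-\alpha}\})$ whenever $A=O(1)$ --- in the problematic regime both sides are simply $\Theta(1)$, so the final formula is unaffected. Second, for the non-threshold case one needs $T_H<1$ so that $\alpha=1/T_H>1$ lies in the admissible GIRG range; you note this restriction parenthetically, and it is equally implicit in the paper.
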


Note that the precondition of Theorem~\ref{thm:hyperbolicprecise} on the weight sequence $\w{}$ holds for any $\eta=\eta(n)=\omega(\log\log n/\log n)$ with probability $1-n^{-\Omega(\eta)}$ by Lemma~\ref{lem:hyppowerlaw}. Therefore an instance of random hyperbolic graphs is almost surely included in our GIRG model with parameters as set above. In particular, any property that holds with probability $1-q$ for GIRGs also holds for hyperbolic random graphs with probability at least, say, $1-q-n^{-o(1)}$.

Before proving Lemma~\ref{lem:hyppowerlaw} and Theorem~\ref{thm:hyperbolicprecise}, we consider the following basic property of hyperbolic random graphs.
\begin{lemma} \label{lem:hypmaxradius}
Let $\alpha_H > \frac12$. Then with probability $1-n^{-\Omega(1)}$ every vertex has radius at least $r_0 := (1-\frac{1}{2\alpha_H})\log n$. Furthermore, for all $r=\omega(1), r \le R$ and $v \in V$, we have
\[\Pr[r_v \le r] = e^{-\alpha_H(R-r)}(1+o(1)).\]
\end{lemma}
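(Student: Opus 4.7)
The plan is to first derive the cumulative distribution function of the radius by direct integration of the density, then use elementary asymptotics of $\cosh$ to get the simple formula claimed in the second statement, and finally plug in $r = r_0$ and use a union bound to get the first statement.

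First I would compute
\[
\Pr[r_v \le r] \;=\; \int_0^r \frac{\alpha_H\sinh(\alpha_H s)}{\cosh(\alpha_H R)-1}\,ds \;=\; \frac{\cosh(\alpha_H r)-1}{\cosh(\alpha_H R)-1}.
\]
For the second claim, I would observe that $R=2\log n+C_H\to\infty$ and that by assumption $r=\omega(1)$, so that $\alpha_H r\to\infty$ and $\alpha_H R\to\infty$. Then from $\cosh(x)-1=\tfrac12 e^x(1+e^{-2x}-2e^{-x})=\tfrac12 e^x(1+o(1))$ as $x\to\infty$, both numerator and denominator simplify to $\tfrac12 e^{\alpha_H r}(1+o(1))$ and $\tfrac12 e^{\alpha_H R}(1+o(1))$ respectively, giving
\[
\Pr[r_v \le r] \;=\; e^{-\alpha_H(R-r)}(1+o(1)),
\]
as required.

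For the first claim, I would apply the second claim to $r:=r_0=(1-\tfrac{1}{2\alpha_H})\log n$, which is $\omega(1)$ because $\alpha_H>\tfrac12$. Then $R-r_0=(1+\tfrac{1}{2\alpha_H})\log n+C_H$, so
\[
\Pr[r_v \le r_0] \;=\; e^{-\alpha_H(R-r_0)}(1+o(1)) \;=\; \Theta\!\left(n^{-(\alpha_H+1/2)}\right).
\]
Since $\alpha_H>\tfrac12$, this is $o(n^{-1-\eps})$ for some $\eps=\eps(\alpha_H)>0$, and a union bound over the $n$ vertices yields that with probability $1-O(n^{-\eps})=1-n^{-\Omega(1)}$ every vertex satisfies $r_v\ge r_0$.

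There is no real obstacle here; the only care needed is to verify that the asymptotic expansion of $\cosh$ is valid in both numerator and denominator (which is immediate from $\alpha_H r\to\infty$) and that the exponent $\alpha_H+1/2$ exceeds $1$ so that the union bound is not wasteful, both of which follow directly from the assumption $\alpha_H>\tfrac12$.
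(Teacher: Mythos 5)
Your proof is correct and follows essentially the same route as the paper: direct integration of the density to get $\Pr[r_v\le r]=\frac{\cosh(\alpha_H r)-1}{\cosh(\alpha_H R)-1}$, the asymptotic $\cosh(x)=\tfrac12 e^x(1+o(1))$ for $x=\omega(1)$, and then a first-moment bound over the $n$ vertices (the paper phrases your union bound as Markov's inequality applied to the count $X_{r_0}$ of vertices with radius at most $r_0$, which is the same estimate $n\cdot e^{-\alpha_H(R-r_0)}=\Theta(n^{1/2-\alpha_H})=n^{-\Omega(1)}$).
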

\begin{proof}
Let $v \in V$. By the given density $f$ it follows immediately that
\begin{align*}
\Pr[r_v \le r]&=\int_0^r f(x)dx =\alpha_H \int_0^r \frac{\sinh(\alpha_H x)}{\cosh(\alpha_H R)-1} dx = \frac{\cosh(\alpha_H r)-1}{\cosh(\alpha_H R)-1}\\ &= e^{-\alpha_H(R-r)}(1+o(1)),
\end{align*}
where we used $\cosh(x) = \frac{e^x+e^{-x}}{2} = \frac{e^x}{2}(1+o(1))$ whenever $x=\omega(1)$. Now let $X_{r_0}$ be the random variable counting the vertices of radius at most $r_0$. We observe that the above expression for $\Pr[r_v \le r]$ implies \[\Ex[X_{r_0}] = n e^{-\alpha_H(R-r_0)}(1+o(1))=e^{-\alpha_H C_H}n^{1/2-\alpha_H}(1+o(1))=n^{-\Omega(1)}.\] By Markov's inequality, with probability $1-n^{-\Omega(1)}$ we have $X_{r_0}=0$.
\end{proof}

\begin{proof}[Proof of Lemma~\ref{lem:hyppowerlaw}]
For every vertex of the random hyperbolic graph, the radius is chosen independently and uniformly according to $f(r)$. Hence under our mapping, we sample weights independently. We will prove that we fulfil the prerequisites of Lemma~\ref{lem:sampleweights}. Let $0<\eps<1$. By Lemma~\ref{lem:hypmaxradius}, the probability that a vertex $v$ has radius at most $r \ge \eps \log n$ is $\Theta(e^{-\alpha_H(R-r)})$. Let $1 \le z \le o(n^{1-\eps/2})$. Then $R-2\log z \ge \eps \log n$, and
\begin{align*}
F(z) := \Pr[\w{v} \le z] &= 1-\Pr[r_v \le R-2\log z]=1-\Theta(e^{-2\alpha_H\log z})\\&=1-\Theta(z^{-2\alpha_H})=1-\Theta(z^{1-\beta}).
\end{align*}
Furthermore, for $z < 1$ we get
\[F(z):= \Pr[\w{v} \le z] = \Pr[r_v \ge R-2\log{z}] = 0.\]
Clearly, $F(.)$ is non-decreasing and therefore satisfies the preconditions of Lemma~\ref{lem:sampleweights} by taking $\wmin=1$. Then it follows from this lemma that the weight sequence $\w{}$ follows a power law with parameter $\beta$ with sufficiently high probability.
\end{proof}

\begin{proof}[Proof of Theorem~\ref{thm:hyperbolicprecise}]
Let us start by considering the sampling process of a random hyperbolic graph. First we sample the radii of the vertices, for which the precondition of the theorem assumes that they induce a power-law weight sequence. Next we sample the angles. This corresponds to coordinates chosen independently and uniformly at random on $\mathds{T}^1$. It remains to prove that $p_{uv}$ as defined above satisfies conditions (\ref{eq:puv}) and (\ref{eq:puv2}).

Let $u \neq v$ be two vertices of the random hyperbolic graph with coordinates $(r_u,\phi_u)$ and $(r_v,\phi_v)$ and consider their mappings $(\w{u},\x{u})$ and $(\w{v},\x{v})$. Since the hyperbolic model is isotropic around the origin, we can assume without loss of generality that $r_u \ge r_v$, $\phi_v=0$ and $\phi_u \le \pi$. 

Let us first consider the threshold model, corresponding to $\alpha = \infty$. We claim that there exist constants $M > m > 0$ such that whenever $\|\x{u}-\x{v}\|\ge M \frac{\w{u}\w{v}}{\W}$, then $p_{uv}=0$, and whenever $\|\x{u}-\x{v}\|\le m \frac{\w{u}\w{v}}{\W}$, then $p_{uv}=1$. This will imply (\ref{eq:puv2}), as we set $d=1$.
Recall that in the threshold model, two vertices $u$ and $v$ are connected if and only if $d(u,v)\le R$. When $r_u+r_v \le R$, this is the case for all angles $\phi_u$ and $\phi_v$. Otherwise, for $\phi=0$ and $\phi_u \le \pi$, the distance between $u$ and $v$ is increasing in $\phi_u$ and there exists a critical value $\phi_0$ such that $d((r_u,\phi_u),(r_v,0))\le R$ if and only if $\phi_u \le \phi_0$. The following lemma estimates $\phi_0$.
\begin{lemma}[Lemma~3.1 in \cite{gugelmann2012random}] \label{lem:criticalangle}
Let $0 \le r_u \le R$, $r_u+r_v \ge R$ and assume $\phi_v=0$. Then
\[\phi_0 \;=\; 2e^{\frac{R-r_u-r_v}{2}}\left(1+\Theta(e^{R-r_u-r_v})\right).\]
\end{lemma}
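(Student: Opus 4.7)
The plan is to solve the hyperbolic law of cosines (\ref{eq:hypdist}) for $\phi_0$ under the equality $d=R$, then expand everything in exponentials. First I would use the identity $\cosh r_u\cosh r_v - \sinh r_u\sinh r_v = \cosh(r_u-r_v)$ together with $1-\cos\phi_0 = 2\sin^2(\phi_0/2)$ to rewrite (\ref{eq:hypdist}) in the clean form
\[
2\sin^2(\phi_0/2) \;=\; \frac{\cosh R - \cosh(r_u-r_v)}{\sinh r_u\,\sinh r_v}.
\]
Then I would apply the product-to-difference identity $\cosh A - \cosh B = 2\sinh\!\big(\tfrac{A+B}{2}\big)\sinh\!\big(\tfrac{A-B}{2}\big)$ (all four arguments are positive since $0\le r_v \le r_u \le R$) to the numerator, and the same identity in reverse for the denominator $2\sinh r_u\sinh r_v = \cosh(r_u+r_v)-\cosh(r_u-r_v)$ is not needed. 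This yields
\[
\sin^2(\phi_0/2) \;=\; \frac{\sinh\!\big(\tfrac{R+r_u-r_v}{2}\big)\,\sinh\!\big(\tfrac{R-r_u+r_v}{2}\big)}{\sinh r_u\,\sinh r_v}.
\]

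Next I would substitute $\sinh x = \tfrac{1}{2}e^x(1-e^{-2x})$ into each of the four factors. After cancellation of the four $\tfrac12$'s and of the exponential prefactors, one gets
\[
\sin^2(\phi_0/2) \;=\; e^{R-r_u-r_v}\cdot F, \qquad \text{where } F \;=\; \frac{(1-e^{-(R+r_u-r_v)})(1-e^{-(R-r_u+r_v)})}{(1-e^{-2r_u})(1-e^{-2r_v})}.
\]
The main technical step is to bound $F = 1 + \Theta(e^{R-r_u-r_v})$. I would check that each of the four exponents $-(R+r_u-r_v),\;-(R-r_u+r_v),\;-2r_u,\;-2r_v$ is at most $R-r_u-r_v$, using the hypotheses $r_u\ge r_v$ and $r_u,r_v\le R$; for instance $-2r_v \le R-r_u-r_v$ is equivalent to $r_u - r_v \le R$, which holds since $r_u\le R$. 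Thus every $e^{-2x_i}$ is $O(e^{R-r_u-r_v})$, and a first-order expansion of the product gives $F = 1 + O(e^{R-r_u-r_v})$; matching lower-bound (the $\Theta$ rather than $O$) comes from noting that the dominant term $-e^{-2r_v}$ in the denominator contributes a non-cancelling $-\Omega(e^{R-r_u-r_v})$ correction when $r_u-r_v$ is bounded away from $R$, and otherwise another of the four terms does.

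Finally I would take square roots and invert the sine: because $r_u+r_v\ge R$ we have $e^{(R-r_u-r_v)/2}\le 1$, so $\phi_0/2$ is bounded, and Taylor expansion gives $\sin(\phi_0/2) = (\phi_0/2)(1+O(\phi_0^2))$ with $\phi_0^2 = O(e^{R-r_u-r_v})$. Combining yields the stated identity. The principal obstacle is the bookkeeping in controlling $F$: one must verify simultaneously that no error term in the four $\sinh$-expansions is asymptotically larger than $e^{R-r_u-r_v}$ (using both constraints $r_u\ge r_v$ and $r_u+r_v\ge R$), and that the leading correction does not accidentally cancel to yield a smaller-order error, which is what distinguishes $\Theta$ from mere $O$ in the statement.
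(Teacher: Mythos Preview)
The paper does not prove this lemma at all; it is quoted as Lemma~3.1 of Gugelmann, Panagiotou, and Peter~\cite{gugelmann2012random} and used as a black box in the proof of Theorem~\ref{thm:hyperbolicprecise}. So there is no proof in the present paper to compare your attempt against.

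That said, your derivation of
\[
\sin^2(\phi_0/2) \;=\; e^{R-r_u-r_v}\cdot F,
\qquad
F \;=\; \frac{(1-e^{-(R+r_u-r_v)})(1-e^{-(R-r_u+r_v)})}{(1-e^{-2r_u})(1-e^{-2r_v})},
\]
and the verification that all four corrections are $O(e^{R-r_u-r_v})$ is correct and gives the $O$-version cleanly. Your argument for the matching lower bound, however, has the dominant term misidentified. Under the hypotheses $r_u\ge r_v$ and $r_u+r_v\ge R$ one checks directly that
\[
R-r_u+r_v \;\le\; 2r_v \;\le\; 2r_u
\quad\text{and}\quad
R-r_u+r_v \;\le\; R+r_u-r_v,
\]
so the largest of the four corrections is $e^{-(R-r_u+r_v)}$ (a numerator term), not $e^{-2r_v}$. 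Moreover $e^{-(R-r_u+r_v)}/e^{R-r_u-r_v}=e^{2(r_u-R)}$, which is $\Theta(1)$ only when $r_u=R-O(1)$; similarly $e^{-2r_v}/e^{R-r_u-r_v}=e^{(r_u-r_v)-R}$, which is $\Theta(1)$ only when $r_u-r_v$ is \emph{close} to $R$, the opposite of what you wrote. In general $F-1$ can be $o(e^{R-r_u-r_v})$ (for instance when $r_u=r_v=R/2$ one has $F=1$ exactly). The $\Theta$ in the final statement therefore does not come from $F$ but from the next term in the arcsine expansion, $\arcsin x = x\big(1+\tfrac{x^2}{6}+\cdots\big)$, which contributes a positive $\tfrac{1}{6}e^{R-r_u-r_v}$ to the relative error; one then has to argue that this is not cancelled by the contribution from $F$.
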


Suppose $\|\x{u}-\x{v}\|\ge M \frac{\w{u}\w{v}}{\W}$. Notice that by our transformation we have $\|\x{u}-\x{v}\|=\frac{\phi_u}{2\pi}$ and 
\begin{equation}  \label{eq:weightcorresp}
\frac{\w{u}\w{v}}{\W} = \Theta\left(\frac{\w{u}\w{v}}{n}\right)=\Theta\left(\frac{e^{R-(r_u+r_v)/2}}{n}\right) = \Theta(e^{(R-r_u-r_v)/2}),
\end{equation}
where we used $\W=\Theta(n)$ by Lemma~\ref{lem:totalweight}. Hence, we have $\frac{\phi_u}{2\pi M}=\Omega(e^{(R-r_u-r_v)/2})$. Since $\phi_u \le 1$, this implies $r_u+r_v > R$, if we choose the constant $M$ sufficiently large. Moreover, for sufficiently large $n$ we obtain $\phi_u > 2e^{\frac{R-r_u-r_v}{2}}\left(1+\Theta(e^{R-r_u-r_v})\right)$. Thus, by Lemma~\ref{lem:criticalangle} the two vertices $u$ and $v$ are not connected and indeed $p_{uv}=0$. 

On the other hand, assume $\|\x{u}-\x{v}\|\le m \frac{\w{u}\w{v}}{\W}$. Then either $r_u+r_v < R$ and thus $\{u,v\} \in E$ follows directly, or $r_u+r_v \ge R$ and $\phi_u < 2e^{\frac{R-r_u-r_v}{2}}\left(1+\Theta(e^{R-r_u-r_v})\right)$, if $m$ is sufficiently small. In the second case, Lemma~\ref{lem:criticalangle} implies $p_{uv}=1$.

We now turn to the case $\alpha < \infty$. By our assumptions on $\phi_u$ and $\phi_v$ and by the identity $\cosh(x \pm y)=\cosh(x)\cosh(y) \pm \sinh(x)\sinh(y)$, we can rewrite (\ref{eq:hypdist}) as
\begin{equation} \label{eq:newhypdist}
\cosh(d)=\cosh(r_u-r_v)+(1-\cos(\phi_u))\sinh(r_u)\sinh(r_v).
\end{equation}
Next we observe that $\cosh(x)=\Theta(e^{|x|})$ for all $x$ and $\sinh(x)=\Theta(e^x)$ for all $x=\omega(1)$. Observe that (PL2) and $\w v = e^{(R-r_v)/2}$ imply $r_v = \Theta(\log n)$ for all vertices $v$. Furthermore, we perform a Taylor approximation of $1-\cos(\phi_u)$ around $0$ and get
$1-\cos(\phi_u)=\frac{\phi_u^2}{2}-\frac{\phi_u^4}{24}+\ldots = \Theta(\phi_u^2)$,
as $\phi_u$ is at most a constant. Combining these observations with (\ref{eq:newhypdist}) and the assumption $r_u \ge r_v$, we deduce
\begin{equation} \label{eq:finalterm}
e^{d-R}=\Theta(\cosh(d)e^{-R})=\Theta\left(e^{r_u-r_v-R}+\phi_u^2 e^{r_u+r_v-R}\right).
\end{equation}
In the condition (\ref{eq:puv}) on $p_{uv}$ the minimum is obtained by the second term whenever $\|\x{u}-\x{v}\| \le \frac{\w{u}\w{v}}{\W}$. Mapping $u$ and $v$ to the hyperbolic disk, this implies $\phi_u =O(e^{(R-r_u-r_v)/2})$. We claim that whenever $\phi_u =O(e^{(R-r_u-r_v)/2})$, the two vertices $u$ and $v$ are connected with constant probability and therefore $p_{uv}=\Theta(1)$. Indeed, in this case by (\ref{eq:finalterm}) we have $e^{d-R}=O(1)$, and using Definition~\ref{def:hyperrand} we deduce
\[p_{uv}=p_H(d(u,v))=\left(1+(e^{d-R})^{(1/(2T_H))}\right)^{-1}=\Theta(1).\]
On the other hand, suppose $\|\x{u}-\x{v}\| \ge \frac{\w{u}\w{v}}{\W}$, which implies $\phi_u =\Omega(e^{(R-r_u-r_v)/2})$. In this case by (\ref{eq:finalterm}) we have
$e^{d-R}=\Theta\left(\phi_u^2 e^{r_u+r_v-R}\right)=\Omega(1)$.
However, if $e^{d-R}=\Omega(1)$, we can use Definition~\ref{def:hyperrand} and (\ref{eq:weightcorresp}) to obtain
\begin{align*}
p_{uv}&=\left(1+e^{\frac{1}{2T}(d-R)}\right)^{-1}=\Theta\left(e^{-\frac{1}{2T_H}(d-R)}\right)=\Theta\left(\left(\phi_u^{-1}e^{(R-r_u-r_v)/2}\right)^{1/T_H}\right)\\
&=\Theta\left(\left(\frac{1}{\|\x{u}-\x{v}\|}\cdot\frac{\w{u}\w{v}}{\W}\right)^{1/T_H}\right)=\Theta\left(\left(\frac{1}{\|\x{u}-\x{v}\|}\cdot\frac{\w{u}\w{v}}{\W}\right)^{\alpha}\right).
\end{align*}
This finishes the case $\alpha < \infty$ and thus the proof.
\end{proof}

\section{Conclusion}\label{sec:conclusion}
To cope with the technical shortcomings of hyperbolic random graphs, we introduced a new model of scale-free random graphs with underlying geometry -- geometric inhomogeneous random graphs -- and theoretically analyzed their fundamental structural and algorithmic properties. Scale-freeness and basic connectivity properties of our model follow from more general considerations~\cite{bringmann2015generalGIRG}. We established that (1) hyperbolic random graphs are a special case of GIRGs, (2) GIRGs have a constant clustering coefficient, and (3) GIRGs have small separators and are very well compressible. As our main result, (4) we presented an expected-linear-time sampling algorithm. This improves the best-known sampling algorithm for hyperbolic random graphs by a factor $O(\sqrt{n})$.

In this paper we laid the foundations for further experimental and theoretical studies on GIRGs. In particular, we hope that the model can be used for the analysis of processes such as epidemic spreading. We leave this to future work.


\begin{paragraph}{Acknowledgements.}
We thank Hafsteinn Einarsson, Tobias Friedrich, and Anton Krohmer for helpful discussions.
\end{paragraph}

\begin{footnotesize}
\bibliographystyle{plain}
\bibliography{../girg}
\end{footnotesize}

\end{document}